\newcolumntype{C}{>{$}c<{$}}  %
\newcolumntype{L}{>{$}l<{$}}  %
\pgfplotsset{compat=1.3}
\definecolor{ao(english)}{rgb}{0.0, 0.5, 0.0}
\newcommand{\lang}{\textsf{Prisma}\xspace}
\newcommand{\minilang}{\textsf{MiniPrisma}\xspace}
\newcommand{\closeoverset}[3][0ex]{
  \ensuremath{\mathrel{\mathop{#3}\limits^{
    \vbox to#1{\kern-2\ex@
    \hbox{$\scriptstyle#2$}\vss}}}}}
\newcommand{\raiseoperator}[2]{
  \ensuremath{\mathrel{\raisebox{#1}{#2}}}}
\DeclareRobustCommand\longrightarrow{
  \relbar\joinrel\relbar\joinrel\relbar\joinrel%
  \relbar\joinrel\relbar\joinrel\relbar\joinrel%
  \rightarrow}
\newtheorem{globaltheorem}{Theorem}
\newtheorem{globallemma}{Lemma}
\renewcommand{\Finv}{a}
\newcommand{\compilestep}[2][]{\ensuremath{\operatorname{\mathit{#2}}\def\optarg{#1}\ifx\optarg\empty\else(#1)\fi}}
\newcommand{\assoc}[1][]{\compilestep[#1]{assoc}}
\newcommand{\mnfe}[1][]{\compilestep[#1]{mnf}}
\newcommand{\mnfp}[1][]{\compilestep[#1]{mnf^\prime}}
\newcommand{\dtle}[1][]{\compilestep[#1]{cps}}
\newcommand{\dtlp}[1][]{\compilestep[#1]{cps^\prime}}
\newcommand{\defun}[1][]{\compilestep[#1]{defun}}
\newcommand{\defunOuter}[1][]{\compilestep[#1]{defun^\prime}}
\newcommand{\guarde}[1][]{\compilestep[#1]{guard}}
\newcommand{\guardp}[1][]{\compilestep[#1]{guard^\prime}}
\newcommand{\coclfn}{\compilestep{coclfn}}
\newcommand{\fv}[1][]{\compilestep{fv}}
\newcommand{\msgevent}{\ensuremath{\mathsf{msg}}}
\newcommand{\wrevent}{\ensuremath{\mathsf{wr}}}
\renewcommand{\downarrow}{\ensuremath{\mathsf{awaitCl}}}
\newcommand{\try}{\ensuremath{\operatorname{\mathsf{try}}}}
\newcommand{\whovar}{\ensuremath{\operatorname{\mathsf{who}}}}
\newcommand{\statevar}{\ensuremath{\operatorname{\mathsf{state}}}}
\newcommand{\clfnvar}{\ensuremath{\operatorname{\mathsf{clfn}}}}
\newcommand{\cofnvar}{\ensuremath{\operatorname{\mathsf{cofn}}}}
\newcommand{\sendervar}{\ensuremath{\operatorname{\mathsf{sender}}}}
\newcommand{\freshvar}{\ensuremath{\operatorname{\mathsf{fresh}}}}
\newcommand{\trampoline}{\ensuremath{\operatorname{\mathsf{trmp}}}}
\newcommand{\clVal}{\ensuremath{\mathsf{@cl\,this.}}}
\newcommand{\coVal}{\ensuremath{\mathsf{@co\,this.}}}
\newcommand{\this}{\ensuremath{\mathsf{this}}}
\newcommand{\letexp}{}
\newcommand{\ifletexp}{\ensuremath{\operatorname{\mathsf{if\,let}}\,}}
\newcommand{\thenexp}{\ensuremath{\operatorname{\,\mathsf{then}}\,}}
\newcommand{\elseexp}{\ensuremath{\operatorname{\,\mathsf{else}}\,}}
\newcommand{\assert}{\ensuremath{\operatorname{\mathsf{assert}}}}
\newcommand{\bind}{\ensuremath{\,\operatorname{\mathsf{\gg\hspace{-2pt}=}}\,}}
\newcommand{\false}{\ensuremath{\operatorname{\mathsf{false}}}}
\newcommand{\true}{\ensuremath{\operatorname{\mathsf{true}}}}
\newcommand{\seq}{\ensuremath{;\;}}
\newcommand\iden{{i}}
\newcommand\decl{{d}}
\newcommand\trace{{p}}
\newcommand\syniden{{j}}
\newcommand\syndecl{{b}}
\newcommand\syntrace{{q}}
\newcommand\mainexp{{m}}
\newcommand\subexp{{e}}
\lstdefinelanguage{Scala}{
	morekeywords={
		abstract,case,catch,class,def,
		do,else,extends,false,final,finally,
		for,if,implicit,import,lazy,match,
		new,null,object,override,package,
		private,protected,return,sealed,
		super,this,throw,trait,true,try,
		type,val,var,while,with,yield,contract,awaitCl,balance, transfer, sender, value},
	morekeywords=[2]{
		@contract,@dApp,@co,@cl,@cross,@prisma},
	sensitive=true,
	morecomment=[l]{//},
	morecomment=[n]{/*}{*/},
	morecomment=[s][identifierstyle]{`}{`},
	morestring=[b]",
	morestring=[b]',
	morestring=[b]""",
	morestring=*[s]{s"""}{"""},
	morestring=[s][]{\$\{}{\}}
}
\bfseries\color{violet},
\bfseries\color{olive},
\lstdefinelanguage{Solidity}{
	morekeywords=[1]{anonymous, assembly, assert, balance, break, call, callcode, case, catch, class, constant, continue, constructor, contract, debugger, default, delegatecall, delete, do, else, emit, event, experimental, export, external, false, finally, for, function, gas, if, implements, import, in, indexed, instanceof, interface, internal, is, length, library, log0, log1, log2, log3, log4, memory, modifier, new, payable, pragma, private, protected, public, pure, push, require, return, returns, revert, sender, selfdestruct, send, solidity, storage, struct, suicide, super, switch, then, this, throw, transfer, true, try, typeof, using, value, view, while, with, addmod, ecrecover, keccak256, mulmod, ripemd160, sha256, sha3},
	morekeywords=[2]{address, bool, byte, bytes, bytes1, bytes2, bytes3, bytes4, bytes5, bytes6, bytes7, bytes8, bytes9, bytes10, bytes11, bytes12, bytes13, bytes14, bytes15, bytes16, bytes17, bytes18, bytes19, bytes20, bytes21, bytes22, bytes23, bytes24, bytes25, bytes26, bytes27, bytes28, bytes29, bytes30, bytes31, bytes32, enum, int, int8, int16, int24, int32, int40, int48, int56, int64, int72, int80, int88, int96, int104, int112, int120, int128, int136, int144, int152, int160, int168, int176, int184, int192, int200, int208, int216, int224, int232, int240, int248, int256, mapping, string, uint, uint8, uint16, uint24, uint32, uint40, uint48, uint56, uint64, uint72, uint80, uint88, uint96, uint104, uint112, uint120, uint128, uint136, uint144, uint152, uint160, uint168, uint176, uint184, uint192, uint200, uint208, uint216, uint224, uint232, uint240, uint248, uint256, var, void, ether, finney, szabo, wei, days, hours, minutes, seconds, weeks, years},	%
	morekeywords=[1]{asset, state, transaction}, %
	sensitive=true,
	morecomment=[l]{//},
	morecomment=[n]{/*}{*/},
	morecomment=[s][identifierstyle]{`}{`},
	morestring=[b]",
	morestring=[b]',
	morestring=[b]"""
}
\lstdefinelanguage{nomos}{
	morekeywords=[1]{proc,contract,type,transaction,if,then,else},
	morecomment=[n]{/*}{*/},
	morecomment=[l]{::},
}
\lstdefinelanguage{Haskell}{
	morekeywords=[1]{type,do,return,where,if,then,else},
	morecomment=[l]{::},
}
\lstdefinelanguage{JavaScript}{
	morekeywords=[1]{ break, case, catch, do, else, false, function, if, in, new, null, return, switch, true, typeof, var, while}
	morekeywords=[2]{class, export, implements, import, this, throw},
	sensitive=true,
	morecomment=[l]{//},
	morecomment=[n]{/*}{*/},
	morestring=[b]",
	morestring=[b]',
	morestring=[b]""",
	morestring=*[b]`,
	morestring=[s][]{\$\{}{\}}
}
\definecolor{dkgreen}{rgb}{0,0.6,0}
\definecolor{gray}{rgb}{0.5,0.5,0.5}
\definecolor{mauve}{rgb}{0,0.5,0.82}
\definecolor{light-gray}{gray}{0.25}
\lstdefinelanguage{theory}{
  numbers=none,
  basicstyle=\footnotesize\ttfamily,
  stringstyle=\color{mauve},
  moredelim=**[is][\color{black}]{\$\{}{\}},
  morestring=**[d]{'},
  morestring=[d]{\\'},
  morestring=**[d]{"},
}
\newcommand\figurehead{}
\let\c@figure\c@table
\let\ftype@figure\ftype@table
\def\arcr{\@arraycr}
\renewcommand{\subparagraph}[1]{\paragraph{#1}}
\newcommand\varTTTChannelManualDeployAvg{1553.26}
\newcommand\varTTTChannelCompiledDeployAvg{1639.62}
\newcommand\varTTTChannelManualExecutionAvg{147.10}
\newcommand\varTTTChannelCompiledExecutionAvg{150.66}
\newcommand\varTTTChannelDiffExecRelAvg{3.50}
\newcommand\varTTTChannelDiffExecRelBoxLW{-14.54}
\newcommand\varTTTChannelDiffExecRelBoxUW{26.80}
\newcommand\varTTTChannelDiffDeployRelAvg{5.56}
\newcommand\varTTTManualDeployAvg{762.85}
\newcommand\varTTTCompiledDeployAvg{821.28}
\newcommand\varTTTManualExecutionAvg{52.82}
\newcommand\varTTTCompiledExecutionAvg{54.93}
\newcommand\varTTTDiffExecRelAvg{3.82}
\newcommand\varTTTDiffExecRelBoxLW{-16.65}
\newcommand\varTTTDiffExecRelBoxUW{7.94}
\newcommand\varTTTDiffDeployRelAvg{7.66}
\newcommand\varRPSManualDeployAvg{517.68}
\newcommand\varRPSCompiledDeployAvg{518.57}
\newcommand\varRPSManualExecutionAvg{48.49}
\newcommand\varRPSCompiledExecutionAvg{51.28}
\newcommand\varRPSDiffExecRelAvg{8.28}
\newcommand\varRPSDiffExecRelBoxLW{1.11}
\newcommand\varRPSDiffExecRelBoxUW{26.33}
\newcommand\varRPSDiffDeployRelAvg{0.17}
\newcommand\varTokenManualDeployAvg{464.82}
\newcommand\varTokenCompiledDeployAvg{466.34}
\newcommand\varTokenManualExecutionAvg{37.45}
\newcommand\varTokenCompiledExecutionAvg{37.51}
\newcommand\varTokenDiffExecRelAvg{0.17}
\newcommand\varTokenDiffExecRelBoxLW{0.05}
\newcommand\varTokenDiffExecRelBoxUW{0.25}
\newcommand\varTokenDiffDeployRelAvg{0.33}
\newcommand\varMultiSigManualDeployAvg{784.79}
\newcommand\varMultiSigCompiledDeployAvg{801.93}
\newcommand\varMultiSigManualExecutionAvg{80.44}
\newcommand\varMultiSigCompiledExecutionAvg{81.39}
\newcommand\varMultiSigDiffExecRelAvg{1.17}
\newcommand\varMultiSigDiffExecRelBoxLW{0.92}
\newcommand\varMultiSigDiffExecRelBoxUW{1.31}
\newcommand\varMultiSigDiffDeployRelAvg{2.19}
\newcommand\varEscrowManualDeployAvg{381.08}
\newcommand\varEscrowCompiledDeployAvg{389.77}
\newcommand\varEscrowManualExecutionAvg{25.07}
\newcommand\varEscrowCompiledExecutionAvg{26.49}
\newcommand\varEscrowDiffExecRelAvg{5.93}
\newcommand\varEscrowDiffExecRelBoxLW{4.02}
\newcommand\varEscrowDiffExecRelBoxUW{11.21}
\newcommand\varEscrowDiffDeployRelAvg{2.28}
\newcommand\varCrowdfundingManualDeployAvg{303.73}
\newcommand\varCrowdfundingCompiledDeployAvg{316.94}
\newcommand\varCrowdfundingManualExecutionAvg{35.91}
\newcommand\varCrowdfundingCompiledExecutionAvg{39.00}
\newcommand\varCrowdfundingDiffExecRelAvg{9.35}
\newcommand\varCrowdfundingDiffExecRelBoxLW{-23.21}
\newcommand\varCrowdfundingDiffExecRelBoxUW{16.66}
\newcommand\varCrowdfundingDiffDeployRelAvg{4.35}
\newcommand\varHangmanManualDeployAvg{1086.58}
\newcommand\varHangmanCompiledDeployAvg{1089.69}
\newcommand\varHangmanManualExecutionAvg{52.99}
\newcommand\varHangmanCompiledExecutionAvg{53.22}
\newcommand\varHangmanDiffExecRelAvg{0.57}
\newcommand\varHangmanDiffExecRelBoxLW{-6.27}
\newcommand\varHangmanDiffExecRelBoxUW{22.10}
\newcommand\varHangmanDiffDeployRelAvg{0.29}
\newcommand\varNotaryManualDeployAvg{314.38}
\newcommand\varNotaryCompiledDeployAvg{314.38}
\newcommand\varNotaryManualExecutionAvg{120.08}
\newcommand\varNotaryCompiledExecutionAvg{120.08}
\newcommand\varNotaryDiffExecRelAvg{0.00}
\newcommand\varNotaryDiffExecRelBoxLW{-0.01}
\newcommand\varNotaryDiffExecRelBoxUW{0.03}
\newcommand\varNotaryDiffDeployRelAvg{0.00}
\newcommand\varTTTLibraryManualDeployAvg{494.44}
\newcommand\varTTTLibraryCompiledDeployAvg{495.07}
\newcommand\varTTTLibraryDiffDeployRelAvg{0.13}
\newcommand\varTTTViaLibManualDeployAvg{1183.73}
\newcommand\varTTTViaLibCompiledDeployAvg{1258.36}
\newcommand\varTTTViaLibManualExecutionAvg{68.66}
\newcommand\varTTTViaLibCompiledExecutionAvg{70.89}
\newcommand\varTTTViaLibDiffExecRelAvg{3.10}
\newcommand\varTTTViaLibDiffExecRelBoxLW{-12.03}
\newcommand\varTTTViaLibDiffExecRelBoxUW{5.87}
\newcommand\varTTTViaLibDiffDeployRelAvg{6.30}
\newcommand\varChineseCheckersManualDeployAvg{1348.61}
\newcommand\varChineseCheckersCompiledDeployAvg{1388.48}
\newcommand\varChineseCheckersManualExecutionAvg{59.61}
\newcommand\varChineseCheckersCompiledExecutionAvg{64.90}
\newcommand\varChineseCheckersDiffExecRelAvg{9.03}
\newcommand\varChineseCheckersDiffExecRelBoxLW{-16.50}
\newcommand\varChineseCheckersDiffExecRelBoxUW{10.12}
\newcommand\varChineseCheckersDiffDeployRelAvg{2.96}
\newcommand\varTTTChannelLoCPrisma{177}
\newcommand\varTTTChannelLoCSol{56}
\newcommand\varTTTChannelLoCJS{177}
\newcommand\varTTTChannelRC{9}
\newcommand\varTokenLoCPrisma{79}
\newcommand\varTokenLoCSol{48}
\newcommand\varTokenLoCJS{50}
\newcommand\varTokenRC{4}
\newcommand\varCrowdfundingLoCPrisma{59}
\newcommand\varCrowdfundingLoCSol{27}
\newcommand\varCrowdfundingLoCJS{63}
\newcommand\varCrowdfundingRC{11}
\newcommand\varEscrowLoCPrisma{63}
\newcommand\varEscrowLoCSol{33}
\newcommand\varEscrowLoCJS{56}
\newcommand\varEscrowRC{9}
\newcommand\varNotaryLoCPrisma{32}
\newcommand\varNotaryLoCSol{16}
\newcommand\varNotaryLoCJS{36}
\newcommand\varNotaryRC{3}
\newcommand\varHangmanLoCPrisma{119}
\newcommand\varHangmanLoCSol{86}
\newcommand\varHangmanLoCJS{83}
\newcommand\varHangmanRC{15}
\newcommand\varTTTLoCPrisma{61}
\newcommand\varTTTLoCSol{31}
\newcommand\varTTTLoCJS{52}
\newcommand\varTTTRC{5}
\newcommand\varRPSLoCPrisma{79}
\newcommand\varRPSLoCSol{41}
\newcommand\varRPSLoCJS{77}
\newcommand\varRPSRC{12}
\newcommand\varMultiSigLoCPrisma{76}
\newcommand\varMultiSigLoCSol{41}
\newcommand\varMultiSigLoCJS{52}
\newcommand\varMultiSigRC{3}
\newcommand\varTTTLibraryLoCPrisma{16}
\newcommand\varTTTLibraryLoCSol{15}
\newcommand\varTTTLibraryLoCJS{0}
\newcommand\varTTTViaLibLoCPrisma{53}
\newcommand\varTTTViaLibLoCSol{29}
\newcommand\varTTTViaLibLoCJS{52}
\newcommand\varTTTViaLibRC{5}
\newcommand\varChineseCheckersLoCPrisma{167}
\newcommand\varChineseCheckersLoCSol{141}
\newcommand\varChineseCheckersLoCJS{47}
\newcommand\varChineseCheckersRC{4}
\title{\lang{}: A Tierless Language for Enforcing Contract-Client Protocols in Decentralized Applications}
\author{David Richter}
\email{david.richter@tu-darmstadt.de}
\author{David Kretzler}
\email{david.kretzler@tu-darmstadt.de}
\affiliation{\institution{Technical University of Darmstadt}\country{Germany}}
\author{Pascal Weisenburger}
\email{pascal.weisenburger@unisg.ch}
\author{Guido Salvaneschi}
\email{guido.salvaneschi@unisg.ch}
\affiliation{\institution{University of St. Gallen}\country{Switzerland}}
\author{Sebastian Faust}
\email{sebastian.faust@tu-darmstadt.de}
\author{Mira Mezini}
\email{mezini@informatik.tu-darmstadt.de}
\affiliation{\institution{Technical University of Darmstadt}\country{Germany}}
\keywords{Domain Specific Languages, Smart Contracts, Scala}
\begin{document}

\begin{abstract}
Decentralized applications (dApps) consist of smart contracts that run on blockchains
and clients that model collaborating parties.
dApps are used to model financial and legal business functionality.
Today, contracts and clients are written
as separate programs -- in different programming languages -- communicating via 
send and receive operations.
This makes distributed program flow awkward to express and reason about,
increasing the potential for mismatches in the client-contract interface,
which can be exploited by malicious clients, potentially leading to huge financial losses.

In this paper, we present \lang{}, a language for tierless decentralized applications,
where the contract and its clients are defined in one unit and 
pairs of send and receive actions that “belong together” 
are encapsulated into a single direct-style operation, 
which is executed differently by sending and receiving parties.
This enables expressing distributed program flow via standard control flow
and renders mismatching communication impossible.
We prove formally that our compiler preserves program behavior
in presence of an attacker controlling the client code.
We systematically compare \lang{} with mainstream and advanced programming models
for dApps and provide empirical evidence for its expressiveness and performance.

\end{abstract}

\maketitle

\section{Introduction}
\label{sec:introduction}

dApps enable multiple parties sharing state to jointly execute functionality according to a predefined agreement.
This predefined agreement is called a \emph{smart contract} and regulates the interaction between the dApp's clients.
Such client--contract interactions can be logically described by state machines~\cite{SergeyNJ0TH19,MavridouL18,MavridouLSD19,SuvorovU19} specifying which party is allowed to do what and when.

dApps can operate without centralized trusted intermediaries by relying on a blockchain and its consensus protocol.
To this end, a contract is deployed to and executed on the blockchain, which guarantees its correct execution; clients that run outside of the blockchain can interact with the contract via transactions.
A key feature of dApps is that they can directly link application logic with transfer of monetary assets.
This enables a wide range of correctness/security-sensitive
business applications, e.g., for cryptocurrencies, crowdfunding, and public offerings,\!\footnote{
  700\,K to 2.7\,M contracts have been deployed
  per month between July 2020 and June 2021~\cite{google-cloud-big-query-contracts-per-month}
  on the Ethereum blockchain -- the most popular dApps platform~\cite{dapp-q2-2020-report}.
  Some dApps manage tremendous amounts of assets, e.g.,
  Uniswap \cite{uniswap-info} -- the largest Ethereum trading platform
  had a daily trading volume of 0.5\,B\,--\,1.5\,B~USD in June 2021.}
and the same feature makes them an attractive target for attackers.
The attack surface is wide since contracts can be called by any client in the network, including malicious ones that try to force the contract to deviate from the intended behavior~\cite{daoExploit}.
In recent years, there have been several large attacks exploiting flawed program flow control in smart contracts.
Most famously, attackers managed to steal around 50\,M~USD~\cite{daoExploitAnalysis,daoExploit} from a decentralized autonomous organization, the DAO.
In two attacks on the Parity multi-signature wallet, attackers stole  cryptocurrencies worth 30\,M~USD~\cite{parityBug}
and froze 150\,M~USD~\cite{parity2}.

\subparagraph{Programming dApps.} 
In this paper, we explore a programming model
that ensures the correctness and security of 
the client--contract interaction of dApps by-design.
Deviations from the intended
interaction protocols due to implementation errors
and/or malicious attacks are a critical threat 
(besides other issues such as arithmetic or buffer overflows, etc.) as demonstrated e.g., by the DAO attack~\cite{daoExploitAnalysis,daoExploit} mentioned above.

dApps are multi-party applications. For such applications,
there are two options for the programming model: a \emph{local} and a \emph{global model}.
In a \emph{local model}, parties are defined each in a
separate \emph{local} program and their interactions are encoded via 
effectful send and receive instructions. Approaches that follow this model stem from process calculi~\cite{Hoare78}
 and include actor systems~\cite{Agha86} and approaches using
  session types~\cite{sessiontypes09}, and
  linear logic~\cite{Wadler12}.
In contrast, in a \emph{global model}, there is 
a single program shared by both parties and interactions are
encoded via combined send-and-receive operations with no effects visible to the outside world.
 This model is represented by tierless~\cite{Queinnec00,
  Cooper:2006:LWP:1777707.1777724, Choi19, Fowler:2019, Serrano:2006,
  Serrano:2016, Radanne:2016, Weisenburger:2018}
and choreographic~\cite{
  HondaMBCY11, MontesiGZ14, giallorenzo2020choreographies} languages.
The local model requires an explicitly specified protocol
to ensure that every send effect has a corresponding receive operation 
in an interacting -- separately defined -- process.
With a global model, there is no need to 
separately specify such a protocol.
All parties run the same program
in lock-step, where
a single send-and-receive operation
performs a send when executed by one party and
a receive by the other party.
Due to encapsulating communication effects,
there is no non-local information to track --  
the program's control flow defines the correct interaction
and a simple %
type system is sufficient. 

Current approaches to dApp programming --
industrial or research ones -- follow a local model,
Contract and client are implemented in separate programs,
thus safety relies on explicitly specifying the client--contract interaction protocol.
Moreover, the contract and clients are implemented in different languages,
hence, developers have to master two technology stacks.

The dominating approach in industry uses
Solidity~\cite{ethereumPopularity} for the contract and JavaScript for clients.
Solidity relies on developers following best practices recommending to express the protocol 
as runtime assertions integrated into the contract code~\cite{solidity-best-practises}. 
Failing to correctly introduce assertions may give parties illegal access 
to monetary values to the detriment of others~\cite{10.1145/3274694.3274743,10.1145/2976749.2978309}.

The currently dominant encoding
style of the protocol as \emph{finite state machine (FSM)}
uses one contract-side function per FSM transition~\cite{ethereumPopularity, Coblenz17, SchransED18, Schrans19, Coblenz19, Coblenz20}.
While FSMs model a useful class of programs that can be efficiently verified,
writing programs in such style directly has several shortcomings.
First, an FSM corresponds to a control-flow graph of basic blocks, which
is low-level and more suited as an internal compiler representation than as a front-end
language for humans.
Second, with the FSM style, the contract is a passive entity whose execution is driven by clients.
This design puts the burden of enforcing the protocol on the programmers of the contract,
as they have to explicitly consider in what state which messages are valid
and reject all invalid messages from the clients.
Otherwise, malicious clients would be able to force the contract to deviate from its intended behavior
by sending messages that are invalid in the current state.
Third, ensuring protocol compliance statically to guarantee safety requires advanced types,
as the type of the next action depends on the current state.

In research, some smart contract languages~\cite{DasBHP19,
Coblenz17, SchransED18, Schrans19, Coblenz19, Coblenz20, psamathe,
BlackshearCDGMNPQRSZZ19}
have been proposed to overcome the FSM-style shortcomings.
They rely on advanced type systems such as
session types, type states, and linear types.
There, processes are typed by the protocol
(of side-effects such as sending and receiving)
that they follow
and non-compliant processes are rejected by the type-checker.

The global model has not been explored for dApp programming -- which is unfortunate given
the potential to get by with a standard typing discipline and 
to avoid intricacies and potential mismatches of a two-language stack.
Our work fills this gap by proposing
\lang{} -- the first language that
features a \emph{global programming model} for Ethereum dApps.
While we focus on the Ethereum blockchain, we believe our techniques to be applicable to other smart contract platforms as well.

\subparagraph{\lang.}
\lang enables interleaving contract and client logic within the same program 
and adopts a \emph{direct style (DS)} notation for encoding
send-and-receive operations akin to languages with baked-in support for asynchronous interactions,
e.g., via async/await \cite{DBLP:conf/ecoop/BiermanRMMT12,scala-async}.
\lang leaves it to the compiler to map down high-level declarative 
DS to low-level FSM style. It avoids the need for advanced typing discipline 
and allows the contract to actively ask clients for input, promoting
an execution model
where a dominant acting role controls
the execution and diverts control to other parties
when their input is needed, which matches well the 
dApp setting. 

Overall, \lang relieves the developer from the responsibility of
correctly managing distributed, asynchronous program flows and 
the heterogeneous technology stack.
Instead, the burden is on the compiler, which distributes the program flow 
by means of selective continuation-passing-style (CPS) translation
and defunctionalisation, as well as inserts guards against malicious client interactions.

For this, we needed to develop a CPS translation for the code that runs on
the Ethereum Virtual Machine (EVM), since the EVM has no built-in support
for concurrency primitives to suspend execution and resume later --
which could be used, otherwise, to implement asynchronous communication.
Given that CPS translations reify control flow,
without proper guarding,
malicious clients could force the contract to deviate from the intended flow
by passing a spoofed value to the contract.
Thus, it is imperative to prove that our \emph{distributed CPS translation}
ensures control-flow integrity of the contract, which we do on top of a formal definition of the compilation steps.
The formally proven secure \lang compiler eliminates the risk of programmers implementing
unsafe interactions that can potentially be exploited.

\subparagraph{Contributions.}
We make the following contributions:
\begin{enumerate}[noitemsep]
\item
  We introduce \lang{},\!\footnote{
    \lang{} implementation and case studies are publicly available:
    \url{https://github.com/stg-tud/prisma}
  } a global language for tierless dApps
  with direct-style client--contract interactions and explicit access control,
  implemented as an embedded DSL in Scala.
  Crucially, \lang{} automatically enforces the correct program flow
  (Section~\ref{sec:the-language}).

\item
  A core calculus, Mini\lang{}, which formalizes both \lang{} and its compiler, 
  as well as a proof that our compiler guarantees the preservation of control flow
  in presence of an attacker that controls the client code (Section~\ref{sec:formalization}).

\item
  Case studies which show that \lang{} can be used to implement common applications
  without prohibitive performance overhead (Section~\ref{sec:discussion}).

\item
A comparison of \lang{} with a session type and a type state smart contract programming language %
and the mainstream Solidity/JavaScript programming model
(Section~\ref{sec:background}).
\end{enumerate}

\section{\lang{} in a Nutshell}
\label{sec:the-language}

We present \lang{} by the example of a TicTacToe game, demonstrating
that client and contract are written in a single language,
where protocols are expressed by control flow (instead of relying on advanced typing disciplines)
and enforced by the compiler.

\subparagraph{Example.}
TicTacToe is a two-player game over a $3\times3$ board. 
Players take turns in writing their sign into one of the free fields until all fields
are occupied, or one player wins by owning three fields in any row, column, or diagonal.
The main transaction of a TicTacToe dApp is \lstinline!Move(x,y)!
used by a player to occupy field \lstinline!(x,y)!.
A \lstinline!Move(x,y)! is valid if it is the sender's turn and \lstinline!(x,y)! is empty.
Before the game, players deposit their stakes, and after the game, the stakes are paid to the winner.

\begin{figure}
	\begin{minipage}{.5\textwidth}
		\centering
		\begin{tikzpicture}[->,shorten >=2pt,node distance=1.75cm,inner sep=0,semithick,font=\tiny, scale=.5, /tikz/initial text = ]
\node[state, initial]    (A) []                     {T0};
\node[state]             (B) [right of=A]           {T1};
\node[state]             (C) [below of=B]           {T2};
\node[state, accepting]  (D) [left of=C]            {T3};
\path
(A) edge[loop above]  node[above=3pt] {Fund$(c)$}   ()
(A) edge[above, bend left]       node[above=3pt] {Fund$(c)$}   (B)
(B) edge[loop above]  node[above=3pt] {Move$(x,y)$} ()
(B) edge[right, bend left]       node[draw, draw = white, left=-10pt, fill=white] {Move$(x,y)$} (C)
(C) edge[below, bend left]       node[below=3pt] {Payout$()$}  (D)
(A) edge[red, dashed, bend right] node[draw, draw = white, left=-15pt, fill=white]  {Payout$()$}  (D)
(B) edge[red, dashed] node[draw, draw = white, align=center, fill=white]  {Payout$()$}  (D);
\end{tikzpicture}
		\captionof{figure}{TicTacToe control flow. \label{fig:funding-diagram}}
	\end{minipage}%
	\begin{minipage}{.5\textwidth}
		\centering
		\footnotesize
		\captionof{table}{Location annotations. \label{tab:prisma-annotations}}
		\begin{tabular}{rll}
		\toprule
		Annotations & Description \\
		\midrule
		\lstinline!@co!        & on contract \\
		\lstinline!@cl!        & on clients \\
		\lstinline!@co @cl!    & independent copies \\
		& on clients and contract \\
		\lstinline!@co @cross! & on contract, but also \\
		& accessible by client \\
		\lstinline!@cl @cross! & (illegal combination) \\
		\bottomrule
	\end{tabular}
	\end{minipage}%
	\vspace{1em}%
\end{figure}
\begin{figure}
	\begin{minipage}{0.95\textwidth}
		\begin{minipage}[t]{0.45\textwidth}
			\input{paper/graphs/CodePrismaModules}
		\end{minipage}%
		\begin{minipage}[t]{0.55\textwidth}
			\input{paper/graphs/CodeClientExpressions}
		\end{minipage}
		\captionof{figure}{TicTacToe dApp.\hspace{4em} \label{fig:portable-class}}
	\end{minipage}
\end{figure}

Fig.~\ref{fig:funding-diagram} depicts possible control flows
with transitions labeled by client actions that trigger them.
Black arrows depict intended control flows.
The dApp starts in the funding state
where both parties deposit  stakes via \textit{Fund$(c)$}.
Next,
  parties execute \textit{Move$(x,y)$}
  until one party wins or the game ends in a draw.
Finally, any party can invoke a payout of the stakes via \textit{Payout$()$}.\footnote{
  We omit handling timeouts on funding and execution for brevity.
}
Red dashed arrows illustrate the effects of a mismanaged control flow:
a malicious player could trigger a premature payout 
preventing the counterpart to get financial gains.

\subparagraph{Tierless dApps.}
\lang is implemented as a DSL embedded into Scala,
and \lang programs are also valid Scala programs.\footnote{
  In Scala \lstinline!val!/\lstinline!var! definitions are used for mutable/immutable fields and variables,
  \lstinline!def! for methods, \lstinline!class! for classes, and \lstinline!object! for singletons.
  A \lstinline!case class! is a class whose instances are compared by structure and not by reference.}
\lang interleaves contract and client logic within the same program. 
Annotations \lstinline!@co! and \lstinline!@cl! explicitly place 
declarations on the contract and on the client, respectively (cf. Tab.~\ref{tab:prisma-annotations}).
A declaration marked as both \lstinline!@co! and \lstinline!@cl!
has two copies.
For security, code placed in one location cannot access 
definitions from the other --- an attempt to do so
yields a
compile-time error.
Developers can overrule this constraint to enable clients to 
read contract variables or call contract functions
by combining \lstinline!@co! with \lstinline!@cross!.
Combining \lstinline!@cl! 
with \lstinline!@cross! is not allowed
-- information can only flow from client to contract
as part of a client--contract interaction protocol.

There are three kinds of classes.
\emph{Located classes} are placed in one location (annotated with either \lstinline!@co! or \lstinline!@cl!);
they cannot contain located members (annotated with either \lstinline!@co! or \lstinline!@cl!)
and their instances cannot cross the client--contract boundary, e.g., be passed to or returned from \lstinline!@cross! functions.
\emph{Portable classes} are annotated with both \lstinline!@co! and \lstinline!@cl!.
Their instances can be passed to and returned from \lstinline!@cross! functions;
they must not contain mutable fields.
\emph{Split classes} have no location annotation; 
their instances live partly in both locations; 
they cannot be passed to or returned from \lstinline!@cross! functions
and their members must be located.

\lang{} code is grouped into modules.
While client declarations can use and be used from standard (non-\lang{}) Scala code,
contract declarations are not accessible from Scala,
and can only reference contract code from other \lang{} modules
(because contract/client code lives in different VMs).

For illustration, consider the TicTacToe dApp (Fig.~\ref{fig:portable-class}).
The \lstinline!TicTacToeModule!
(Line~\ref{code:prisma-module})
-- modules are called \lstinline!object! in Scala --
contains a portable class \lstinline!UU! (Line~\ref{code:portable-class}) and a split class \lstinline!TicTacToe! (Line~\ref{code:ttt-snippet-contract-expression}).
Variables \lstinline!moves!, \lstinline!winner!,  \lstinline!board!
(Lines~\ref{code:snippet-moves-variable}, \ref{code:snippet-winner-variable}, \ref{code:snippet-board-variable})
are placed on the contract and can be read by clients (\lstinline!@co @cross!).
The \lstinline!updateBoard! function (Line~\ref{code:snippet-move-executed-function})
is placed on the client and updates client state (e.g., client's UI).
The \lstinline!move! function (Line~\ref{code:snippet-move-function})
is placed on the contract and changes the game state (\lstinline!move!).
\lstinline!move! is not annotated with \lstinline!@cross!,
because \lstinline!@cross! is intended for functions that do not change contract state
and can be executed out-of-order
without tampering with the client--contract interaction protocol.
While Scala only has signed integers and signed longs literals,
these are uncommon in Ethereum.
Therefore, \lang provides portable unsigned and signed integers
for power-of-two bitsizes between $2^3$ to $2^8$,
with common arithmetic operations,
e.g., \lstinline!"0".u8! is an unsigned 8-bit integer of value 0
(Line \ref{code:snippet-moves-variable}).

\subparagraph{Encoding client--contract protocols.}
In \lang, a client-contract protocol is encoded as a split class 
containing dedicated \lstinline!awaitCl! expressions
for actively requesting and awaiting messages from specific clients
and standard control-flow constructs.  
Hence, creating a new contract instance corresponds to creating a new instance of a protocol;
once created, the contract instance actively triggers interactions with clients. 
The \lstinline!awaitCl! expressions have the following syntax: 

\lstinline!def awaitCl[T](who: Addr => Bool)(body: => (Ether, T)): T!

They take two arguments.
The first (\lstinline!who!) is a predicate used by clients to decide whether it is their turn
and by the contract to decide whether to accept a message from a client.
This is unlike Solidity, 
where a function may be called by any party by default.
By forcing developers to explicitly define access control, 
\lang{} reduces the risk of human failure.
The second argument (\lstinline!body!) is the expression to be executed by the client. 
The client returns a pair of values to the contract:
the amount of Ether and the message.
The former can be accessed by the contract via the built-in expression \lstinline!value!,
the latter is returned by \lstinline!awaitCl!.
Besides receiving funds via \lstinline!awaitCl!,
a contract can also check its current balance (\lstinline!balance()!),
and transfer funds to an account (\lstinline!account.transfer(amount)!)).

\lang{}'s programming model is specifically designed to accommodate blockchain use cases.
In contrast to other tierless models like client-server,
we emphasise inversion of control such that the code is written as if the 
contract was the active driver of the protocol,
while clients are passive and only react to requests by the contract.
This enables to enforce the protocol on the contract side.
For this reason, for example, we support the \lstinline!awaitCl! construct on the active contract side
whereas there is no corresponding construct on the passive client side.

For illustration,
consider the definition of \lstinline!init!
on the right-hand side of Fig.~\ref{fig:portable-class},
Line~\ref{code:snippet-await-init}. 
It defines the protocol of \lstinline!TicTacToe! as follows.
From the beginning of \lstinline!init!, the flow reaches 
\lstinline!awaitCl! in Line~\ref{code:snippet-await-funding}
where the contract waits for 
clients to provide funding (by calling \lstinline!fund!).
Next, the contract continues until \lstinline!awaitCl!
in Line~\ref{code:snippet-await-move}
and clients execute \lstinline!move!
(Line~\ref{code:snippet-continue-executing}) 
until the game ends with a winner (\lstinline|winner != 0|)
or a draw (\lstinline!moves >= 9!).
At this point -- \lstinline!awaitCl!
in Line~\ref{code:snippet-await-payout} -- any party can request a payout and
the contract executes to the end.
The example illustrates how direct-style \lstinline!awaitCl! expressions and the
tierless model enable encoding multiparty protocols as standard control flow,
with protocol phases corresponding to basic blocks 
between \lstinline!awaitCl! expressions.

\begin{figure}\figurehead
  \begin{subfigure}{\textwidth}\centering
\begin{tikzcd}
      & Prisma \arrow[d]                   &          \\
      & {[Compiler]} \arrow[rd] \arrow[ld] &          \\
Scala &                                    & Solidity
\end{tikzcd}
  \subcaption{Compilation scheme}\label{fig:compilation-scheme}
  \end{subfigure}
  \input{paper/listings/solidity_ttt.tex}
  \caption{TicTacToe dApp after compilation, simplified}
  \vspace{-1.5em}
\end{figure}

\subparagraph{Compiling \lang to Solidity.}
Abstractly, \lang's compiler takes a \lang{} dApp program and splits it into two separate programs: A Scala client program and a Solidity contract program (Fig.~\ref{fig:compilation-scheme}).
In more detail, the compiler
(1)~places all definitions according to their annotations and
(2)~splits contract methods that contain \lstinline!awaitCl! expressions into a method that contains the code up to the \lstinline!awaitCl!
and a method that contains the continuation after the \lstinline!awaitCl!
(taking the result of the \lstinline!awaitCl! as an argument).
Once deployed, a contract is public and can be messaged by arbitrary clients --
not exclusively the ones generated by \lang{} -- hence, we cannot assume that clients will actually execute the body
passed to them by an \lstinline!awaitCl! expression.
To cope with malicious clients trying to tamper with the
control flow of the contract, the compiler hardens contract code
by generating code to enforce \emph{control flow integrity}:
storing the current phase before giving control to the client and
rejecting methods invoked by wrong clients or in the wrong phase.

For illustration, the code generated from Fig.~\ref{fig:portable-class}
is schematically shown in Fig.~\ref{fig:scala-client} and \ref{fig:solidity}.
The methods \lstinline!updateBoard!, \lstinline!fund!, \lstinline!move!, and \lstinline!payout!
are annotated \lstinline!@cl! and thus compiled into the client program (Fig.~\ref{fig:scala-client}).
The variables \lstinline!moves!, \lstinline!winner! and \lstinline!board!,
and the method \lstinline!performMove! are annotated \lstinline!@co!
and thus compiled into the contract program (Fig.~\ref{fig:solidity}).
Further, three new methods are generated on both the client and the contract
-- one for each \lstinline!awaitCl! expression in \lstinline!init! --
corresponding to phases in the logical protocol (Fig.~\ref{fig:funding-diagram}).
The \lstinline!Funding! method of the client (Line~\ref{code:client-funding-phase}) is generated from the body of the first \lstinline!awaitCl!.
Similarly,
the \lstinline!Move! method (Line~\ref{code:client-move-phase}) is generated from the second \lstinline!awaitCl!
and the \lstinline!Payout! method (Line~\ref{code:client-payout-phase}) from the third \lstinline!awaitCl!.
In the example,
  the generated methods are given meaningful names by capitalizing
  the single method called in the body of the \lstinline!awaitCl! expressions form which they were generated.
  In the actual implementation, generated methods are simply enumerated.
The code up to the first \lstinline!awaitCl! (Line~\ref{code:snippet-await-funding})
is placed in the constructor of the generated contract,
which ends by setting the active phase to \lstinline!Funding!.
The code between the first and the second \lstinline!awaitCl! either loops back to 
the first \lstinline!awaitCl! or continues to the second one (Line~\ref{code:snippet-await-move}).
The code is placed in the \lstinline!Fund! method that requires the phase to be \lstinline!Funding!,
and may change it to \lstinline!Exec! if the loop condition
fails.
Similarly, the method \lstinline!Move! is generated to contain the loop
between the second and the third \lstinline!awaitCl! (Line~\ref{code:snippet-await-payout});
and the method \lstinline!Payout! contains the code from the third \lstinline!awaitCl!
to the end of \lstinline!init!.
Only the second \lstinline!awaitCl! contains a (non-trivial) access control predicate,
which results in an additional assertion in the body of \lstinline!Move!
(Line \ref{code:solidity-access-control}).
Observe that the return types of the generated client methods
are the argument types of the corresponding contract methods.

\subparagraph{Compilation Techniques.}

While CPS is a key step in our translation pipeline,
the example shows the final defunctionalised, trampolined code.
The final output does not contain explicit continuations (i.e., a function that takes another function as an argument and calls that as its continuation).
Instead, after defunctionalizing and trampolinizing the CPS translation, only one top-level function (\lstinline!Fund!, \lstinline!Move!, \lstinline!Payout!) is callable at each \lstinline!phase!, which is ensured by the \lstinline!require! statement at the beginning of each function, and each function sets the next \lstinline!phase! at the end.
These functions play the role of the continuations.

Let us look at the correspondence between the original \lang{} code (Fig.~\ref{fig:portable-class})
and the generated Solidity code (Fig.~\ref{fig:solidity}) from a higher-level perspective:
To verify that the \lang{} code matches the generated Solidity code,
we proceed as follows.

First, we verify that the control flow of
Fig.~\ref{fig:portable-class} is accurately described by the automaton diagram in Fig.~\ref{fig:funding-diagram}.
In particular, we observe that there are two loops in the automaton and there are
also two \lstinline!while! loops in the \lang{} code. Further, there are three \lstinline!awaitCl!
expressions in the code, and there are three states in the automaton (plus a final state).

Second, we verify that the automaton in Fig.~\ref{fig:portable-class} corresponds to the program flow of the
Solidity code in Fig.~\ref{fig:solidity}.
In particular, we observe that there are four
states in the automaton and there are four states in the Solidity code.
Three of those have an associated function (\lstinline!T0! is \lstinline!Fund!, \lstinline!T1! is \lstinline!Move!, \lstinline!T2! is \lstinline!Payout!),
which are the only public functions that can be invoked in that state,
thanks to the \lstinline!require! statements.
In the final state \lstinline!T3!, no public function can be invoked.
Furthermore, we can see that the automaton has two loops.
It is possible to go from \lstinline!T0! either to \lstinline!T1! or stay in \lstinline!T0!.
This is represented in the Solidity code,
by checking for the loop condition at the end of the function associated to \lstinline!T0!,
and then either changing the phase to \lstinline!T1!, or doing nothing, which means staying in \lstinline!T0!.
Similarly, the loop in state \lstinline!T1! is encoded with an \lstinline!if! at the end of the function
to conditionally move to the next phase.

These two steps should illustrate how the control flow of the \lang{} program
-- which is abstractly visualized by the automaton -- is implemented and enforced by the generated Solidity program.

\section{Compilation and its Correctness}
\label{sec:formalization}
We informally introduce \lang's compilation process and our notion of correctness before formally specifying and proving the compiler correct. 

\subsection{High-level Overview of \lang's Secure Compilation}

To implement the contract-client interaction,
we CPS-translate \lang code
and execute continuations alternately between contract and client.
A standard CPS translation is, however, not sufficient
because the control flow is distributed and
we need to send function calls (i.e., the current continuation)
over the network -- or, more specifically, send
the name and the arguments of the next function to execute.
For this, we defunctionalise~\cite{Reynolds1972DefinitionalIF}
the code to turn functions calls (which represent continuations) into data.
This compilation process performs an \emph{inversion of control} between the contract and the client.
With \lang's contract--client communication in direct style, 
 we can write dApps as if \emph{the contract} was in control of the execution; 
\lang allows the contract to request messages from clients and to process only responses that 
it requested.

After the compilation process, clients are in control of the execution
because, in 
blockchains, contracts purely
respond to messages from clients.
As a result, dApps may become the target of malicious attacks.
In our security model, we trust the contract to execute the code that we generate for it,
whereas we consider the client code untrusted, i.e., the client side
can run arbitrary code.
Crucially, it could pass unintended continuations to the contract
to force the execution to continue in an arbitrary state.
For example, in the source code of the TicTacToe game (Section~\ref{sec:the-language}),
one needs to go through the game loop after funding and before payout.
Yet, the compiled code is separated and distributed into small chunks.
Parties execute a chunk and then wait for other parties
to decide on a move that influences 
how to proceed with the execution.
For this reason, the client could send a message at any time telling the contract to go into the payout phase.
We need to guard the contract against such attempts to make it deviate from the protocol.
Conceptually,
if the client was able to force the execution to continue in an arbitrary state,
the control flow in the Prisma source would be violated.
Execution would 'jump' from one client expression to another one skipping the code in between,
which is not possible with the semantics of the source language.

\lang's compiler avoids such attacks and 
preserves control flow by inserting guards on the contract side. Guards are in places 
where the basic blocks of the program have been separated 
and distributed onto different hosts by CPS translation -- to reject any 
improper continuations from clients. 
Guarding ensures
the control flow integrity~\cite{AbadiBEL09} of the contract
in the presence of malicious clients by
excluding any behavior of the compilation target 
that cannot be observed from the source.
Informally, this is our notion of \emph{secure compilation},
which we rigorously define and prove for \lang's compiler in this section.
The compilation process is key in hiding the complexity of enforcing
distributed control flow from the developer
-- hence, a formal proof of it correctness is critical.

\begin{figure}
  \figurehead
  \centering
\begin{tikzcd}
\mathsf{source}                                                                                                                                       &  & \mathsf{target}                                                                                              \\
\mathsf{Term_s} \arrow[rr, "compile"'] \arrow[d, "eval_s" description, dotted, bend right=49] \arrow[dd, "eval_{s,b}" description, dotted, bend right=60] &  & \mathsf{Term_t} \arrow[d, "eval_t" description, bend left=49] \arrow[dd, "eval_{t,b}" description, bend left=60] \\
\mathsf{Trace_s}                                                                                                                                      &  & \mathsf{Trace_t} \arrow[ll, "correctness", leftrightarrow]                                                                   \\
\mathsf{Trace^*_s}                                                                                                                                    &  & \mathsf{Trace^*_t} \arrow[ll, "security", leftrightarrow]                                                                   
\end{tikzcd}
  \caption{Secure Compilation}
  \label{fig:secure-compilation}
\end{figure}

To formalize the compiler, we specify a source and a target language.
Fig.~\ref{fig:secure-compilation} shows a schema of our compilation and the proof.
The compiler (Fig.~\ref{fig:secure-compilation}, top) is a function that maps terms in the source language ($Term_s$)
into terms in the target language ($Term_t$).
A correct compiler preserves some properties of the code -- depending on the notion of correctness.
For example, typeability-preserving and semantics-preserving compilers
have been extensively studied~\cite{PatrignaniAC19}.
Because types are not the focus of this paper, we omitted them from the figure.
In the middle part of Fig.~\ref{fig:secure-compilation}, we show the evaluation of source and target 
to traces ($eval_s$ and $eval_t$, respectively) -- and traditional compiler correctness as
the equivalence between traces generated from the sources ($Trace_s$) and from the target ($Trace_t$).
But compiler correctness\footnote{
  Type and semantics preservation is not the focus of this paper;
  we presume them for our compiler without a formal proof.}
 in this traditional sense 
is not sufficient in the presence of malicious
attackers that can tamper with parts of the code.
Instead, we need to prove that \lang is a {\it secure abstraction},
i.e., if security problems can arise on the target, they 
must be visible in the \lang source code, too, so that
developers do not need to look at target code to reason about 
potentially misbehaving clients.
To this end, we define 
a hypothetical \emph{attacker model on the source code} as 
the ability to only replace the body of a \lang client expression 
and show that, with the contract part hardened with guards, the target attacker
does not gain additional power over the 
hypothetical source attacker. Specifically, we define 
malicious semantics $eval_{t,b}$ and $eval_{s,b}$ for the target
and the source language, respectively, and show 
that $eval_{t,b} (compile~e) = compile (eval_{s,b}~e)$ (security property in Fig.~\ref{fig:secure-compilation}).

In the reminder of this section:

\begin{itemize}
\item We present the core calculus (Section~\ref{subsec:syntax})
$\minilang_*$  -- 
a hybrid language that includes elements of both the source ($\minilang_s$)
and the compilation target ($\minilang_t$), while abstracting over details of both Scala and Solidity. 
We define a hybrid language because
the source and the target share many constructs -- the hybrid language
allows us to focus on how the differences are compiled.

\item We model the compiler (Section~\ref{subsec:compilation})
as a sequence of steps that transform $\minilang_*$ programs 
via several intermediate representations.

\item We define $\minilang_*$ semantics as a reduction relation over 
configurations consisting of traces of evaluation events and expressions being evaluated (Section~\ref{subsec:semantics}).
We distinguish between a good semantics, which evaluates the program in the usual way, 
and a bad semantics, which models attackers by ignoring client instructions and producing 
arbitrary values that are sent to the contract.

\item We prove secure compilation
by showing that the observable behavior of the programs 
before and after compilation is equivalent (Section~\ref{subsec:secure-compilation}).
We capture the observable program 
behavior by the trace of events generated during program evaluation (as guided by the semantic definition) 
and show trace equivalence of programs before and after compilation.
\end{itemize}

\begin{figure}
	\figurehead
	\centering
	
$id \in I\hspace{-.5pt}D$
\hspace{20pt} $\iden \in I$
\hspace{20pt} $\syniden \in \{ \whovar, \statevar, \clfnvar, \cofnvar \}$
\\[1em]

\begin{tabular}{lrll}
(definition)           &$\decl$    &::=  $\coVal i$ = $v$; $\decl$    | $\clVal i$ = $v$; $d$ | $()$\\
(synthetic definition) &$\syndecl$ &::=  $\coVal j$ = $v$; $\syndecl$ | $\clVal j$ = $v$; $b$ | $()$\\
(program)     &$P$        &::=  $\decl$; $\syndecl$; $\mainexp$
\\[0.7em]
(constant)    &$c$ &::= 0 | 1 | 2 | ... | $\true$ | $\false$
                    ~|~ () | \&\& | + | == | < | $\try$ \\
              &    &~~|~ >>= | $\trampoline$ | $\mathsf{Done}$ | $\mathsf{More}$ \\
(value)       &$v$ &::= $c$ | $v$ :: $v$ | $x{\to}\subexp$\\
(pattern)     &$x$ &::= $c$ | $x$ :: $x$ | $id$\\
(expression)  &$\subexp$ &::= $c$ | $\subexp$ :: $\subexp$ | $x{\to}\subexp$ | $id$ ~|~ \letexp $x$=$\subexp$; $\subexp$ | $\subexp~\subexp$ \\
              &    &~~|~ $\this$.$\iden$ | $\this$.$\iden$ := $\subexp$ ~|~ $\this$.$\syniden$ | $\this$.$\syniden$ := $\subexp$ \\
(main expression)  &$\mainexp$ &::= $c$ | $\mainexp$ :: $\mainexp$ | $x{\to}\subexp$ | $id$
                     ~|~ \letexp $x$=$\mainexp$; $\mainexp$ | $\mainexp~\mainexp$ \\
              &    &~~|~ $\this$.$\iden$ | $\this$.$\iden$ := $\mainexp$ ~|~ $\this$.$\syniden$ | $\this$.$\syniden$ := $\mainexp$ \\
              &    &~~|~ ${\downarrow}_s(\subexp, (){\to}\subexp)$ ~|~ ${\downarrow}_t(c, (){\to}\subexp)$
\end{tabular}

	\caption{$\minilang_*$ syntax.}
	\label{fig:syntax}
\end{figure}

\subsection{Syntax}
\label{subsec:syntax}

The syntax of $\minilang_*$ (Fig.~\ref{fig:syntax}),
has three kinds of identifiers $id$, $\iden$, $\syniden$,
from unspecified sets of distinct names.
Pure identifiers $id$ are for function arguments and let bindings;
mutable variables $\iden$ are for heap variable assignment and access.
In the target program, mutable variables $\syniden$ ($\whovar, \statevar, \clfnvar, \cofnvar$)
generated by the compiler can also appear.
We call compiler-generated identifiers \emph{synthetic}.
Normal identifiers are separated from synthetic ones to distinguish
compiler generated and developer code.
Definitions $\decl$ and definitions for synthetic identifiers $\syndecl$
are semicolon-separated lists of declarations that
assign values to variables and annotate either the contract or the client location.
Each program $P$ consists of definitions $\decl$ and synthetic definitions $\syndecl$
followed by the main contract expression $\mainexp$.
Program $P$ corresponds to a single \lang{} split class,
$\decl$ and $\syndecl$ to methods and generated methods,
and $\mainexp$ to a constructor containing the initialisation of its class members
(such as the body of \lstinline!init!, Fig.~\ref{fig:portable-class}).

Constants $c$ are unsigned 256\,bit integer literals and built-in operators.
$\minilang_*$ supports tuples introduced by nesting pairs ($::$)
and eliminated by pattern matching.
Tuples allow multiple values to cross tiers in a single message.
Values $v$ are constants, value pairs, and lambdas.
Patterns $x$ are constants, pattern pairs, and variables.
Expressions $\subexp$ are constants, expression pairs, lambdas, variables,
  variable accesses/assignments, bindings and function applications.

Main expressions $\mainexp$ may further contain remote client expressions, 
embedding client code into contract code and waiting for its result.
The source client expression ${\downarrow}_s(e,~()\to e)$
can be answered by any client whose address fulfills the predicate specified as first argument.
$\downarrow_s$ corresponds to direct-style remote access via \lstinline!awaitCl! in \lang{}.
We use the syntax form ${\downarrow}_t(c,~()\to e)$
to model the execution of code $e$ on the specified client $c$.
$\downarrow_t$ has no correspondence in the source syntax.
Our compilation first splits the predicate from the source client expressions
into a separate access control guard.
Then, it eliminates client expressions, 
turning the contract into a passive entity that 
stops and waits for client input.

We now map the hybrid language $\minilang_*$ to 
the source and target languages, $\minilang_s$ and $\minilang_t$.
$\minilang_s$ has all expressions of $\minilang_*$,
except those that contain $\bind$ (bind), $\trampoline$ (trampoline), $\mathsf{Done}$, $\mathsf{More}$, $\downarrow_t$,
or synthetic identifiers $j$.
$\minilang_t$ has all expressions of $\minilang_*$
except those that contain $\downarrow_s, \downarrow_t, \bind$.

$\bind$ and $\downarrow_t$ may not appear neither in source nor target programs;
the former is used only as an intermediate construction for the compiler,
the latter only during evaluation to track the current location.

\begin{figure}
  \figurehead
  \centering
  \begin{tabular}{rlll}
$\mainexp_0 ~c~ \mainexp_1$
  &=& $c(\mainexp_0, \mainexp_1)$\\
($\mainexp_0$, ..., $\mainexp_n$)
  &=& $\mainexp_0 :: ... :: \mainexp_n :: ()$ \\
$\mainexp_0 \seq \mainexp_1$
  &=& $\letexp () = \mainexp_0\seq \mainexp_1$ \\
$\assert(\mainexp_0)\seq \mainexp_1$
  &=& $\letexp \true = \mainexp_0\seq \mainexp_1$ \\
$x \leftarrow \subexp_1 \seq \mainexp_2$
  &=& $\letexp x = \downarrow_t(() \rightarrow \subexp_1)\seq \mainexp_2$ \\
$\ifletexp x = \mainexp_1 \thenexp \subexp_2 \elseexp \subexp_3$
  &=& $\try(\mainexp_1, x \rightarrow \subexp_2, () \rightarrow \subexp_3)$ \\
\end{tabular}

  \caption{Syntactic sugar.}
  \label{fig:sugar}
\end{figure}

\subparagraph{Syntactic sugar.}
In Fig.~\ref{fig:sugar}, we define some syntactic sugar to improve readability.
We use infix binary operators and tuple syntax for nested pairs ending in the unit value $()$;
we elide the let expression head for let bindings matching $()$,
$\assert(x)$ is a let binding matching true;
we use monadic syntax for let bindings of effectful expressions; 
 $\ifletexp x = \mainexp \thenexp \subexp \elseexp \subexp$
is the application of the built-in $\try$ function.

\subparagraph{Events and configurations.}
In Fig.~\ref{fig:syntax2}, we define left-to-right
evaluation contexts $E$~\cite{FelleisenH92};
and compilation frames $F$~\cite{Pitts00},
such that every expression decomposes into a frame-redex pair $F~\subexp$ or is an atom $\Finv$.
Events $\trace$ and $\syntrace$ are lists that capture
the observable side-effects of evaluating expressions.
They are either
(a)~state changes
  $\wrevent(c,\iden,v)$ and $\wrevent(c,\syniden,v)$,
  from the initial definitions or variable assignment,
  where $\iden$ and $\syniden$ are the variable being assigned, $c$ the location,
  and $v$ the assigned value, 
  or
(b)~client-to-contract communication $\msgevent(c,v)$,
  where $c$ is the address of the client and $v$ the sent value. 
Configurations $C = \trace;\syntrace;c \mainexp$, represent a particular execution state,
where $\trace$ (and $\syntrace$) are
traces of normal (and synthetic) events produced by the evaluation,
$c$ is the evaluating location, and $\mainexp$ is the expression under evaluation.

\begin{figure}
  \figurehead
  \centering
  \begin{tabular}{llrll}
(frame)            &$F$         &::=& ${\downarrow}_s (\square, () \to e) ~|~ \square~e ~|~ e~\square ~|~ \square::e ~|~ e::\square$ \\
                   &            &~|~& $\letexp x=\square;~e ~|~ \letexp x=e;~\square ~|~ \this.\iden := \square ~|~ \this.\syniden := \square$ \\
(atom)             &$\Finv$     &::=& $\this.\iden ~|~ \this.\syniden ~|~ c ~|~ id ~|~ x \to e$ \\[.7em]
(context)          &$E$         &::=& $\square$ | $E$ :: $\mainexp$ | $v$ :: $E$ | $E$ $\mainexp$ | $v$ $E$
                                    | $\letexp x = E$; $\mainexp$ | $\this$.$\iden$ := $E$ | $\this$.$\syniden$ := $E$ \\[.7em]
(event)            &$\trace$    &::=& $\wrevent$($c,i,v$) $\trace$ | $\msgevent$($c,v$) $\trace$  | $()$ \\
(synthetic event)  &$\syntrace$ &::=& $\wrevent$($c,j,v$) $\syntrace$ | $()$ \\
(configuration)   &$C$          &::=& $\trace$; $\syntrace$; $c$; $\mainexp$\\
\end{tabular}

  \caption{Frames, Events and configurations.}
  \label{fig:syntax2}
\end{figure}

\subparagraph{Initialization.}
Initialization in Fig.~\ref{fig:initialisation} generates the initial program configuration,
which
models the decentralized application with a single contract and multiple clients.
We model a fixed set of clients $A$ interacting with a contract.
The initialization of a program 
$\decl;\syndecl;\mainexp$
to a configuration 
$\trace;\syntrace;0;\mainexp$
  leaves the expression $\mainexp$ untouched
  and generates a
  list of events
  -- one write event for each normal and synthetic 
  definition.
Location 0 represents the contract.

\begin{figure}
	\figurehead
	\centering
  $\begin{array}{lll}
  init_A(\decl;\syndecl;\mainexp)
    &=& init_A(\decl;\syndecl);~0;~\mainexp \\
  init_A(\decl;\syndecl)
    &=& ( wr(0,\makebox[0pt][l]{$\iden$}\phantom{\syniden},v)    ~|~ \forall ~ (\coVal \iden=v)    \in \decl ) \\
    & & ( wr(0,\makebox[0pt][l]{$\syniden$}\phantom{\syniden},v) ~|~ \forall ~ (\coVal \syniden=v) \in \syndecl ) \\
    & & ( wr(c,\makebox[0pt][l]{$\iden$}\phantom{\syniden},v)    ~|~ \forall ~ (\clVal \iden=v)    \in \decl,~~ c \in A ) \\
    & & ( wr(c,\makebox[0pt][l]{$\syniden$}\phantom{\syniden},v) ~|~ \forall ~ (\clVal \syniden=v) \in \syndecl,~~ c \in A )
  \end{array}$
  \caption{Initialization.}
  \label{fig:initialisation}
\end{figure}

\subsection{Compilation}
\label{subsec:compilation}

The compiler eliminates language features
not supported by the compilation target
one by one, lowering the abstraction level from
  (1)~\emph{direct style communication (DS)} -- which needs language support for !-notation~\cite{idris-bang-notation} --
  through the intermediate representations of
  (2)~\emph{monadic normal form (MNF)} -- which needs support for do-notation~\cite{haskell-do-notation} -- and
  (3)~\emph{continuation-passing style (CPS)} -- which needs higher-order functions -- to
  (4)~explicitly encoding \emph{finite state machines (FSM)} -- for which first-order functions suffice.
In the following, we provide an intuition for the compiler steps
and subsequently their formal definitions.

First, the compilation steps $\mnfe$ and $\assoc$ transform DS remote communication ${\downarrow_s}(e, () \to \subexp)$ 
to variable bindings ($id := \subexp$)
and nested let bindings are flattened such that a program is prefixed by a sequence of let expressions.
Second, step $\guarde$ generates access control guards around client expressions
to enforce correct execution even when clients behave maliciously.
Third, step $\dtle$ transforms previously generated let bindings for remote communication
($x \gets \subexp_1\seq \mainexp_2$) 
to monadic bindings $\subexp_1 \bind x \to \mainexp_2$.
Fourth, step $\defun$ transforms functions into data structures that
can be sent over the network and are interpreted by a function (i.e., an FSM) 
on the other side.
Compared to standard defunctionalization, we handle two more issues.
First, we defunctionalize the built-in higher-order operator ($\bind$) 
by wrapping the program expression into a call to a trampoline $\trampoline(...)$
  and transforming the bind operator ($... \bind x \to ...$)
  to the $(\mathsf{More}, ..., ...)$ data structure;
 the trampoline repeatedly interprets the argument of $\mathsf{More}$ 
  until it returns $\mathsf{Done}$ instead of $\mathsf{More}$ signaling the program's result.
Second, we keep contract and client functions separate
by generating separate synthesized interpreter functions, called $\mathsf{cofn}$ and $\mathsf{clfn}$,
  thereby splitting the code into the parts specific to contract and client.

\subparagraph{MNF transformation (Fig.~\ref{fig:mnf}).}
The $\mnfp$ function wraps the main expression $\mainexp$ into a call to the trampoline
with the pair $(\text{Done}, \mainexp)$ -- signaling the final result -- as argument.
Then, $\mnfe$ transforms expressions recursively,
binding sub-expressions to variables,
resulting in a program prefixed by a sequence of let bindings.
As recursive calls to $\mnfe$ may return chains of let bindings,
we apply $\assoc$ to produce a flat chain of let bindings.
Given a let binding, whose sub-expressions are in MNF,
associativity recursively flattens the expression,
by moving nested let bindings to the front,
($ ...~(...~\mainexp_0;~\mainexp_1);~\mainexp_2 = ...~\mainexp_0;~(...~\mainexp_0;~\mainexp_2) $),
creating a single MNF expression
(i.e., $\assoc$ is composition for MNF terms).

\begin{figure}
  \figurehead
  \centering
  \input{notes/def_mnf.tex}
  \caption{Monadic normal form transformation.}
  \label{fig:mnf}
\end{figure}

\begin{figure}
  \figurehead
  \centering
  \input{notes/def_guard.tex}
  \caption{Guarding.}
  \label{fig:guard}
\end{figure}

\subparagraph{Guarding (Fig.~\ref{fig:guard}).}
We insert access control guards
for remote communication expressions $\gets_s$ to enforce
(i)~the execution order of contract code after running the client expression and
(ii)~that the correct client invokes the contract continuation.
The transformation sets the synthetic variable $\mathsf{state}$ to a unique value before the client expression,
and stores the predicate to designate valid clients in the synthetic variable $\mathsf{who}$.
After the client expression, the generated code asserts that the contract is in the same state,
and checks that the sender fulfills the predicate.
The assertion trivially holds in the sequential execution of the source language,
but after more compilation steps the client will be responsible
for calling the correct continuation on the contract.
Since client code is untrusted, the contract needs to ensure
that only the correct client can invoke only the correct continuation.

\begin{figure}
  \figurehead
  \centering
  \input{notes/def_cps.tex}
  \caption{Continuation-passing style transformation.}
  \label{fig:rotation}
\end{figure}

\begin{figure}
  \figurehead
  \centering
  \input{notes/def_defun.tex}
  \caption{Defunctionalization.}
  \label{fig:defunctionalisation}
\end{figure}

\subparagraph{CPS transformation (Fig.~\ref{fig:rotation}).}
The  $\dtle$ transformation turns the
chains of let bindings produced by $\mnfe$ into
  CPS.
  The chain contains three cases
of syntax forms:
(1)~monadic binding ($x \gets ...\seq \mainexp_1$),
(2)~let binding ($\letexp x = \subexp_0\seq \mainexp_1$),
or (3)~final expression.
For (1), $\dtle$ replaces the monadic binding with an explicit call to the bind operator ($... \bind (x \to cps(\mainexp_1))$).
For (2) and (3), $\dtle$ recurses into the tail of the chain.
This resembles do-notation desugaring (e.g., in Haskell).

\subparagraph{Defunctionalization (Fig.~\ref{fig:defunctionalisation}).}
The $\defun$ function
transforms the chains of let bindings and bind operators produced by $\dtle$,
which contains three cases of syntax forms:
(1)~a bind operator ($\subexp_1 \bind \subexp_2$),
or (2)~a let binding ($\letexp x = \subexp_1\seq \subexp_2$),
or (3)~the final expression.
For (1), 
$\subexp_1$ and $\subexp_2$ are replaced by data structures
that contain values for the free variables in $\subexp_1$ and $\subexp_2$
and are tagged with a fresh ID.
The body of the expression is lifted to top-level synthetic definitions.
For this, $\defun$ modifies the synthetic definitions $b$
by extracting the body $\subexp_{1,alt}$ of the synthetic $\clfnvar$ definition
and the body $\subexp_{2,alt}$ of $\cofnvar$,
and by adding an additional conditional clause to these definitions.
The added clause answers to requests for a given ID with evaluating the original expression.
For (2) and (3), $\defun$ recurses into the expressions.

After defunctionalization, lambdas
$x \to \subexp_0$ are lifted and assigned a top-level identifier $id_0$
and lambda applications, $id_0(\subexp_1)$,
are replaced with calls to a synthesized interpreter 
function $\text{\textsf{fn}}(id_0, \subexp_1)$.
The latter branches on the identifier
and executes the code that was lifted out of the original function.

\subparagraph{Compiling.}
The $\mathit{comp}$ function composes the compiler steps (not including $\mathit{mnf}$).
We also define the $\mathit{comp'}$ function,
which jumps over the wrapping $trmp$ expression
and initialises the defunctionalisation with an environment that contains the two functions $\cofnvar$ and $\clfnvar$, which assert false.

\vspace{-1.1em}
\[ \begin{array}{ll}
\mathit{comp} &~=~ \defun ~\circ~ \dtle ~\circ~ \guarde \\
\mathit{comp}' &~=~ \defunOuter ~\circ~ \dtlp ~\circ~ \guardp
\end{array} \]

\subsection{Semantics}
\label{subsec:semantics}

We model the semantics as a reduction relation
over configurations $\trace;\syntrace;c;\mainexp \to \trace';\syntrace';c';\mainexp'$.
Location $c=0$ denotes contract execution, otherwise 
execution of client of address $c$.
We distinguish good ($\textcolor{blue}{\to_g}$)
and  bad ($\textcolor{red}{\to_b}$) evaluations
(Fig.~\ref{fig:dynamic-semantics} and~\ref{fig:dynamic-semantics2});
shared rules are in black, without subscript ($\to$).

\textit{Attacker model.}
Attackers can control an arbitrary number of clients
and make them send arbitrary messages.
Hence, the bad semantics can answer a request to a client
with an arbitrary message from an arbitrary $id$.
We use evaluation with bad semantics to show that our compiler enforces access control against malicious clients.

Good evaluations of client expressions in the source language (\textcolor{blue}{\textsc{Rgs}})
  reduce to a client expression with a fixed client that fulfils the given predicate.
  We require that predicates evaluate purely.
  Hence, $\trace$ and $\syntrace$ do not change in the evaluation.
On the other hand,
  bad evaluation of client expressions in the source language (\textcolor{red}{\text{Rbs}})
  ignores the predicate, choosing an arbitrary client.
Similarly, bad evaluation also chooses an arbitrary client for
  the evaluation of a trampoline in the target (\textsc{Rtm}),
  which does not specify a predicate.
The trampoline ends when it reaches $\mathsf{Done}$ (\textsc{Rtd}).
Further, after choosing a client to evaluate,
  the good evaluation (\textcolor{blue}{\textsc{Rg}}) continues to reduce the client expression to a value,
while the bad evaluation (\textcolor{red}{\textsc{Rb}}) replaces the expression $\subexp$
   with a (manipulated) arbitrary value $v'$.
Both evaluations (\textcolor{blue}{\textsc{Rg}}, 
\textcolor{red}{\textsc{Rb}}) emit the message event $\msgevent(c,v)$ and an assignment to the special variable $\mathsf{sender}$,
  when a client expressions is reduced to a value $v$, to record the client--contract interaction.

\begin{figure}
  \figurehead
  \centering
  \begin{tabular}{rll}
$[c \Mapsto{} c]$ &= $[]$                 \\
$[id \Mapsto{} v]$ &= $[id \mapsto v]$                 \\
$[(\subexp_0$ :: $\subexp_1$) $\Mapsto$ ($\subexp'_0$ :: $\subexp'_1)]$ &= $[\subexp_0 \Mapsto{} \subexp'_0] \cdot [\subexp_1 \Mapsto{} \subexp'_1]$ \\
\end{tabular}

  \caption{Pattern matching.}
  \label{fig:pattern-matching}
\end{figure}

\begin{figure}
  \figurehead\scriptsize
  \centering
  \input{notes/dynamic_semantics.txt.tex}
  \caption{Evaluation (1/2).}
  \label{fig:dynamic-semantics}
\end{figure}

\subparagraph{Common Evaluation (Fig.~\ref{fig:dynamic-semantics2}).}
Expressions %
are reduced under the evaluation context $E$ on the current location (\textsc{Re}),
assignment to variables is recorded in the trace (\textsc{Rset$^\circ$}),
accessing 
a variable
  is answered by the most recent assignment to it from the trace
   in the current location (\textsc{Rget$^\circ$}).
For synthetic variables,
  we use the synthetic store (\textsc{Rget$^\dagger$}, \textsc{Rset$^\dagger$}).
Binary operators are defined as unsigned 256\,bit integer arithmetic;
  we only show the rule for addition (\textsc{Rop}).
Further, we give rules for
  conditionals (\textsc{Rt}, \textsc{Rf}),
  let binding (\textsc{Rlet}) and
  function application (\textsc{Rlam})
  using pattern matching.

\begin{figure}
  \figurehead
  \centering
  \input{notes/dynamic_semantics2.txt.tex}
  \caption{Evaluation (2/2).}
  \label{fig:dynamic-semantics2}
\end{figure}

\subparagraph{Pattern matching (Fig.~\ref{fig:pattern-matching}).}
Matching $[x{\Mapsto}v]$ is a partial function, 
matching patterns $x$ with values $v$,
  returning substitution of variables $id$ to values.
Matching is recursively defined over pairs; it
  matches constants to constants,
  identifiers to values by generating substitutions,
  and fails otherwise.
Substitutions $[id{\mapsto}v]$, in turn, can be applied to terms $\subexp$, written $[id{\mapsto}v]~\subexp$ (capture-avoiding substitution). Substitutions $\sigma$ compose right-to-left $(\sigma\sigma') x = \sigma (\sigma' x)$.

\subsection{Secure Compilation}
\label{subsec:secure-compilation}

We prove that the observable behavior of the contract
before and after compilation is equivalent.
We capture the observable behavior
by execution traces
and show that trace equivalence holds even when the program is attacked,
i.e., reduced by $\textcolor{red}{{} \to_b^* {}}$.

\subparagraph{Modelling Observable Behavior.}
The only source of observable nondeterminism in the bad semantics is
 the evaluation of $\downarrow_s$ and $\downarrow_t$.
As clients decisions on message sending
  are influenced by the state of contract variables, tracking
  incoming client messages
  and state changes in the trace
  suffices to capture the observable program behavior.
If the observable behavior is the same for the source and the compiled programs,
they are indistinguishable.
Thus, behavior preservation
  amounts to trace equality on programs before and after compilation.
Further, it suffices to model 
  equality for non-stuck traces.
The evaluation gets stuck (program crash) on assertions that guard against
    deviations from the intended program flow.
The Ethereum Virtual Machine
  reverts contract calls that crash, i.e., state changes of crashed calls do not take effect, hence, 
  stuck traces are not observable.

Since bad evaluation is nondeterministic,
  we work with 
  not just programs, expressions and configurations, but
  program sets, expression sets, and configuration sets.
Let $\trace;\syntrace;\mainexp \Downarrow$ be the trace set of the configuration $\trace;\syntrace;0;\mainexp$, e.g.,
  the set of tuples of the final event sequence $\trace'$ and value $v$
  of all reduction chains
  that start in $\trace;\syntrace;0;\mainexp$
  and end in $\trace';0;\syntrace';v$.
Our trace set definition does not include
synthetic events $\syntrace'$ of the final configuration.
Synthetic events are introduced through compilation; excluding them
  allows us to put source and target trace sets in relation.
Further, let the trace set of a configuration set $T~\textcolor{red}{\Downarrow}$,
  be the union of the trace sets for each element:

\vspace{-1.2em}
\[\trace;\syntrace;\mainexp~\textcolor{red}{\Downarrow} ~~=~ \{~ (\trace',v) ~|~ (\trace;\syntrace;0;\mainexp)~\textcolor{red}{\to_b^*}~(\trace';\syntrace';0;v) ~\} \]

\vspace{-1.2em}
\[ T~\textcolor{red}{\Downarrow} ~=~ \bigcup_{\trace;\syntrace;\mainexp \in T} ~ \trace;\syntrace;\mainexp~\textcolor{red}{\Downarrow}\]

We say that two configuration sets $T$ and $S$ are equivalent, denoted by $T \approx S$, iff
$T$ and $S$ have the same traces sets:

\vspace{-1.2em}
\[(T~\textcolor{red}{\approx}~S)  ~~~\Leftrightarrow~~~  (T~\textcolor{red}{\Downarrow} ~=~ S~\textcolor{red}{\Downarrow})\]

\vspace{-0.2em}
By this definition,
  two expressions that eventually evaluate to the same value with the same trace are related by trace equality.
We use this notion of trace equality to prove that a source program is trace-equal to its compiled version
  by evaluating the compiled program forward \textcolor{red}{$\to_b^*$}
  and the original program backward \textcolor{red}{$\gets_b^*$}
  until configurations converge.

\subparagraph{Secure Compilation.}
Theorem~\ref{theorem:trace-equality} states our correctness property,
which says that observable traces generated by the malicious evaluation of programs
are preserved ($\textcolor{red}{\approx}$) by compilation.
The malicious evaluation models
  that client code has been replaced with arbitrary code,
  while contract code is unchanged.
The preservation of observable traces implies
  the integrity %
  of the (unchanged) contract code.
Secure compilation
guarantees that developers can write safe programs
in the source language without knowledge about the compilation
or the distributed execution of client/contract tiers.

\begin{globaltheorem}[Secure Compilation]\label{theorem:trace-equality}\emph{
For each program $P$ over closed terms, the trace set of
the program under attack
equals the trace set of
the compiled program under attack: \\
$ \forall P.\, \{~ init_A(comp'(mnf'((P)))) ~\}~\textcolor{red}{\approx}~\{~ init_A(P) ~\} $.
}\end{globaltheorem}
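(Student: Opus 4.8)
The plan is to prove the statement by a simulation between the malicious reductions $\textcolor{red}{\to_b}$ of the compiled program $\mathit{comp}'(\mnfp(P))$ and of the source program $P$, organised along the four stages of the compiler $\mathit{comp}' \circ \mnfp = \defunOuter \circ \dtlp \circ \guardp \circ \mnfp$. For each stage I would relate every reachable configuration of the program before the stage to the configuration(s) of the program after it, so that related configurations emit the same \emph{non-synthetic} events --- hence the relation descends to the trace sets $\Downarrow$, which deliberately discard the synthetic store $\syntrace$ --- and related terminal configurations carry equal $\trace$. Composing the four stage relations and initialising with $init_A$ then yields $\{\,init_A(\mathit{comp}'(\mnfp(P)))\,\}\textcolor{red}{\approx}\{\,init_A(P)\,\}$, since $\textcolor{red}{\approx}$ is equality of trace sets and hence transitive. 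As hinted in the paper, the convenient way to discharge each simulation is to run it in both directions: push the later side forward with $\textcolor{red}{\to_b^*}$ and the earlier side backward with $\textcolor{red}{\gets_b^*}$ until the two configurations coincide up to $\syntrace$ and a bounded run of deterministic administrative steps.

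\textbf{The administrative stages ($\mnfp$ and $\dtlp$).}
These two stages preserve behaviour up to administrative reductions and are the easy part. $\mnfp$ only flattens nested let-bindings --- and $\assoc$, as the paper observes, is literally composition on MNF terms --- and wraps the body in a $\trampoline$ applied to $(\mathsf{Done},-)$, which by rule \textsc{Rtd} is transparent; the extra let-bindings $\mnfe$ introduces for subexpressions reduce away by \textsc{Rlet} and \textsc{Re} without touching $\trace$, except when the subexpression is itself a ${\downarrow}_s$, where the same single client--contract event is emitted on both sides by \textsc{Rgs}, \textsc{Rbs}, \textsc{Rg}, \textsc{Rb}. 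So the simulation relates a configuration to its ``re-nested'' MNF counterpart and is preserved by every black and red rule. The CPS stage $\dtlp$ is analogous: it rewrites each remote binding $x \gets_s (()\!\to\!\true,\, \subexp_0)\seq\mainexp_1$ into a monadic bind $\subexp_0 \bind (x \to \ldots)$, and by \textsc{Rtm}, \textsc{Rtd} together with the ${\downarrow}$-rules the bind reduces with exactly the same branching structure and emitted events, so the simulation is essentially the identity away from the $\bind$-nodes.

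\textbf{The security-critical stages ($\guardp$ and $\defunOuter$).}
Here the compiled program is no longer in lock-step with its predecessor, and this is where the real content sits. $\guardp$ threads the synthetic bookkeeping variables $\this.\statevar$, $\this.\whovar$ (and records $\this.\sendervar$ at each interaction) and wraps every resumption point in an $\assert$; $\defunOuter$ then replaces the closures produced by CPS with tagged data $(\mathsf{More},\, c :: \overline{v},\, c :: \overline{w})$ that is dispatched by the synthesised interpreters $\clfnvar$ and $\cofnvar$, whose fall-through clause is $\assert(\false)$. Once control flow is reified as data handed to the untrusted client, the target acquires, a priori, the extra freedom to resume the contract in a spurious state or on behalf of the wrong sender --- exactly the freedom the guards are meant to neutralise. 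The heart of the proof is an invariant on reachable target configurations that ties the current dispatch state --- the constant in $\this.\statevar$, the predicate in $\this.\whovar$, and which $\cofnvar$/$\clfnvar$ clause is currently ``unlocked'' --- to the unique suspension point that the predecessor program would currently be at, together with the claim that that clause reconstructs precisely the predecessor's continuation. Granting the invariant, a case split on the attacker's choice at a resumption shows that any choice the predecessor semantics would not have taken drives one of the inserted $\assert$s (or a $\clfnvar$/$\cofnvar$ fall-through) to $\false$ and the reduction gets stuck; since, as the paper observes, the EVM reverts crashed contract calls, a stuck run reaches no value and hence contributes nothing to $\Downarrow$. For every choice the predecessor would have taken, the guard succeeds, $\this.\statevar$ is reset, and --- via the forward/backward convergence --- the resulting configuration coincides, up to $\syntrace$ and a bounded run of administrative steps, with the one the predecessor reaches; so the two trace sets coincide.

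\textbf{Main obstacle.}
The delicate part is establishing and maintaining that configuration invariant across arbitrarily long, adversarially driven runs of the defunctionalised program --- in particular, ruling out that a stale continuation tag minted in an earlier phase could be replayed to re-enter a past basic block. This needs the fresh-variable discipline of $\defun$ and $\guarde$ to guarantee that all continuation tags are globally unique, hence that the conditional cascades defining $\cofnvar$ and $\clfnvar$ are unambiguous, and that $\this.\statevar$ is advanced and then cleared so that any given tag is ``live'' in exactly one reachable configuration. I expect the bulk of the work to be in phrasing all this as a well-formedness predicate on configurations that is both established by $init_A \circ \mathit{comp}' \circ \mnfp$ and preserved by every $\textcolor{red}{\to_b}$ rule, and then threading it through the defunctionalisation simulation; once it is in place, the remaining rule-by-rule case analyses --- including the ${\downarrow}_s$ and ${\downarrow}_t$ cases that carry all the observable nondeterminism --- are routine.
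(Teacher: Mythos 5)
Your overall strategy is the paper's: trace-set preservation is shown per compiler step and composed by transitivity of $\approx$; each step is discharged by running the compiled side forward with $\textcolor{red}{\to_b^*}$ and the original side backward until configurations converge; and the security content is exactly your observation that attacker choices the source would not take drive an inserted $\assert$ (or the $\assert(\false)$ fall-through of $\clfnvar$/$\cofnvar$) into a stuck configuration, which contributes nothing to $\Downarrow$ since only value-reaching runs are recorded (EVM revert). The theorem itself is then a two-line chain composing Lemma~\ref{lemma:mnf'} with Lemma~\ref{lemma:comp'}.

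Where you genuinely diverge is the factoring, and there your plan has a concrete gap. You propose separate simulations for $\guardp$, $\dtlp$ and $\defunOuter$; the paper never does this. It proves a single lemma for the composed transformation $\mathit{comp} = \defun \circ \dtle \circ \guarde$ (given in closed form in Fig.~\ref{fig:comp}), by structural induction over the MNF let-chain with open terms handled via substitution closure $[x{\Mapsto}v]$, plus the separate easy lemmas for $\assoc$, $\mnfe$ and $\mnfp$. The stage-wise route is not available as stated, because the intermediate programs between your stages are not evaluable in $\minilang_*$: $\bind$ is explicitly only an intermediate compiler construct and has no reduction rule, so ``the trace set of the program after $\dtlp$ but before $\defunOuter$'' is simply undefined; likewise the guarded-but-still-direct-style program is precisely the hybrid whose equivalence with the unguarded source under $\textcolor{red}{\to_b}$ is delicate (the inserted guard already filters senders that rule \textsc{Rbs} admits), and the paper confronts this filtering only once, inside the single $\mathit{comp}$ induction at the ${\downarrow}_s$ case, where the wrong-sender branch is discarded as stuck and the surviving branch is rewound to the source via \textsc{Rbs}. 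A consequence of that organisation is that the global reachable-configuration invariant you anticipate (tag uniqueness, exactly one ``live'' continuation) is not needed: freshness of the state constant $c$ and the if-let dispatch are used purely locally in that one case, with the induction hypothesis supplying correctness of the already-compiled tail $\subexp_2'$. To repair your plan, either extend the operational semantics to the intermediate forms before proving per-stage simulations, or fold the three stages into the composed $\mathit{comp}/\mathit{comp}'$ lemma as the paper does.
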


We first show that trace equality holds for the different compiler steps.
Some compiler steps are defined as a recursive term-to-term transformation on open terms,
whereas traceset equality is defined by reducing terms to values, i.e., on closed terms.
Since all evaluable programs are closed terms,
we show that the compiler steps preserve the traceset of an open term $e$ that 
is closed by substitution $[x\Mapsto v]$. We formulate the necessary lemmas and sketch the proofs -- the detailed proof is in Appendix~\ref{appendix:proofs}.

{\small
\begin{globallemma}[assoc correct]\label{lemma:assoc}
$\{\, \trace;\syntrace; [x{\Mapsto}v]\,assoc(\mainexp) \,\} \approx \{\, \trace;\syntrace; [x{\Mapsto}v]\,\mainexp \,\}$
\end{globallemma}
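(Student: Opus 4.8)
The idea is to reduce the claim to a single operational ``let-commuting conversion'' and to prove that conversion by comparing reduction sequences. First I would note that $\assoc$ introduces no free variables and, under the Barendregt convention for the bound names $x_0,x_1$ it manipulates, commutes with closing substitutions, so that $[x\Mapsto v]\,\assoc(\mainexp) = \assoc([x\Mapsto v]\,\mainexp)$. Hence it suffices to prove, for \emph{every} closing substitution $\tau$ and all $\trace,\syntrace$, that $\{\trace;\syntrace;\tau\,\assoc(\mainexp)\}\approx\{\trace;\syntrace;\tau\,\mainexp\}$; retaining the quantifier over $\tau$ is what later lets the induction hypothesis be applied underneath binders. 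I would then do well-founded induction on the measure witnessing that $\assoc$ terminates -- a weight penalising $\letexp$ nodes in binding position, e.g.\ $\mu(\letexp x = \subexp_1\seq\subexp_2) = 2\,\mu(\subexp_1) + \mu(\subexp_2) + 1$ and $\mu$ constant on atoms -- since the reassociation performed by $\assoc$ leaves plain syntactic size unchanged. Using the easy bound $\mu(\assoc(t)) \le \mu(t)$, both recursive calls in the defining equation of $\assoc$ strictly decrease $\mu$.

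In the base case $\assoc(\mainexp) = \mainexp$, so there is nothing to show. In the inductive case $\mainexp = \letexp x_0 = (\letexp x_1 = \subexp_1\seq\subexp_0)\seq\subexp_2$; writing $s = \letexp x_0 = \subexp_0\seq\subexp_2$, we have $\assoc(\mainexp) = \assoc(\letexp x_1 = \subexp_1\seq\assoc(s))$. I would apply the induction hypothesis to $\letexp x_1 = \subexp_1\seq\assoc(s)$, then apply it again to $s$ (with the closing substitution extended to bind $x_1$) inside the context $\letexp x_1 = \subexp_1\seq\square$; combining these with transitivity of $\approx$ and the fact that $\approx$ is a congruence for evaluation contexts -- an auxiliary lemma following from rule \textsc{Re}, which makes a subterm's reduction independent of its context -- the goal collapses to the single equivalence
\[
\{\trace;\syntrace;\tau(\letexp x_1 = \subexp_1\seq\letexp x_0 = \subexp_0\seq\subexp_2)\} \;\approx\; \{\trace;\syntrace;\tau(\letexp x_0 = (\letexp x_1 = \subexp_1\seq\subexp_0)\seq\subexp_2)\},
\]
where we $\alpha$-rename so that $x_1 \neq x_0$ and $x_1 \notin \mathrm{fv}(\subexp_2)$.

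This last equivalence I would prove by directly comparing the $\to_b$-reduction graphs of the two configurations. Both evaluate, in the same forced order and in matching evaluation contexts, first $\tau\subexp_1$ to a value $v_1$ and then fire \textsc{Rlet} on a binding $\letexp x_1 = v_1\seq\ldots$; in both cases the resulting configuration is the \emph{identical} $\letexp x_0 = [x_1{\mapsto}v_1]\,\tau\subexp_0\seq\tau\subexp_2$ (the hypothesis $x_1\notin\mathrm{fv}(\subexp_2)$ is used here on the left-hand side), and from that point the two runs coincide step for step. Since \textsc{Rlet} (like \textsc{Rapp}) emits no trace event, and since by \textsc{Re} whatever $\tau\subexp_1$ does -- including every nondeterministic choice made by \textsc{Rbs}, \textsc{Rb} or \textsc{Rtm} inside it -- is available identically on both sides, the two configurations have the same set of terminal $(\trace',v)$ pairs, i.e.\ the same trace set; that is exactly $\approx$.

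I expect the bulk of the work to sit in two places. The minor one is pinning down the well-founded measure and checking the strict decrease for both recursive occurrences of $\assoc$ in its own definition; this is routine but requires the $\mu(\assoc(t))\le\mu(t)$ bound to be established first. The more substantial one is making the reduction-graph comparison rigorous under the \emph{bad} semantics: one must state an invariant saying precisely that the two runs are in corresponding phases while $\tau\subexp_1$ is being reduced and are literally the same configuration afterwards, and that every arbitrary choice -- the client address picked by \textsc{Rbs}/\textsc{Rtm}, the spoofed value picked by \textsc{Rb} -- taken on one side can be matched on the other. Because reassociation never touches a ${\downarrow}_s$/${\downarrow}_t$ subterm and never reorders side effects, no behaviour is gained or lost; but spelling out that invariant, together with the congruence lemma for $\approx$ on which it relies, is where the real bookkeeping lives.
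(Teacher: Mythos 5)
Correct, and essentially the paper's own argument: induction along the recursion of $assoc$ with the statement strengthened over closing substitutions so the hypothesis can be applied under the binder, the induction hypothesis used to strip both occurrences of $assoc$, and the residual nested-let reassociation discharged by evaluating the bound expression on both sides (quantifying over all outcomes of the bad semantics) until the two configurations coincide, relying on the inner binder not occurring free in the tail. Your explicit termination measure and the factored-out context-congruence lemma are just tidier packaging of what the paper does inline --- its ``induction over term structure'' is really induction on $assoc$'s recursion (the recursive calls are not on structural subterms), and its forward/backward chain through the judgement $\phi$ and \textsc{Rlet} is exactly your congruence step.
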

\begin{globallemma}[mnf correct]\label{lemma:mnf}
$\{\, \trace;\syntrace; [x{\Mapsto}v]\,mnf(\mainexp) \,\} \approx \{\, \trace;\syntrace; [x{\Mapsto}v]\,\mainexp \,\}$
\end{globallemma}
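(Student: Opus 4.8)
The plan is to prove Lemma~\ref{lemma:mnf} by induction on the frame--redex structure of $\mainexp$, isolating as the crux a single ``naming'' step, and discharging the flattening wrapper by appeal to Lemma~\ref{lemma:assoc}. First I would note that $\mnfe$ only ever introduces let bindings and rearranges them: by its first defining equation it replaces a non-atomic immediate sub-expression $e$ sitting in a frame $F$ with a fresh variable $id_0$, producing $\letexp id_0{=}e\seq \mnfe[F~id_0]$ and then flattening with $\assoc$; on an atom $\Finv$ it is the identity. Applying Lemma~\ref{lemma:assoc} to strip the outer $\assoc$, it suffices to show
\[ \{\trace;\syntrace; [x{\Mapsto}v](\letexp id_0{=}e\seq \mnfe[F~id_0])\} ~\approx~ \{\trace;\syntrace; [x{\Mapsto}v](F~e)\}. \]
The base case $\mainexp=\Finv$ is immediate, since $\mnfe$ is the identity on atoms.

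The heart of the argument is a \emph{naming lemma}: for a frame $F$ whose hole occupies the leftmost not-yet-evaluated position and a variable $id_0$ fresh for $F$,
\[ \{\trace;\syntrace; \sigma(\letexp id_0{=}e\seq F~id_0)\} ~\approx~ \{\trace;\syntrace; \sigma(F~e)\},\quad \sigma=[x{\Mapsto}v]. \]
I would prove this by the forward/backward simulation the paper outlines (compiled side forward under $\to_b^*$, source side backward under $\gets_b^*$, to convergence). The frame $F$ induces exactly the evaluation context $E$ that exposes its hole as the current redex, so under rule $\textsc{Re}$ both configurations reduce $\sigma e$ by the identical sub-derivation, emitting the identical event prefix and reaching the identical value $v_e$. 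At that point the let-form performs one further $\textsc{Rlet}$ step, substituting $id_0\mapsto v_e$ and emitting no event, after which both configurations hold the same expression and coincide step for step, yielding equal trace sets. Because bad evaluation is nondeterministic, I run this as set-valued convergence: every terminating $\to_b^*$ chain on one side is matched by a chain with the same $(\trace',v)$ on the other, which is precisely the $T\Downarrow = S\Downarrow$ formulation of $\approx$.

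To close the induction I need two auxiliary facts. (i)~A \emph{substitution-commutation} property $[id_0{\mapsto}v_e]\,\mnfe[F~id_0] = \mnfe[F~v_e]$, valid because $id_0$ is fresh and does not occur in $F$; this identifies the body reached after the $\textsc{Rlet}$ step with an $\mnfe$-image of a strictly smaller expression, so the inductive hypothesis applies to $\mnfe[F~v_e]$ against $F~v_e$ (and, if the full definition also normalizes the extracted redex, to $e$ as well). (ii)~A well-founded measure — the number of non-atomic positions at or to the left of the current hole — which each unfolding of $\mnfe$ strictly decreases, since filling the hole with the value $v_e$ turns a non-atomic position into a value and advances the decomposition; this justifies both the induction and the termination of $\mnfe$.

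I expect the main obstacle to be the naming lemma, specifically justifying that the frame's hole always coincides with the leftmost evaluation-context position, so that introducing the let binding neither reorders nor duplicates effects. The delicate cases are the ``right-hand'' frames $e~\square$, $e::\square$, and $\letexp x{=}e;\square$: these are produced by $\mnfe$ only after the left component has already been named and is an atom, so the hole is genuinely leftmost. I would make this precise with an invariant on the decomposition maintained along the recursion, relating the chosen frame $F$ to the unique evaluation context $E$ that puts its hole in redex position. Secondary care is needed for hygiene — ensuring the fresh $id_0$ neither clashes with $\sigma$ nor is captured inside $F$ — and for value-position edge cases (e.g.\ a pair of values, which is not an atom), where $\mnfe$ introduces a trivial binding of an already-evaluated term; such a binding reduces in one $\textsc{Rlet}$ step with no event and is therefore absorbed by the same convergence argument.
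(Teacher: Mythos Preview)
Your approach shares the paper's forward/backward simulation skeleton and the use of Lemma~\ref{lemma:assoc}, but the induction is organized differently. The paper argues by explicit case analysis on the outermost syntactic form (the cases $e=e_0\,e_1$ and $e=\letexp x_0{=}e_0;\,e_1$ are spelled out, the rest deferred as analogous). For each form it fully unfolds $\mnfe$---the proof's unfolding treats $\mnfe$ as also recursing into the extracted subexpression, so that $\mnfe(e_0\,e_1)=\assoc(\letexp id_0{=}\mnfe(e_0);\,\assoc(\letexp id_1{=}\mnfe(e_1);\,id_0\,id_1))$---and then invokes the induction hypothesis on the \emph{structural subterms} $e_0,e_1$: it replaces $\mnfe(e_i)$ by $e_i$ while that subterm sits in let-redex position, and only afterwards evaluates forward. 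This keeps the IH on genuine subterms and never requires substituting a computed value back under $\mnfe$.

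Your uniform frame-based argument instead evaluates the extracted $e$ first and only then invokes the IH, on $\mnfe[F~v_e]$ versus $F~v_e$. That is where the gap lies. First, $F~v_e$ is not a subterm of $F~e$, and the measure you propose is attached to a frame decomposition that no longer exists once the hole is filled; evaluation can make $v_e$ syntactically larger than $e$, so size-based measures do not rescue the step either. Second, the commutation $[id_0{\mapsto}v_e]\,\mnfe[F~id_0]=\mnfe[F~v_e]$ fails precisely when $v_e$ is a pair value---a value but not an atom---since $\mnfe$ then decomposes $F~v_e$ differently from $F~id_0$; saying the extra bindings are ``absorbed'' is not yet a proof. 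The clean repair, which also brings you closer to the paper's structure, is to apply the IH to $F~id_0$ under the \emph{extended} closing substitution $[id_0{\mapsto}v_e]\cdot\sigma$: the term $F~id_0$ genuinely has fewer non-atomic immediate subexpressions than $F~e$, the problematic commutation is never needed, and the argument terminates by the obvious count.
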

\begin{globallemma}[mnf' correct]\label{lemma:mnf'}
$\{\, init_C(mnf'(\decl;\syndecl; \mainexp)) \,\} \approx \{\, init_C(\decl;\syndecl; \mainexp) \,\}$
\end{globallemma}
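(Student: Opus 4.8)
The plan is to reduce the program-level statement to the expression-level Lemma~\ref{lemma:mnf}, isolating exactly the two things that $\mnfp$ adds on top of $\mnfe$: it leaves the definition lists $\decl;\syndecl$ untouched, and it wraps the main expression inside $\trampoline(\mathsf{Done},\cdot)$. Throughout, I write $\cdot\Downarrow$ for the trace set of the configuration $\trace;\syntrace;0;\cdot$, where $\trace;\syntrace$ is the initial event prefix.

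First I would discharge the initialization bookkeeping. By the definition of $\mnfp$ (Fig.~\ref{fig:mnf}), $\mnfp(\decl;\syndecl;\mainexp) = \decl;\syndecl;\trampoline(\mnfe[(\mathsf{Done},\mainexp)])$, so the two definition lists are syntactically identical. Hence $init_C$ (Fig.~\ref{fig:initialisation}) emits the same write-event prefix $\trace;\syntrace$ on both sides, and the goal collapses to the single-expression claim
\[ \{\, \trace;\syntrace;0;\trampoline(\mnfe[(\mathsf{Done},\mainexp)]) \,\} \approx \{\, \trace;\syntrace;0;\mainexp \,\}. \]

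Second, I would remove the MNF transformation by invoking Lemma~\ref{lemma:mnf} at the closed main expression $(\mathsf{Done},\mainexp)$ (which is itself a main expression, so the lemma applies with the identity substitution). This gives $\mnfe[(\mathsf{Done},\mainexp)]\Downarrow = (\mathsf{Done},\mainexp)\Downarrow$. Since $(\mathsf{Done},\square)$ is an evaluation context, reducing $\mainexp$ under it shows that every non-stuck run of $(\mathsf{Done},\mainexp)$ ends in a value of shape $(\mathsf{Done},v)$ with trace $\trace'$ precisely when $(\trace',v)\in\mainexp\Downarrow$; that is, $(\mathsf{Done},\mainexp)\Downarrow = \{(\trace',(\mathsf{Done},v)) : (\trace',v)\in\mainexp\Downarrow\}$. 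By the equality just obtained, the same holds for $\mnfe[(\mathsf{Done},\mainexp)]$: in particular, every value it reaches on a non-stuck run is a $\mathsf{Done}$-pair.

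Third comes the administrative trampoline step. Under the context $\trampoline~\square$ (of the form $v~E$), unique decomposition forces argument-first evaluation: while the argument is still an unreduced chain of let-bindings (the shape of $\mnfe$ output) neither \textsc{Rtm} (which needs a $\mathsf{More}$-head) nor \textsc{Rtd} (which needs a fully reduced $(\mathsf{Done},v)$) is enabled, so the only available reductions are the trace-silent ones on the argument together with the nondeterministic $\downarrow_s$ client reductions already matched by Lemma~\ref{lemma:mnf}. Once the argument reaches $(\mathsf{Done},v)$, rule \textsc{Rtd} fires exactly once, rewriting to $v$ and emitting no event. Crucially, since there is no $\bind$ at the MNF stage, no run ever exposes a $\mathsf{More}$-head, so \textsc{Rtm} is never reachable and the trampoline itself contributes no fresh nondeterminism. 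Therefore $\trampoline(\mnfe[(\mathsf{Done},\mainexp)])\Downarrow = \{(\trace',v) : (\trace',v)\in\mainexp\Downarrow\} = \mainexp\Downarrow$, which is the required equivalence. This instantiates the paper's forward/backward scheme: the $\mathsf{Done}$-wrapping and the single \textsc{Rtd} step are the trace-silent reductions used to converge the compiled configuration onto the source one.

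I expect the main obstacle to be precisely this trampoline-congruence argument, namely showing that wrapping two trace-equal, $\mathsf{Done}$-terminating arguments in $\trampoline(\cdot)$ preserves trace-set equality. This requires (i) the unique-decomposition and determinism-of-administrative-steps reasoning that rules out premature firing of \textsc{Rtd} and any firing of \textsc{Rtm}, guaranteeing the trampoline adds no branching, and (ii) the observation that stuck argument-runs are excluded from both trace sets anyway, since the EVM reverts crashing calls. The MNF content is entirely delegated to Lemma~\ref{lemma:mnf}, so the genuinely new work is confined to making this context-and-determinism argument precise.
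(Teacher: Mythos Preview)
Your proposal is correct and follows essentially the same route as the paper: strip off the identical definition prefix via $init$, invoke Lemma~\ref{lemma:mnf} on $(\mathsf{Done},\mainexp)$, and collapse the surrounding $\trampoline(\mathsf{Done},\cdot)$ by a single trace-silent \textsc{Rtd} step. The only difference is presentational: the paper applies Lemma~\ref{lemma:mnf} directly under the $\trampoline$ context and then runs its forward/backward convergence scheme ($\phi$, \textsc{Rtd}, $\phi^{-1}$), leaving the congruence implicit, whereas you factor it out and argue the $\trampoline$-congruence explicitly via unique decomposition and the unreachability of \textsc{Rtm}---a point the paper glosses over but which your treatment makes precise.
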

\begin{globallemma}[comp correct]\label{lemma:comp}
$\{\, [x{\Mapsto}v]\,init_A(comp(\decl;\syndecl; \trampoline(\mainexp))) \,\} \approx \{\, init_A(\decl;\syndecl; \trampoline([x{\Mapsto}v]\,\mainexp)) \,\}$
\end{globallemma}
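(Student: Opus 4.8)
\textbf{Proof plan for Lemma~\ref{lemma:comp} (comp correct).}

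The plan is to prove the lemma by splitting $comp = \defun \circ \dtle \circ \guarde$ into its three steps and establishing a trace-set equality for each in turn, then chaining them by the transitivity of $\textcolor{red}{\approx}$. Concretely, I would introduce three auxiliary lemmas -- ``guard correct'', ``cps correct'', and ``defun correct'' -- each phrased in the same open-body-closed-by-substitution style as Lemmas~\ref{lemma:assoc} and~\ref{lemma:mnf}, namely that applying the step to $\decl;\syndecl;\trampoline(\mainexp)$ yields an $init_A$-configuration whose trace set equals that of $\decl;\syndecl;\trampoline(\mainexp)$. Lemma~\ref{lemma:comp} is then their composition, once one checks the mild congruence fact that trace-set equality of the trampoline bodies lifts through the fixed program context $\decl;\syndecl;\trampoline(\square)$ and through $init_A$: the $\decl,\syndecl$ only contribute a fixed prefix of write events and the synthetic store $\syntrace$ is projected away in $\Downarrow$ anyway. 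Keeping the substitution $[x{\Mapsto}v]$ on the outside of $comp$ in the statement is essential, because $\defun$ folds the free variables of a continuation into the transmitted datum via $\fv$ and therefore does not commute with $[x{\Mapsto}v]$; instead we compile the open body and only afterwards close the resulting configuration, at which point the now-captured variables resolve to the same values on both sides.

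For each step I would argue by a simulation between the $\textcolor{red}{\to_b}$-reduction sequences of the pre- and post-transformation configurations, maintaining an invariant that relates the two up to administrative redexes, the synthetic store, and -- for $\dtle$ and $\defun$ -- the reification of control flow. Because CPS and defunctionalization move ``what to do next'' from the contract's control position into a first-class value, the simulation is block-wise rather than step-wise: a maximal run of contract-local reductions between two client interactions in the source corresponds, on the target side, to one trampoline iteration (\textsc{Rtm}, and eventually \textsc{Rtd}) together with one dispatch through the synthesized interpreter $\this.\cofnvar$ -- i.e. the unfolding of the $\ifletexp$/$\fv$ pattern matches produced by $\defun$ -- and symmetrically on the client side via $\this.\clfnvar$. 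For $\dtle$ this is the familiar fact that a CPS transform preserves observable behavior, adapted to the built-in $\bind$; for $\defun$ it additionally needs that $\fv$ captures exactly the variables that must cross the tier, that the $\cofnvar$/$\clfnvar$ split preserves the location at which each fragment runs, and that the freshly chosen tags make the nested $\ifletexp$ chains deterministic, so each transmitted datum dispatches into its own branch.

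The step I expect to be the real obstacle is ``guard correct'': showing that the writes to $\this.\statevar$ and $\this.\whovar$ and the inserted $\assert(\this.\statevar == c \,\&\&\, \this.\whovar(\this.\sendervar))$ leave the trace set unchanged, and -- crucially -- that this still holds under the bad semantics, where an arbitrary client may answer the request. The state writes emit only synthetic events, which $\Downarrow$ discards, so the content lies entirely in the assertion; the argument must be that on every configuration reachable from $init_A$ by $\textcolor{red}{\to_b^*}$ the assertion is satisfied exactly on the runs that $\Downarrow$ already counts. For the state half one needs the invariant that no contract-located write intervenes between $\this.\statevar := c$ and the matching check -- the interleaved reduction runs at a client location and only touches that client's store -- so $\this.\statevar == c$ holds whenever control returns, even across the loops that the trampoline unfolds. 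For the who-half one needs that \textsc{Rbs}/\textsc{Rb} record $\this.\sendervar$ as precisely the client that answered, so that the only runs the assertion removes are ones the semantics already classifies as stuck (hence reverted and unobservable, absent from $\Downarrow$ on both sides). Pinning down this invariant so that it is visibly preserved by every $\textcolor{red}{\to_b}$ rule, and formulating ``the guard is a no-op on observable runs'' precisely enough to match the source's bad semantics, is the delicate part; with it in hand, the $\dtle$ and $\defun$ steps reduce to comparatively routine instantiations of standard CPS- and defunctionalization-correctness arguments, modulo the extra bookkeeping for $\bind$, the trampoline, and the contract/client separation.
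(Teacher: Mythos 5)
Your high-level aims coincide with the paper's (transitive chaining, keeping $[x{\Mapsto}v]$ outside $comp$ because $\fv$ does not commute with substitution, the guard as the crux, stuck assertions dropping out of $\Downarrow$), but the decomposition you want to build the proof on does not go through in this calculus. The paper does not prove separate ``guard correct'', ``cps correct'', ``defun correct'' lemmas; it proves Lemma~\ref{lemma:comp} directly for the \emph{fused} transformation $comp$ of Fig.~\ref{fig:comp}, by induction on the term, unfolding $comp$ in the remote-communication case and reducing the compiled program forward and the source program backward until the configurations converge, with the induction hypothesis invoked as a whole-program equivalence for the recursively compiled continuation. The reason it must be done this way is that your intermediate lemmas are not even stateable as $\approx$-facts: $\approx$ is defined through $\Downarrow$, i.e.\ by reduction to a value, and the post-$\dtle$, pre-$\defun$ intermediate form contains $\bind$, for which the semantics provides no reduction rule (the paper explicitly restricts $\bind$ to an intermediate construction appearing in neither source nor target). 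Only the fully defunctionalized form is executable, via \textsc{Rtm}/\textsc{Rtd} and the $\cofnvar$/$\clfnvar$ dispatch, so ``cps correct'' and ``defun correct'' have no meaning as trace-set equalities unless you first extend the operational semantics---exactly what the paper avoids by fusing the passes. Relatedly, your congruence step ``trace equality lifts through the fixed context $\decl;\syndecl;\trampoline(\square)$'' is not available: $\defun$ rewrites the synthetic definitions themselves (it installs new $\clfnvar$/$\cofnvar$ clauses), so the context is not fixed, and the per-pass statements would in any case have to be program-level, as Lemma~\ref{lemma:comp} and its use of the hypothesis $\phi_1$ in the paper are.

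On the guard step, your characterization that the inserted assertion ``removes only runs the semantics already classifies as stuck\,\dots\,absent from $\Downarrow$ on both sides'' misstates the key move. In the compiled program the wrong-sender runs are perfectly good reductions up to the assertion; it is the assertion itself that newly renders them stuck, and they are then discarded solely because $\Downarrow$ collects only runs reaching a value. The paper's proof makes this explicit: after the adversarial step \textcolor{red}{\textsc{Rb}} it case-splits on whether the attacker-chosen sender equals the stored $\whovar$, evaluates the $\cofnvar$ dispatch to $\assert(\false)$ resp.\ $\assert(\true)$, drops the former set of configurations under $\approx$, and only then rewinds the surviving set (the designated-client interactions) back to the source client expression. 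So the source-side counterpart of the removed runs is that they are simply not generated in the backward leg of the chain, not that they are stuck there. Your state-half invariant (nothing overwrites $\statevar$ while the client runs) is indeed needed and is discharged in the paper by the freshness of the state tag and the fact that only generated code touches the synthetic store; but the who-half needs the correspondence just described, not a both-sides-stuck argument.
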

\begin{globallemma}[comp' correct]\label{lemma:comp'}
$\{\, [x{\Mapsto}v]\,init_A(comp'(\decl;\syndecl; \trampoline(\mainexp))) \,\} \approx \{\, init_A(\decl;\syndecl; \trampoline([x{\Mapsto}v]\,\mainexp)) \,\}$
\end{globallemma}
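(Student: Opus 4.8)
The plan is to derive Lemma~\ref{lemma:comp'} from Lemma~\ref{lemma:comp} by transitivity of $\approx$. Both lemmas have the identical right-hand side $\{\, init_A(\decl;\syndecl; \trampoline([x{\Mapsto}v]\,\mainexp)) \,\}$, so it suffices to establish
\[ \{\, [x{\Mapsto}v]\,init_A(\mathit{comp}'(\decl;\syndecl; \trampoline(\mainexp))) \,\} ~\approx~ \{\, [x{\Mapsto}v]\,init_A(\mathit{comp}(\decl;\syndecl; \trampoline(\mainexp))) \,\} \]
and then to chain with Lemma~\ref{lemma:comp}. First I would unfold $\mathit{comp}' = \defunOuter \circ \dtlp \circ \guardp$ and compute its action on the program. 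By definition, $\guardp$ and $\dtlp$ merely strip the wrapper $\decl;\syndecl;\trampoline(\cdot)$, apply the body-level steps $\guarde$ and $\dtle$ to the trampoline argument, and rewrap; this is exactly the transformation performed by the body-level composite $\mathit{comp}$. The only genuine difference is that $\defunOuter$ seeds the defunctionalisation with the synthetic interpreter pair $\coclfn(\syndecl, id, \assert(\false), \assert(\false))$ for a fresh $id$ before delegating to $\defun$; up to these seed clauses, $\mathit{comp}'$ and $\mathit{comp}$ produce the same compiled program.

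The core of the argument is then to show that these seed clauses are invisible to the trace set. Defunctionalisation builds each of $\clfnvar,\cofnvar$ as a nested $\ifletexp$ chain that dispatches on the continuation identifier, with the seed $\assert(\false)$ sitting in the innermost $\elseexp$ branch; this branch is selected precisely when the supplied identifier matches none of the patterns generated during defunctionalisation (each tagged with a fresh $c$). I would argue, by the structure of the good and bad reduction rules, that every identifier the contract ever dispatches on in a non-stuck reduction is one it previously emitted through a $\mathsf{More}$ tuple, so along such reductions a generated clause always matches and the $\assert(\false)$ branch is never reached. Under the attacker semantics an adversary may forge an identifier matching no clause, but then reduction reaches $\assert(\false)$ and gets stuck; by the observable-behaviour convention (crashed contract calls are reverted and excluded from the trace set) such runs contribute nothing. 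Consequently the seed clauses neither add nor remove elements of $\textcolor{red}{\Downarrow}$, and the two trace sets coincide. Two remaining bookkeeping points are that the extra synthetic definitions introduced by seeding only ever produce synthetic events $\wrevent(c,\syniden,v)$, which are projected away in the definition of the trace set, and that the fresh $id$ chosen by $\defunOuter$ is disjoint from all names occurring in $\decl,\syndecl,\mainexp$, so it cannot interfere with existing lookups.

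Finally, the closing substitution $[x{\Mapsto}v]$ must be moved across $\mathit{comp}'$: since the $x$ range over source term variables, which are disjoint from the heap variables $\iden$, the synthetic variables $\syniden$, and the fresh continuation tags introduced by the compiler, each compiler step commutes with $[x{\Mapsto}v]$, so the substitution may be pushed onto the body $\mainexp$ exactly as in the right-hand side and as already arranged in Lemma~\ref{lemma:comp}. I expect the main obstacle to be the attacker-semantics part of the second paragraph: making precise the invariant that the set of identifiers on which the defunctionalised interpreter can be invoked in a non-stuck run is exactly the set of tags it has emitted, and hence that hitting the seed $\assert(\false)$ is always a crash. This is the same control-flow-integrity property that underlies the whole security theorem, and it is where the interaction between the fresh tags $c$, the guards inserted by $\guarde$ (the $\statevar$/$\whovar$ checks), and the nondeterministic client choice in $\textcolor{red}{\textsc{Rb}}$ has to be controlled carefully; the threading and substitution steps are routine by comparison.
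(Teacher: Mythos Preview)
Your plan overcomplicates what is, in the paper, essentially a two-line argument, and it rests on an expression that is not well-defined. You aim to establish
\[
\{\, [x{\Mapsto}v]\,init_A(\mathit{comp}'(\decl;\syndecl;\trampoline(\mainexp))) \,\}
~\approx~
\{\, [x{\Mapsto}v]\,init_A(\mathit{comp}(\decl;\syndecl;\trampoline(\mainexp))) \,\}
\]
and then invoke Lemma~\ref{lemma:comp}. But $\mathit{comp}$ at the program level (the one Lemma~\ref{lemma:comp} is about) pattern-matches on $\syndecl$ and requires it to already have the shape $\coclfn(\syndecl_{\text{rest}}, id, \subexp_{1,\mathit{alt}}, \subexp_{2,\mathit{alt}})$, i.e., to contain $\clfnvar$/$\cofnvar$ definitions. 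For a source program the synthetic block $\syndecl$ does not have this shape; providing that shape is precisely the job of $\defunOuter$ inside $\mathit{comp}'$. So the right-hand side you want to compare against, $\mathit{comp}(\decl;\syndecl;\trampoline(\mainexp))$, is in general undefined, and the detour through it cannot be taken.

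The paper's proof avoids this entirely. It simply unfolds the definition of $\mathit{comp}'$ to observe
\[
\mathit{comp}'(\decl;\syndecl;\trampoline(\mainexp)) = \mathit{comp}(\decl;\syndecl';\trampoline(\mainexp)),
\qquad
\syndecl' = \coclfn(\syndecl, id, \assert(\false), \assert(\false)),
\]
and then applies Lemma~\ref{lemma:comp} \emph{with $\syndecl'$ in place of $\syndecl$}. That is the whole proof. Your long second paragraph---the argument that the seeded $\assert(\false)$ clauses are unreachable in non-stuck runs, the control-flow-integrity invariant about emitted versus dispatched tags, the interaction with the guarding step and the attacker rule $\textcolor{red}{\textsc{Rb}}$---is not needed for Lemma~\ref{lemma:comp'} at all: all of that work is already absorbed into Lemma~\ref{lemma:comp}. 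The only residual point (closing the gap between $\syndecl'$ and $\syndecl$ on the right-hand side) is handled by the observation you also make, that the extra seed definitions produce only synthetic write events, which are projected out of the trace set by definition; no operational reasoning about reachability is required.
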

}
\subparagraph{Proof sketch.}
Lemma~\ref{lemma:assoc}--\ref{lemma:comp'} hold by chain of transitive trace equality relations.
We show that a term is trace-equal to the same term after compilation,
  by evaluating the compiled program (${\to}^*$)
  and the original program (${\gets}^*$)
  until configurations converge.
In the inductive case,
  we can remove the current compiler step in redex position under traceset equality ($\approx$)
  since traces before and after applying the compiler step are equal by induction hypothesis.

An interesting case is the proof of $\mathit{comp}$ for $P = \decl;\syndecl; \downarrow(\subexp_0, () \to \subexp_1)$.
The compiler transforms the remote communication $\downarrow_s$
  into the use of a guard and a trampoline.
The compiled program steps to the use of $\downarrow_t$,
the source program to $\downarrow_s$.
In the attacker relation $\textcolor{red}{\to_b}$,
  arbitrary clients can send arbitrary values with $\downarrow_t$,
  leading to additional traces compared to the ones permitted in the source program
  where communication is modeled by $\downarrow_s$.
We observe that $\downarrow_s$ generates the trace elements
$\msgevent(c, \textcolor{red}{v}), \wrevent(0, \mathrm{sender}, c)$ for all $\textcolor{red}{v}$
and that $\downarrow_t$ generates the trace elements
$\msgevent(\textcolor{red}{c'}, \textcolor{red}{v}), \wrevent(0, \mathrm{sender}, \textcolor{red}{c'})$
for all $\textcolor{red}{v}, \textcolor{red}{c'}$,
which differ for $\textcolor{red}{c'} \ne c$.

Compilation adds an $\mathsf{assert}$ expression (Fig.~\ref{fig:guard})
evaluated after receiving a value from a client.
The $\mathsf{assert}$ gets stuck for
configurations
that produce trace elements with $\textcolor{red}{c'} \ne c$,
removing the traces of such configurations from the
trace set,
leaving only the traces where $\textcolor{red}{c'} = c$.
Hence, the trace set before and after compilation is equal under attack.

\section{Implementation}

\lang{} is embedded into Scala (the host language) with its features implemented as a source-to-source macro expansion.\footnote{The implementation entails 21 Scala files, 3\,412
lines of Scala source code (non-blank, non-comment) licensed under Apache 2.0 Open Source.
The compiler phases are macros that recurse over the Scala AST:
(a) the guarding phase, (b) the ``simplifying'' phase (including MNF translation, CPS translation of terms, defunctionalisation),
and (c) the translation phase of (a subset) of Scala expressions and types to a custom intermediate representation based on Scala case classes. The intermediate representation is translated to Solidity code and passed to the Solidity compiler (solc).}

The backend generating Solidity code is well separated.
One could disable the compilation step to Solidity in the compilation pipeline,
e.g., to run distributed code on multiple JVMs instead. In this case, the “contract code” would be executed 
by one computer (the “server”), and other computers would run the “client code”.

The Scala runtime of \lang{} contains the implementation of the serialisable datatypes, portable between Scala and the EVM (fixed-size arrays, dynamic arrays, unsigned integers of length of powers of two up to 256 bit).
Our runtime wraps \lstinline!web3j!~\cite{web3j-github} (for invoking transactions and interacting with the blockchain in general),
\lstinline!headlong!~\cite{headlong-github} (for serialisation/deserialisation in the Ethereum-specific serialisation format),
as well as code to parse Solidity and Ethereum error messages and to translate them to Scala error messages.

\section{Evaluation}
\label{sec:discussion}

We evaluate \lang{} along two research questions:

\begin{enumerate}[label=RQ\arabic*]
	\setlength{\itemindent}{2em}
	\item \label{rq-expressiveness} \emph{Does \lang{} support the most common dApps scenarios?}

	\item \label{rq-efficiency} \emph{Do \lang{}'s abstractions
        affect performance?}
\end{enumerate}

\subparagraph{Case Studies and Expressiveness (\ref*{rq-expressiveness}).}
Five classes of smart contract
applications have been identified \cite{bartoletti2017empirical}: Financial, Wallet, Notary, Game, and Library.
To answer \ref{rq-expressiveness}, we implemented at least one 
case study per category in \lang{}.
We implemented an ERC-20 Token,\!%
\footnote{A study investigating all blocks mined until Sep 15, 2018~\cite{oliva2020exploratory},
	found that 72.9\,\% of the high-activity contracts are token contracts compliant to ERC-20 or ERC-721,
	with an accumulated market capitalization of 12.7\,B~USD.}
a Crowdfunding, and an Escrowing dApp as representatives of financial dApps.
We cover \emph{wallets}
by implementing a multi-signature wallet, a special type of wallet that provides a transaction voting mechanism by only executing transactions, which are signed by a fixed fraction of the set of owners.
We implemented a general-purpose \emph{notary} contract enabling
users to store arbitrary data, e.g., document hashes or images, together with a submission timestamp and the data owner.
As \emph{games}, we implemented TicTacToe (Section~\ref{sec:the-language}),
Rock-Paper-Scissors, Hangman and Chinese Checkers.
Rock-Paper-Scissors makes use of timed commitments~\cite{DBLP:conf/sp/AndrychowiczDMM14},
i.e., all parties commit to a random seed share and open it
after all commitments have been posted.
The same technique can be used to generate randomness for dApps in a secure way.
To reduce expensive code deployment, developers outsource 
commonly used 
logic to library contracts.
We demonstrate \emph{library}-based development in \lang by including a TicTacToe library
to our case studies and another 
TicTacToe dApp
which uses that library instead of deploying the logic itself.

We also implemented a state channel~\cite{MBBKM19, DziembowskiFH18, DziembowskiEFHH19}
for TicTacToe in \lang,
which is an example for the class of \emph{scalability solutions}
that have emerged more recently. State channels enable parties to move parts of their dApp
to a non-blockchain consensus system, falling-back to the blockchain in case of disputes,
thereby making the dApps more efficient where possible.

Our case studies are between 1\,K and 7.5\,K bytes
which is a representative size: %
	Smart contracts are not built for large-scale applications
	since the gas model limits the maximal computation and storage volumes and
	causes huge fees for complex applications.
	The median (average, lower quantile, upper quantile) of the bytecode size of distinct contracts deployed at the time of writing is at 4\,K (5.5\,K, 1.5\,K, 7.5\,K)~\cite{google-cloud-big-query-contract-size}. %
We 
further elaborate on the case studies
including a comparison of the lines of code in Prisma
compared to the equivalent lines in Solidity and Javascript
in Appendix \ref{appendix:case-studies}.
Our case studies demonstrate that \lang{} supports most common dApps scenarios.

\begin{figure*}
    \begin{subfigure}[b]{.24\linewidth}\centering
	\begin{tikzpicture}
	\tikzset{font=\tiny}
	\begin{axis}[xbar=0pt,
	height = 6.2cm, width = 3.8cm, bar width=4pt,
	major y tick style = transparent,
	xmajorgrids = true,
	xlabel = {Gas usage [kGas]},
	symbolic y coords={Channel, Reuse, Library, CCheckers, Hangman, TicTacToe, Rock-P-S, Notary, Wallet, Escrow, Fund, Token},
	ytick = data,
	y tick label style = { xshift=6pt, rotate=0, anchor=east },
	scaled x ticks = false,
	enlarge y limits=0.05,
	xmin=0,
	legend cell align=left,
	legend pos=north east,
	legend image code/.code={\draw[#1, draw=none] (-3pt,-2pt) rectangle (3pt,2pt);},
	reverse legend,
	axis y line*=left,
	axis x line*=bottom,
	]
	\addplot[white!70!red, fill=white!70!red, mark=none, postaction={pattern = north west lines}]
	coordinates {
		(\varMultiSigManualDeployAvg,Wallet) +- (5, 5)
		(\varTokenManualDeployAvg,Token)
		(\varCrowdfundingManualDeployAvg,Fund)
		(\varEscrowManualDeployAvg,Escrow)
		(\varTTTChannelManualDeployAvg,Channel)
		(\varNotaryManualDeployAvg,Notary)
		(\varHangmanManualDeployAvg,Hangman)
		(\varTTTManualDeployAvg,TicTacToe)
		(\varRPSManualDeployAvg,Rock-P-S)
		(\varChineseCheckersManualDeployAvg,CCheckers)
		(\varTTTLibraryManualDeployAvg,Library)
		(\varTTTViaLibManualDeployAvg,Reuse)
	};
	
	\addplot[teal, fill=teal, mark=none]
	coordinates {
		(\varMultiSigCompiledDeployAvg,Wallet)
		(\varTokenCompiledDeployAvg,Token)
		(\varCrowdfundingCompiledDeployAvg,Fund)
		(\varEscrowCompiledDeployAvg,Escrow)
		(\varTTTChannelCompiledDeployAvg,Channel)
		(\varNotaryCompiledDeployAvg,Notary)
		(\varHangmanCompiledDeployAvg,Hangman)
		(\varTTTCompiledDeployAvg,TicTacToe)
		(\varRPSCompiledDeployAvg,Rock-P-S)
		(\varChineseCheckersCompiledDeployAvg,CCheckers)
		(\varTTTLibraryCompiledDeployAvg,Library)
		(\varTTTViaLibCompiledDeployAvg,Reuse)
	};
	
	\legend{Solidity,\lang{}}
	\end{axis}
	\end{tikzpicture}%
	\caption{Gas usage per deployment.}
	\label{fig:eval-deployment-cost}
\end{subfigure}%
    \begin{subfigure}[b]{.20\linewidth}\centering
	\begin{tikzpicture}
	\tikzset{font=\tiny}
	\begin{axis}[xbar,
	height = 6.2cm, width = 3.7cm, bar width=4pt,
	major y tick style = transparent,
	xmajorgrids = true,
	xlabel = {Gas overhead [\%]},
	symbolic y coords={Ch, Re, Lib, CC, Hm, T$^3$, RPS, N, W, E, F, T},
	ytick = data,
	y tick label style = { xshift=6pt, rotate=0, anchor=east },
	scaled x ticks = false,
	enlarge y limits=0.05,
	legend cell align=left,
	legend pos=north east,
	legend image code/.code={\draw[#1, draw=none] (-3pt,-2pt) rectangle (3pt,2pt);},
	reverse legend,
	xmin=-12, xmax=12, xtick={-10,-5,...,10},
	axis y line*=left,
	axis x line*=bottom,
	]
	\addplot[lightgray!60!blue, fill=lightgray!60!blue, mark=none]
	coordinates {
		(\varMultiSigDiffDeployRelAvg,W)
		(\varTokenDiffDeployRelAvg,T)
		(\varCrowdfundingDiffDeployRelAvg,F)
		(\varEscrowDiffDeployRelAvg,E)
		(\varTTTChannelDiffDeployRelAvg,Ch)
		(\varNotaryDiffDeployRelAvg,N)
		(\varHangmanDiffDeployRelAvg,Hm)
		(\varTTTDiffDeployRelAvg,T$^3$)
		(\varRPSDiffDeployRelAvg,RPS)
		(\varChineseCheckersDiffDeployRelAvg,CC)
		(\varTTTLibraryDiffDeployRelAvg,Lib)
		(\varTTTViaLibDiffDeployRelAvg,Re)
	};
	\end{axis}
	
	\end{tikzpicture}%
	\caption{Gas overhead per deployment.}
	\label{fig:eval-deployment-overhead}
\end{subfigure}
    \begin{subfigure}[b]{.22\linewidth}\centering
	\begin{tikzpicture}
	\tikzset{font=\tiny}
	\begin{axis}[xbar=0pt,
	height = 6.2cm, width = 4cm, bar width=4pt,
	major y tick style = transparent,
	xmajorgrids = true,
	xlabel = {Gas usage [kGas]},
	symbolic y coords={Ch, Re, Lib, CC, Hm, T$^3$, RPS, N, W, E, F, T},
	ytick = data,
	y tick label style = { xshift=6pt, rotate=0, anchor=east },
	scaled x ticks = false,
	enlarge y limits=0.05,
	xmin=0,
	legend cell align=left,
	legend pos=north east,
	legend image code/.code={\draw[#1, draw=none] (-3pt,-2pt) rectangle (3pt,2pt);},
	reverse legend,
	axis y line*=left,
	axis x line*=bottom,
]
	\addplot[white!70!red, fill=white!70!red, mark=none, postaction={pattern = north west lines}]
	coordinates {
		(\varMultiSigManualExecutionAvg,W) +- (5, 5)
		(\varTokenManualExecutionAvg,T)
		(\varCrowdfundingManualExecutionAvg,F)
		(\varEscrowManualExecutionAvg,E)
		(\varTTTChannelManualExecutionAvg,Ch)
		(\varNotaryManualExecutionAvg,N)
		(\varHangmanManualExecutionAvg,Hm)
		(\varTTTManualExecutionAvg,T$^3$)
		(\varRPSManualExecutionAvg,RPS)
		(\varChineseCheckersManualExecutionAvg,CC)
		(\varTTTViaLibManualExecutionAvg,Re)
		(0,Lib)
	};
	
	\addplot[teal, fill=teal, mark=none]
	coordinates {
		(\varMultiSigCompiledExecutionAvg,W)
		(\varTokenCompiledExecutionAvg,T)
		(\varCrowdfundingCompiledExecutionAvg,F)
		(\varEscrowCompiledExecutionAvg,E)
		(\varTTTChannelCompiledExecutionAvg,Ch)
		(\varNotaryCompiledExecutionAvg,N)
		(\varHangmanCompiledExecutionAvg,Hm)
		(\varTTTCompiledExecutionAvg,T$^3$)
		(\varRPSCompiledExecutionAvg,RPS)
		(\varChineseCheckersCompiledExecutionAvg,CC)
		(\varTTTViaLibCompiledExecutionAvg,Re)
		(0,Lib)
	};
	
	\legend{Solidity,\lang{}}
	\end{axis}
	\end{tikzpicture}
	\caption{Gas usage per interaction.}
	\label{fig:eval-execution-cost}
\end{subfigure}%
    \pgfplotsset{boxplot/every whisker/.style={gray}}
\begin{subfigure}[b]{.22\linewidth}\centering
	\begin{tikzpicture}
	\tikzset{font=\tiny}
	\begin{axis}[
	height = 6.2cm, width = 4.3cm,
	xmin=-27, xmax=27, xtick={-20,-10,...,30},
	boxplot/draw direction=x,
	scaled x ticks = false,
	xmajorgrids = true,
	xlabel style={align=center},
	xlabel={Gas overhead [\%]},
	boxplot={
		draw position={0.5 + \plotnumofactualtype},
		box extend=0.3
	},
	cycle list={{black,fill=lightgray!60!blue,mark=none},{black,fill=lightgray!60!blue,mark=none}},
	ymin=0, ymax=12, ytick={0,1,2,...,12},
	y tick label as interval,
	yticklabels={Ch, Re, Lib, CC, Hm, T$^3$, RPS, N, W, E, F, T},
	y tick label style = { xshift=1pt, rotate = 0, anchor = east },
	ytick style={draw=none},
	name=boundary,
	axis y line*=left,
	axis x line*=bottom,
]

	\addplot+[mark = *, mark options = {draw=lightgray!60!blue, fill=lightgray!60!blue},
	boxplot prepared={
		upper whisker=\varTTTChannelDiffExecRelBoxUW,
		lower whisker=\varTTTChannelDiffExecRelBoxLW
	},
	] coordinates {(0,\varTTTChannelDiffExecRelAvg)};
	
	\addplot+[mark = *, mark options = {draw=lightgray!60!blue, fill=lightgray!60!blue},
	boxplot prepared={
		upper whisker=\varTTTViaLibDiffExecRelBoxUW,
		lower whisker=\varTTTViaLibDiffExecRelBoxLW
	},
	] coordinates {(0,\varTTTViaLibDiffExecRelAvg)};
	
	\addplot+[mark = *, mark options = {draw=lightgray!60!blue, fill=lightgray!60!blue},
	boxplot prepared={
		upper whisker=0,
		lower whisker=0
	},
	] coordinates {(0,0)};
	
	\addplot+[mark = *, mark options = {draw=lightgray!60!blue, fill=lightgray!60!blue},
	boxplot prepared={
		upper whisker=\varChineseCheckersDiffExecRelBoxUW,
		lower whisker=\varChineseCheckersDiffExecRelBoxLW
	},
	] coordinates {(0,\varChineseCheckersDiffExecRelAvg)};

	\addplot+[mark = *, mark options = {draw=lightgray!60!blue, fill=lightgray!60!blue},
	boxplot prepared={
		upper whisker=\varHangmanDiffExecRelBoxUW,
		lower whisker=\varHangmanDiffExecRelBoxLW
	},
	] coordinates {(0,\varHangmanDiffExecRelAvg)};

	\addplot+[mark = *, mark options = {draw=lightgray!60!blue, fill=lightgray!60!blue},
	boxplot prepared={
		upper whisker=\varTTTDiffExecRelBoxUW,
		lower whisker=\varTTTDiffExecRelBoxLW
	},
	] coordinates {(0,\varTTTDiffExecRelAvg)};

	\addplot+[mark = *, mark options = {draw=lightgray!60!blue, fill=lightgray!60!blue},
	boxplot prepared={
		upper whisker=\varRPSDiffExecRelBoxUW,
		lower whisker=\varRPSDiffExecRelBoxLW
	},
	] coordinates {(0,\varRPSDiffExecRelAvg)};

	\addplot+[mark = *, mark options = {draw=lightgray!60!blue, fill=lightgray!60!blue},
	boxplot prepared={
		upper whisker=\varNotaryDiffExecRelBoxUW,
		lower whisker=\varNotaryDiffExecRelBoxLW
	},
	] coordinates {(0,\varNotaryDiffExecRelAvg)};
	
	\addplot+[mark = *, mark options = {draw=lightgray!60!blue, fill=lightgray!60!blue},
	boxplot prepared={
		upper whisker=\varMultiSigDiffExecRelBoxUW,
		lower whisker=\varMultiSigDiffExecRelBoxLW
	},
	] coordinates {(0,\varMultiSigDiffExecRelAvg)};

	\addplot+[mark = *, mark options = {draw=lightgray!60!blue, fill=lightgray!60!blue},
	boxplot prepared={
		upper whisker=\varEscrowDiffExecRelBoxUW,
		lower whisker=\varEscrowDiffExecRelBoxLW
	},
	] coordinates {(0,\varEscrowDiffExecRelAvg)};
	
	\addplot+[mark = *, mark options = {draw=lightgray!60!blue, fill=lightgray!60!blue},
	boxplot prepared={
		upper whisker=\varCrowdfundingDiffExecRelBoxUW,
		lower whisker=\varCrowdfundingDiffExecRelBoxLW
	},
	] coordinates {(0,\varCrowdfundingDiffExecRelAvg)};

	\addplot+[mark = *, mark options = {draw=lightgray!60!blue, fill=lightgray!60!blue},
	boxplot prepared={
		upper whisker=\varTokenDiffExecRelBoxUW,
		lower whisker=\varTokenDiffExecRelBoxLW
	},
	] coordinates {(0,\varTokenDiffExecRelAvg)};

	\end{axis}

	\end{tikzpicture}%
	\caption{Gas overhead per interaction.}
	\label{fig:eval-execution-overhead}
\end{subfigure}%
    \caption{
        The cost of abstraction. Gas overhead of contracts written with \lang{} vs. Solidity. \\
        (The right plot displays minima, averages, maxima.)
    }
    \label{fig:gas-costs}
    \label{fig:gas-overhead}
    \vspace{-1em}
\end{figure*}

\subparagraph{Performance of Prisma DApps (\ref*{rq-efficiency}).}
Performance on the Ethereum blockchain is usually measured in terms of an Ethereum-specific metric called \textit{gas}.
Each instruction of the Ethereum Virtual Machine (EVM) consumes gas which needs to be paid for by the users in form of transaction fees credited to the miner.
We refer to the Ethereum yellow paper~\cite{ethereum-yellow-paper} for an overview of the gas consumption of the different EVM instructions.
To answer \ref{rq-efficiency}, we implement 
our case studies in both \lang{} and in Solidity/JavaScript
and compare their 
gas consumption.
Unlike prior work, 
we do not model a custom gas structure, but consider the real EVM gas costs \cite{wood2014ethereum}.

\emph{Experimental setup.}
We execute each case study on different inputs
to achieve different execution patterns that
cover all contract functions.
Each contract invocation that includes parameters with various sizes
(e.g., dynamic length arrays) is executed with a range of realistic inputs,
e.g., for Hangman, we consider several words (2 to 40 characters)
and different order of guesses, covering games in which the guesser wins
and those in which they lose.
\lang{} and Solidity/JavaScript implementations are executed on the same inputs.

We perform the measurements on a local setup.
As the execution
in the Ethereum VM is deterministic,
a single measurement suffices.
We set up the local Ethereum blockchain with \textit{Ganache} (Core v2.13.2)
on the latest supported hard fork (Muir Glacier).
All contracts are compiled to EVM byte code with \textit{solc} (v0.8.1, optimized on 20 runs).
We differentiate contract deployment and contract interaction.
Deployment means uploading the contract to the blockchain and initializing its state,
which occurs just once per application instance.
A single instance typically involves several contract interactions,
i.e., transactions calling public contract functions.

\emph{Results.}
Fig.~\ref{fig:gas-costs} shows the average gas consumption of contract deployment (Fig.~\ref{fig:eval-deployment-cost}) and interaction (Fig.~\ref{fig:eval-execution-cost}) as well as the relative overhead of Prisma vs.\@ Solidity/JS of deployment (Fig.~\ref{fig:eval-deployment-overhead}) and interaction (Fig.~\ref{fig:eval-execution-overhead}).
As the gas consumption of contract invocations depends heavily on the executed function,
the contract state, and the inputs, we provide the maximal, minimal and averaged overhead. 
The results show that the average gas consumption
of \lang{} is close to the one of Solidity/JS.
Our compiler achieves a deployment overhead of maximally 6\,\% (TicTacToe) or 86\,K gas (TicTacToe Channel).
The interaction overhead is below 10\,\% for all case studies
which at most amounts to 3.55\,K gas.\footnote{equals 0.59~USD based on gas price and exchange course of April 15, 2021}

\lang{}'s deployment overhead is mainly due to the automated flow control.
To guarantee correct execution, \lang{} manages a state variable
for dApps with more than one state.
The storage reserved for and the code deployed to maintain the state variable
cause a constant cost
of around 45\,K gas.
In Solidity, developers manually check whether flow control is needed
and, if so, may derive the state from existing
contract variables to avoid a state variable if possible.

The Token, Notary, Wallet and Library case studies do not require flow control:
each function can be called by any client at any time.
Hence, their overhead is small.
Escrow, Hangman and Rock-Paper-Scissors require a state variable, also in Solidity
-- which partially compensates the overhead of \lang{}'s automated flow control.
Crowdfunding, Chinese Checkers, TicTacToe (Library and Channel) do not require an explicit state variable in Solidity, as the state can be derived from the contract variables,
e.g., the number of moves.
Thus, these case studies have the largest deployment overhead.

While the average relative interaction overhead is constantly 
below 10\,\%, some contract invocations are far above,
e.g., in Crowdfunding, TicTacToe Channel, and Rock-Paper-Scissors.
Yet, case studies with such spikes also involve interactions
that are executed within the same dApp
instance with a negative overhead and amortize the costs of more costly transactions.
These deviations are also mainly due to 
automated flow control.
In EVM, setting a zero variable to some non-zero value costs more
gas (20\,K gas) than changing its value (5\,K gas)~\cite{wood2014ethereum}, and setting the value to zero saves gas.
Occupying and releasing storage via the state variable can cost or
save gas in a different way than in traditional dApps
without an explicit state variable, leading to different (and even negative) overhead in different transactions.

Besides the gas-overhead, we also consider the time-overhead of Prisma. In Ethereum, the estimated confirmation time for transactions is 3-5 minutes (assuming no congestion), which makes the number of on-chain interactions dominate the total execution time. As Prisma preserves the number of on-chain interactions, we assess the time-overhead of Prisma, if any, to be negligible.	

Note that per se it is not possible to achieve a better gas consumption in \lang{} than in Solidity -- every contract compiled from \lang{} can be implemented in Solidity.
Given the abstractions we offer beyond the traditional development approach,
and the sensibility of smart contracts
to small changes in instructions, we conclude that our abstractions come
with acceptable overhead.
We are confident that further engineering effort can eliminate the observed overhead.

\emph{Threats to validity.}
The main threat
is that the manually written code may be optimized better or worse than the code generated by the compiler.
We mitigate this threat by applying all gas optimizations, our compiler performs automatically,
to the Solidity implementations.
An external threat is that changes in the gas pricing of Ethereum may affect our evaluation.
For reproducibility, we state the Ethereum version (hard fork), we used in the paper.

\begin{table}\figurehead\centering
  \caption{Related work.}\label{tab:related-work-style}
  \small
  \begin{tabular}{lllll}
    \toprule
    Language & Encoding & Perspective  & Protocol      \\
    \midrule
    Solidity & FSM      & Local        & Assertions    \\
    Obsidian & FSM      & Local        & Type states   \\
    Nomos    & MNF      & Local        & Session types \\
    \midrule
    \lang{} & DS       & Global       & Control flow  \\
    \bottomrule
  \end{tabular}
  \vspace{-1.2em}
\end{table}

\newcommand{\myrightarrow}[1]{\mathrel{\raisebox{-4pt}{$\xrightarrow{#1}$}}}

\section{Discussion and Related Work}
\label{sec:background}

\subsection{Smart Contract Languages for Enforcing Protocols}
We compare \lang{} to Solidity,
Obsidian~\cite{Coblenz17, Coblenz19, Coblenz20}, and
Nomos~\cite{DasBHP19, Nomos21}.
We highlight these languages as those also address the 
correctness of the client--contract interactions.
Tab.~\ref{tab:related-work-style} overviews the features of the surveyed languages for
(a)~the \emph{perspective} of defining interacting parties,
(b)~the used \emph{encoding} of the interaction effects, 
and (c)~the method used to check the contract-client interaction \emph{protocol}.
Fig.~\ref{fig:solidity}, \ref{fig:type-states}, and~\ref{fig:session-types}
show code snippets in these languages,
each encoding the \emph{TicTacToe} state machine from Fig.~\ref{fig:funding-diagram}.
All three languages focus solely on the contract and do not state how clients are developed, 
hence only contract code is shown.

\begin{figure}\figurehead
  \centering
  \begin{minipage}{.6\textwidth}
  \input{paper/listings/obsidian_ttt.tex}
  \end{minipage}
  \caption{Obsidian.}\label{fig:type-states}
\end{figure}

\begin{figure}\figurehead
  \centering
  \begin{minipage}{.6\textwidth}
  \input{paper/listings/nomos_ttt.tex}
  \end{minipage}
  \caption{Nomos.}\label{fig:session-types}
\end{figure}

\begin{figure}
    \begin{mathpar}
        \inferrule[NomosR]{
            \Psi;\Gamma,(y{:}A) \vdash P :: (c : B)
        }{
            \Psi;\Gamma \vdash (y {\gets} \mathsf{recv}~c;~P) :: (c : A \multimap B)
        }
        
        \inferrule[NomosS]{
            \Psi;\Gamma \vdash P :: (c : B)
        }{
            \Psi;\Gamma,(w{:}A) \vdash (\mathsf{send}~c~w;~P) :: (c : A \otimes B)
        }
        
        \inferrule[Obsidian]{
            (\mathsf{transaction}~T~m(\overline{t.(s \text{>>} s')~x}) \{ ... \}) \in \text{members}_{t_0}
        }{
            \Delta,\overline{e{:}t.s}
            \vdash
            e_0.m(\overline{e}) : T \dashv \Delta,\overline{e{:}t.s'}
        }
    \end{mathpar}
    \caption{Excerpts of simplified Nomos and Obsidian typing rules.}\label{fig:typing}
\end{figure}

All three approaches take a \textbf{local perspective} on interacting parties:
Contract and clients are defined separately, and 
their interaction is encoded by explicit send and receive side effects.
In Solidity and Obsidian, receive corresponds to arguments and send to return values
of methods defined in the contract classes.
In Nomos, send and receive are expressed as procedures operating over a channel -- given a channel \lstinline!c!, sending and receiving is represented by explicit statements (\lstinline!x = recv c; ...! and \lstinline!send c x; ...!).

The approaches differ in the \textbf{encoding style}
of communication effects.
Solidity and Obsidian adopt an \emph{FSM-style encoding}:
Contract fields encode states, methods encode transitions.
The contract in Fig.~\ref{fig:solidity}
represents FSM states via the \lstinline!phase! field with initial state 
\lstinline!Funding! (Line~\ref{code:solidity-init-phase}).
The
\lstinline!Fund!, \lstinline!Move! and \lstinline!Payout! methods are transitions, e.g.,
\lstinline!Payout! transitions the contract into 
the final state  \lstinline!Closed! (Line~\ref{code:solidity-to-final-phase}).
The FSM-style encoding results in an implicitly-everywhere 
concurrent programming model, which is complex to reason about and unfitting for dApps
because the execution model of blockchains is inherently sequential -- all method invocations are brought into a world-wide total order.
Nomos adopts the \emph{monadic normal form~(MNF)}
via do-notation to order effects. 
While the implementation of TicTacToe  in FSM style requires three methods(\lstinline!Fund!, \lstinline!Move!, \lstinline!Payout! -- one per transition), we only need two methods in MNF-style (\lstinline!funding!, \lstinline!executing! -- one per state with multiple entry points), and a single method in DS-style (\lstinline!init!).
For instance, the sequence of states and transitions
$Executing
\hspace{-0.35em}\raiseoperator{-0.13ex}{\closeoverset[0.43ex]{Move(x, y)}\longrightarrow}
Finished
\hspace{-0.35em}\raiseoperator{-0.13ex}{\closeoverset[0.43ex]{Payout()}\longrightarrow}
Closed$
in Nomos
can be written sequentially in do-notation
by inlining the last function which only has a single entry point.
Still, do-notation can be cumbersome
(e.g., funding and executing in Nomos are separate methods
that cannot be inlined since they have multiple entry points and model loops).

All three languages require an \textbf{explicit protocol}
for governing the 
send--receive interactions, to
ensure that every send effect has a corresponding receive effect 
in an interacting -- separately defined -- party.
In Solidity, developers express the protocol via 
run-time assertions to guard 
against invoking the methods
in an incorrect order
(e.g., \lstinline!require(phase==Finished)! in Fig.~\ref{fig:solidity}, Line~\ref{code:solidity-require-funding}).
Unlike Solidity, which does not support statically checking protocol compliance,
Nomos and Obsidian employ \emph{behavioral typing} for static checks.
Deployed contracts may interact with 
third-party, potentially manipulated clients.
Compile-time checking alone
cannot provide security guarantees.
Yet, 
complementing run-time enforcement with static checks 
helps detecting cases that are guaranteed to fail at run time ahead of time.

\subparagraph{Obsidian}
Obsidian employs typestates to increase safety of contract--client communication.
Contracts define a number of typestates; 
A method call can change the typestate of an object,
and calling a 
method on a receiver that is in the wrong typestate results in a typing error.
Each method in Fig.~\ref{fig:type-states}
is annotated with the state in which it can be called, e.g., \lstinline!Payout! 
requires state
\lstinline!Finished!,
and transitions to
\lstinline!Closed! (Line~\ref{code:obsidian-payout}).

\subparagraph{Nomos} Nomos employs session types.
The session types \lstinline!Funding!, \lstinline!Executing!, \lstinline!Finished!
in Fig.~\ref{fig:session-types} encode the protocol.
Receiving a message is represented by a function type,
e.g., in the \lstinline!Funding! state,
we receive an integer \lstinline!int -> ...! (Line~\ref{code:nomos-type-funding}).
We respond by either repeating the funding (\lstinline!Funding!), or continuing
to the next state of the protocol (\lstinline!Executing!). This 
is represented by internal choice \lstinline!+{ ... }!
that takes multiple possible responses giving each of them a unique label (\lstinline!notenough! and \lstinline!enough!).
Type \lstinline!1! indicates the end of a protocol (Line~\ref{code:nomos-type-finished}).
The contract processes \lstinline!funding! (Line~\ref{code:nomos-process-funding}) and \lstinline!executing! (Line~\ref{code:nomos-process-executing}) implement the protocol.
The \lstinline!recv! operation (Line~\ref{code:nomos-receive}) takes a session-typed channel
of form \lstinline'T -> U', returns a value of type \lstinline!T!
and changes the type of the channel to \lstinline!U!.
A session type for internal choice (\lstinline!+{ ... }!),
requires the program to select
one of the offered labels (e.g., \lstinline!$$s.notenough! in Line~\ref{code:nomos-decision-notenough} and \lstinline!$$s.enough! in Line~\ref{code:nomos-decision-enough}),
e.g., in the left and right branch of a conditional statement.

\subparagraph{Type systems.}
\label{sec:typing-comparison}
We show excerpts of simplified typing rules
for Nomos and Obsidian (Fig.~\ref{fig:typing}).
Nomos rules
have the form $\Psi;\Gamma \vdash P :: (c{:}A)$.
A process $P$ offers a channel $c$ of type $A$ with
values in context $\Psi$ and channels in $\Gamma$.
We can see that variables change their type to model the linearity of session types
in the \textsc{NomosS} (and \textsc{NomosR}) rule:
Sending (and receiving) changes the type of the channel $c$
from $A {\multimap} B$ to $B$
(and $A {\otimes} B$ to $B$).
Obsidian rules have the form $\Delta \vdash e{:}t \dashv \Delta'$.
An expression $e$ has type $t$ in context $\Delta$ and changes $\Delta$ to $\Delta'$.
We can see that variables change their type on method invocation (\textsc{Obsidian}):
A method $m$ in class $t_0$
with arguments $e_i$ of type $t_i$,
returning $T$\!,
changes the type state of the arguments from $s_i$ to $s'_i$.
For \lang, instead, a standard judgement $\Gamma \vdash e:T$
suffices for communication.
Variables do not change their type.
$\mathsf{awaitCl}(p)\{b\}$ has type $T$ in context $\Gamma$
if $p$ is a predicate of $Addr$ and $b$ is a pair of $Ether$ and $T$:

\[
  \inferrule[Prisma]{
    \Gamma \vdash p:Addr \to Bool \and
    \Gamma \vdash b:Ether \times T
  }{
    \Gamma \vdash \mathsf{awaitCl} (p) \{ ~b~ \}:T
  }
\]

\subparagraph{\lang.}
As shown in Tab.~\ref{tab:related-work-style}, \lang occupies an unexplored point in the design space: 
\emph{global} instead of local perspective on interacting parties, \emph{direct style (DS)} instead of FSM or MNF 
encoding of effects, and \emph{control flow} instead of extra protocol for governing interactions.   

\lang takes a \textbf{global perspective} on interacting parties.
The parties execute the same program,
where pairs of send and receive actions that ``belong together''
are encapsulated into a single \textbf{direct-style} operation, which
is executed differently by sending and receiving parties.
Hence, dApps are modeled
as sequences and loops of send-receive-instructions 
shared by interacting parties.
Due to the global direct style perspective, it is syntactically impossible to define parties 
with mismatching send and receive pairs.
Hence, a standard System-F-like type system suffices.
The interaction protocol follows directly from 
the sequential \textbf{control flow} of interaction points in the program -- the 
compiler can automatically generate access and control guards with correctness guarantees. 
Semantically, \lang features a by-default-sequential programming model, 
intentionally making the sequential execution of methods explicit, 
including interaction effects. 

The global direct-style model also leads to improved
design of dApps:
No programmatic state management on the contract
and no so-called \textit{callback hell} \cite{Edwards09} on the client. 
The direct style is also superior to Nomos' MNF style. 
The tierless model avoids boilerplate:
Client code can directly access public contract variables, 
unlike JavaScript code, which has to access them via a function call that  
requires either an await expression or a callback.%
\footnote{Obsidian and Nomos do not provide any client design, so we can only compare to Solidity/JavaScript.}
Additionally, the developer has to implement getters for 
public variables with complex data types such as arrays.\footnote{For simple data types the getter is generated automatically.}
We provide some code measurements (lines of code and number of cross-tier control-flow calls) 
of our \lang{} and Solidity/JS dApp case studies
in Appendix~\ref{appendix:loc-evaluation}.

Finally, using one language for both the contract and the clients
naturally enables static type safety 
of values that cross the contract--client boundary:
an honest, non-compromized client cannot provide inconsistent input,
 e.g., 
with wrong number of parameters or falsely encoded types.\footnote{Recall that in dApps checking cross-tier type-safety is not a security feature but a design-time safety feature (due to the open-world assumption of the execution model of public ledgers).}
In a setting with different language stacks,
it is not possible to statically detect type mismatches in the client--contract interaction;
e.g., Solidity has a type \textit{bytes} for byte arrays, which 
does not exist in JavaScript (commonly used to implement clients of a Solidity contract).
Client developers need to encode byte arrays using 
hexadecimal string representations starting with ``0x'', otherwise
they cannot be interpreted by the contract.

\subsection{Other Related Work}
\subparagraph{Smart contract languages.}
Harz and Knottenbelt~\cite{Harz18} survey smart contract  languages, Hu et al.~\cite{survey-contracts} survey smart contract tools and systems, W\"ohrer and Zdun~\cite{WohrerZ20} give an overview of design patterns in smart contracts.
Br\"unjes and Gabbay~\cite{BrunjesG20} distinguish between imperative and functional smart contract programming.
\emph{Imperative contracts} are based on the account model; the most prominent language
is Solidity~\cite{solidity}.
\emph{Functional} ones~\cite{Chakravarty19,SeijasT18, Seijas20}  are based on EUTxO (Extended Unspent Transaction Output) model~\cite{eutxo}.  
State channels~\cite{ChakravartyCFGK20, MBBKM19, DziembowskiFH18, DziembowskiEFHH19} 
optimistically optimize contracts for the functional model.
\lang{} does not yet support compilation to state channels but we plan to treat them
as another kind of tier.

\subparagraph{Smart contracts as state machines.}
Scilla~\cite{SergeyNJ0TH19} is an automata-based compiler
for contracts.
FSolidM~\cite{MavridouL18} enables creating contracts via a graphical interface.
VeriSolid~\cite{MavridouLSD19} generates contracts from graphical models 
enriched with predicates based on computational tree logic. 
EFSM tools~\cite{SuvorovU19} generate contracts from state machines and linear temporal logic.
\lang{} avoids a separate specification but infers 
transactions and their  order from the control flow
of a multitier dApp.

\subparagraph{Analysis tools.} Durieux et al.~\cite{DurieuxFAC20} and Ferreira et al.~\cite{smartbugs} empirically validate languages and tools
and relate design patterns to security vulnerabilities, extending the 
survey
by Di Angelo and Salzer~\cite{AngeloS19}.
Our work is complementary, targeting the correctness of the distributed program flow.
For
vulnerabilities not related to program flow
	(e.g., front-running, or bad randomness), 
	developers (using Solidity/JavaScript or \lang) 
	can use the surveyed analysis tools.

\subparagraph{Multitier languages.}
Multitier programming 
was pioneered by Hop~\cite{Serrano:2006,Serrano:2016}.
Modeling a persistent 
session in client--server applications with continuations was 
mentioned by Queinnec~\cite{Queinnec00} and elaborated in 
Links~\cite{Cooper:2006:LWP:1777707.1777724,Fowler:2019}.
Eliom~\cite{Radanne:2016}
supports bidirectional client--server communication for web applications.
ScalaLoci~\cite{Weisenburger:2018} generalizes the multitier model to generic distributed architectures. 
Our work specializes it 
to the dApp domain and its specific properties.
Giallorenzo et al.~\cite{GiallorenzoMPRS21} establish interesting 
connections between multitier (subjective) and choreographic (objective) languages -- two variants of the global model.
\lang adopts the subjective view, which naturally fits the dApp domain, 
where a dominant role (contract) controls the execution
and diverts control to other parties (clients)
to collect their input.

Mashic~\cite{LuoR12} is a compiler
for a \emph{mashup} between two JavaScript programs:
the untrusted embedded (iframe) \emph{gadget(s)} and the trustworthy hosting \emph{integrator} program,
which communicate via messages.
The authors prove that the compiler guarantees
integrity and confidentiality.
More specifically, the gadget(s) cannot learn more
than what the integrator sends and,
analogously, the gadget's influence is limited to the integrators interface.
In Mashic, the two programs are separate
and the compiler checks that they communicate only via specified messages.
In contrast, in \lang{}, client and contract code are mixed.
Thus, in addition to checking that only the specified messages are used,
we can also check the interaction protocol -- expressed by the structure of the control flow of the program --
and ensure that it is followed by the target program after compilation.

Swift's~\cite{Chong07} secure automatic partitioning approach
uses information flow policies to derive placements.
Based on the policies, a constraint solver with integer programming
heuristically picks a placement such that network traffic is minimal and information flow integrity is preserved.
In contrast, placements in \lang{} are explicit to the developer.
Further, in blockchain programming, every single instruction generated
by the compiler potentially incurs high costs.
Therefore, we demonstrated that our compiler generates inexpensive programs,
whereas Swift does not consider the program's execution cost.

\subparagraph{Effectful programs and meta-programming.}
MNF and CSP are widely discussed as intermediate compiler forms~\cite{Appel1992, FlanaganSDF93, Kennedy07, MaurerDAJ17, CongOER19}.
F\# computation expressions~\cite{PetricekS14}
support control-flow operators in monadic expressions.
OCaml supports a monadic and applicative let~\cite{ocaml-applicative-let}:
more flexible than do-notation but still restricted to MNF.
Idris' !-notation~\cite{idris-bang-notation} 
inspired the GHC proposal for monadic inline binding~\cite{ghc-inline-binding}.
Scala supports effectful programs through
coroutines~\cite{scala-coroutines},
async/await~\cite{scala-async},
monadic inline binding~\cite{monadless}, 
Dsl.scala~\cite{dslscala}
and a (deprecated) compiler plugin for CPS translation~\cite{scala-continuations}.
The dotty-cps-async macro~\cite{dotty-cps-async} supports 
async/await and similar effects for the Dotty compiler.

\section{Conclusion}
\label{sec:conclusion}

We proposed \lang{}, the first global language for dApps that features direct style communication. 
Compared to the state of the art, \lang{} (a)~enables the implementation of contract and client logic within the same development unit, rendering intricacies of the heterogeneous technology stack obsolete and avoiding boilerplate code, (b)~provides support for explicitly encoding the intended program flow and (c)~reduces the risk of human failures by enforcing the intended program flow and forcing developers to specify access control.

Unlike previous work that targeted challenges in the development of dApps with advanced typing disciplines 
e.g., session types, our model does not exhibit visible side effects
and  gets away with a standard System-F-style type system.
We describe the design and the main features of \lang{} informally, define its formal semantics, formalize the compilation process and prove it correct.
We demonstrate \lang{}'s applicability via case studies and performance benchmarks.

We plan to generate state channels
-- to optimistically cost-optimize dApps --
similar to how we generate state machines from high-level logic.
Further, we believe that
our technique for deriving the communication protocol from direct-style control flow
generalizes beyond the domain of smart contracts and we will
explore its further applicability.

\begin{acks}

This work has been funded %
by \grantsponsor{BMBF}{the German Federal Ministry of Education and Research}{} iBlockchain project (BMBF No. \grantnum{BMBF}{16KIS0902}),
by \grantsponsor{DFG}{the German Research Foundation}{} (DFG, \grantnum{DFG}{SFB 1119} -- CROSSING Project), %
by the BMBF and \grantsponsor{HMWK}{the Hessian Ministry of Higher Education, Research, Science and the Arts}{} within their joint support of the \grantnum{BMBF & HMWK}{National Research Center for Applied Cybersecurity ATHENE},
by \grantsponsor{LOEWE}{the Hessian LOEWE initiative}{} (\grantnum{LOEWE}{emergenCITY}),
by \grantsponsor{SNSF}{the Swiss National Science Foundation}{} (SNSF, No. \grantnum{SNSF}{200429}),
and by \grantsponsor{unisig}{the University of St. Gallen}{} (IPF, No. \grantnum{unisig}{1031569}).

Thanks to George Zakhour for feedback on the initial draft.

\end{acks}

\bibliography{paper.bib}

\clearpage
\appendix
\label{part:Appendix}
\section{Case studies}
\label{appendix:case-studies}

This section describes the implemented case studies in detail.
Bartoletti and Pompianu~\cite{bartoletti2017empirical} identify five classes of smart contract
applications: Financial, Notary, Game,  Wallet, and Library.
Our case studies include at least one application per category
(Table~\ref{tab:categories-and-x-tier-passes}). 
In addition, we consider scalability solutions.

\emph{Financial.}
These apps include digital tokens, crowdfunding, escrowing,
advertisement, insurances and sometimes Ponzi schemes. %
 A study investigating all blocks mined until September 15th, 2018~\cite{oliva2020exploratory}, 
 found that 72.9\,\% of the high-activity contracts are token contracts compliant to ERC-20 or ERC-721, 
 which have an accumulated market capitalization of US\,\$\,12.7 billion.
We have implemented a fungible \lang{} token of the ERC-20 standard.
Further, we implemented crowdfunding and escrowing case studies.
These case studies demonstrate how to send and receive coins with \lang{}, which is 
the basic functionality of financial applications.
Other financial use cases can be implemented in \lang{} with similar techniques.

\emph{Notary.}
These contracts use the blockchain to store data immutably and persistently, e.g., to certify their ownership.
We implemented a general-purpose notary contract enabling 
users to store arbitrary data, e.g., document hashes or images, together with a submission timestamp and the data owner.
This case study demonstrates that Notaries are  expressible with \lang{}.

\emph{Games.}
We implemented TicTacToe (Section~\ref{sec:the-language}),
Rock-Paper-Scissors, Hangman and Chinese Checkers.
Hangman evolves through multiple phases and hence benefits
from the explicit control flow definition in \lang{} more than the other game case studies.
The game Chinese Checkers is more complex than the others, in regard to the number of parties, the game logic and the number of rounds, and hence, represents larger applications. 
Rock-Paper-Scissors illustrates how randomness for dApps is securely generated.
Every Ethereum transaction, including the executions of contracts, 
is deterministic -- all participants can validate the generation of new blocks.
Hence, secure randomness is negotiated among parties:
in this case, by making use of timed commitments~\cite{DBLP:conf/sp/AndrychowiczDMM14},
i.e., all parties commit to a random seed share and open it 
after all commitments have been posted.
The contract uses the sum of all seed shares as randomness.
If one party aborts prior to opening its commitment, it is penalized.
In Rock-Paper-Scissors both parties commit to their choice -- their random share -- and open it afterwards.
Other games of chance, e.g., gambling contracts, use the same technique.

\emph{Wallet.}
A wallet contract manages digital assets, i.e., cryptocurrencies and tokens, and offers additional features 
such as shared ownership or daily transaction limits.
At August 30, 2019, 3.9 M of 17.9 M (21\,\%) deployed smart contracts have been different types of wallet contracts \cite{DBLP:journals/corr/abs-2001-06909}.
Multi-signature wallets are a special type of wallet that provides a transaction voting mechanism by only executing transactions, which are signed by a fixed fraction of the set of owners.
Wallets transfer money and call other contracts in their users stead depending on run-time input, %
demonstrating calls among contracts in \lang{}.
Further, a multi-signature wallet uses built-in features of the Ethereum VM for signature validation, i.e., data encoding, hash calculation, and signature verification, showing that these
features are supported in \lang{}.

\emph{Libraries.}
As the cost of deploying a contract increases with the amount of code in Ethereum,
developers try to avoid code repetitions.
Contract inheritance does not help: child contracts simply copy the attributes 
and functions from the parent.
Yet, one can outsource commonly used logic to \emph{library contracts} 
that are deployed once and called by other contracts.
For example, the TicTacToe dApp and the TicTacToe channel in our case studies 
share some logic, e.g., to check the win condition.
To demonstrate libraries in \lang{}, we include a TicTacToe library 
to our case studies and another on-chain executed TicTacToe dApp 
which uses such library instead of deploying the logic itself.
Libraries use a call instruction similar to wallets, although 
the call target is typically known at deployment and can be hard-coded.

\emph{Scalability solutions.}
State channels~\cite{MBBKM19, DziembowskiFH18, DziembowskiEFHH19} 
are scalability solutions, which enable a fixed group of parties to move their dApp 
to a non-blockchain consensus protocol: the execution falls-back 
to the blockchain in case of disputes.
Similar to multi-signature wallets, state channels use built-in signature validation.
We implemented a state channel for TicTacToe\footnote{A general solution is a much larger engineering effort and subject of industrial projects~\cite{perun-network, state-channels}}
to demonstrate that \lang{} supports state channels.

\section{Empirical Evaluation of Design Quality}
\label{appendix:loc-evaluation}

\begin{figure}
	\figurehead
	\centering
	\caption{Categories and Cross-tier calls.}
	\label{tab:categories-and-x-tier-passes}
	\begin{tabular}{>{\footnotesize}l>{\footnotesize}l>{\footnotesize}r>{\footnotesize}r>{\footnotesize}r>{\footnotesize}r}
	\toprule
	Category       & Case study               & \hspace{-4em}Cross-tier calls & \lang{} LoC & Solidity + JavaScript LoC \\
	\midrule %
	
	Financial      & Token                    & \varTokenRC & \varTokenLoCPrisma & \varTokenLoCSol{} + \varTokenLoCJS \\
	               & Crowdfunding             & \varCrowdfundingRC & \varCrowdfundingLoCPrisma & \varCrowdfundingLoCSol{} + \varCrowdfundingLoCJS \\
	               & Escrow                   & \varEscrowRC & \varEscrowLoCPrisma & \varEscrowLoCSol{} + \varEscrowLoCJS \\[6pt] %
	               
	Wallet         & Multi-signature wallet   & \varMultiSigRC & \varMultiSigLoCPrisma & \varMultiSigLoCSol{} + \varMultiSigLoCJS \\[6pt] %
	
	Notary         & General-purpose notary   & \varNotaryRC & \varNotaryLoCPrisma & \varNotaryLoCSol{} + \varNotaryLoCJS \\[6pt] %
	
	Game           & Rock Paper Scissors      & \varRPSRC & \varRPSLoCPrisma & \varRPSLoCSol{} + \varRPSLoCJS \\
				   & TicTacToe                & \varTTTRC & \varTTTLoCPrisma & \varTTTLoCSol{} + \varTTTLoCJS \\
				   & Hangman                  & \varHangmanRC & \varHangmanLoCPrisma & \varHangmanLoCSol{} + \varHangmanLoCJS \\
	               & Chinese Checkers		  & \varChineseCheckersRC & \varChineseCheckersLoCPrisma & \varChineseCheckersLoCSol{} + \varChineseCheckersLoCJS \\[6pt]

	Library        & TicTacToe library        & -- & \varChineseCheckersLoCPrisma & \varChineseCheckersLoCSol{} + -- \\
	               & TicTacToe using library  & \varTTTViaLibRC & \varTTTViaLibLoCPrisma & \varTTTViaLibLoCSol{} + \varTTTViaLibLoCJS \\[6pt] %
	               
		Scalability & TicTacToe channel        & \varTTTChannelRC & \varTTTChannelLoCPrisma & \varTTTChannelLoCSol{} + \varTTTChannelLoCJS \\ 
	\bottomrule
\end{tabular}

\end{figure}

\begin{figure}
	\figurehead
	\centering
	\makeatletter
\newcommand\resetstackedplots{
  \makeatletter
  \pgfplots@stacked@isfirstplottrue
  \makeatother
  \addplot [forget plot,draw=none] coordinates{(0,TTT Channel) (0,Token) (0,Crowdfunding) (0,Escrow) (0,Notary) (0,Hangman) (0,TicTacToe) (0,Rock-Paper-Scissors) (0,Chinese Checkers) (0,Wallet) (0,TTT Library) (0,TTT via Library)};
}
\makeatother

\begin{tikzpicture}
\tikzset{font=\footnotesize}
\begin{axis}[
xbar stacked,
width = 0.85*\linewidth,
bar width=5pt, y=13pt,
xmin=0,
legend pos=north east,
legend entries={Solidity, JavaScript, \lang{}},
ytick=data,
symbolic y coords={Token, Crowdfunding, Escrow, Wallet, Notary, TicTacToe, Rock-Paper-Scissors, Hangman, Chinese Checkers, TTT Library, TTT via Library, TTT Channel},
y dir=reverse,
major y tick style = transparent,
xmajorgrids = true
]
\addplot +[bar shift=+2.5pt, black, fill=white!60!teal, mark=none, postaction={pattern = north west lines}] coordinates {(\varTTTChannelLoCSol,TTT Channel) (\varTokenLoCSol,Token) (\varCrowdfundingLoCSol,Crowdfunding) (\varEscrowLoCSol,Escrow) (\varNotaryLoCSol,Notary) (\varHangmanLoCSol,Hangman) (\varTTTLoCSol,TicTacToe) (\varRPSLoCSol,Rock-Paper-Scissors)
	(\varChineseCheckersLoCSol,Chinese Checkers) (\varMultiSigLoCSol,Wallet) (\varTTTLibraryLoCSol,TTT Library) (\varTTTViaLibLoCSol,TTT via Library)};

\addplot +[bar shift=+2.5pt, black, fill=white!60!olive, mark=none, postaction={pattern = north east lines}] coordinates {(\varTTTChannelLoCJS,TTT Channel) (\varTokenLoCJS,Token) (\varCrowdfundingLoCJS,Crowdfunding) (\varEscrowLoCJS,Escrow) (\varNotaryLoCJS,Notary) (\varHangmanLoCJS,Hangman) (\varTTTLoCJS,TicTacToe) (\varRPSLoCJS,Rock-Paper-Scissors)
	(\varChineseCheckersLoCJS,Chinese Checkers) (\varMultiSigLoCJS,Wallet) (\varTTTLibraryLoCJS,TTT Library) (\varTTTViaLibLoCJS,TTT via Library)};

\resetstackedplots

\addplot +[bar shift=-2.5pt, black, fill=white!40!orange, mark=none, postaction={pattern = crosshatch dots}] coordinates {(\varTTTChannelLoCPrisma,TTT Channel) (\varTokenLoCPrisma,Token) (\varCrowdfundingLoCPrisma,Crowdfunding) (\varEscrowLoCPrisma,Escrow) (\varNotaryLoCPrisma,Notary) (\varHangmanLoCPrisma,Hangman) (\varTTTLoCPrisma,TicTacToe) (\varRPSLoCPrisma,Rock-Paper-Scissors)
	(\varChineseCheckersLoCPrisma,Chinese Checkers) (\varMultiSigLoCPrisma,Wallet) (\varTTTLibraryLoCPrisma,TTT Library) (\varTTTViaLibLoCPrisma,TTT via Library)};

\end{axis}
\end{tikzpicture}%
	\caption{LOC in Solidity/JavaScript and \lang{}.}
	\label{fig:loc-comparison}
\end{figure}

In Section~\ref{sec:background}, we argued 
that with \lang{}, 
(a) we provide communication safety with a standard system-F-like type-system,
(b) the program flow can be defined explicitly and is enforced automatically,
(c) dApp developers need to master a single technology that covers both tiers, 
(d) cross-tier type-safety can be checked at compile-time, 
and (e) the code is simpler and less verbose due to reduced boilerplate code for communication and less control flow jumps.
The claims (a), (c), and (d) are a direct consequence of \lang{}'s design and do not require further evidence.
Claim (c) has been formally proven in Section~\ref{sec:formalization}.
It remains to investigate claim (e), i.e., in which extent \lang{} reduces the amount of code and error-prone control-flow jumps.

To this end, 
we implemented all case studies with equivalent functionality in \lang{} and in Solidity/JavaScript.
The JavaScript client logic is in direct style using async/await -- 
the Solidity contract needs to be implemented as a finite-state-machine.
We keep the client logic of our case studies (in both, the \lang{} and the Solidity implementation) as basic as possible, not to compare the client logic in Scala and in JavaScript but rather focus on the dApp semantics.
A complex client logic would shadow the interaction with the contract logic 
-- limited in size due to the gas semantics.

We start with comparing LOCs in the case studies (Figure~\ref{fig:loc-comparison}).
The results in Figure~\ref{fig:loc-comparison} show that case studies 
written in \lang{} require only 55\,--\,89\,\% LOC compared to 
those implemented in Solidity/JavaScript.
One exception is the standalone library, which has no client code and hence does not directly 
profit from the tierless design.

Second, we consider occurrences of explicit cross-tier control-flow calls in the Solidity/JavaScript dApps (cf. Table~\ref{tab:categories-and-x-tier-passes}),
which complicate control flow,
compared to \lang{}, where cross-tier access is seamless.
In the client implementations, 6\,--\,18\,\% of all lines trigger a contract interaction passing the control flow to the contract and waiting for the control flow to return.
From the contract code in finite-state-machine style, it is not directly apparent at which position the program flow continues, once passed back from clients to contract, i.e., which function is called by the clients next.
Direct-style code, on the other hand, ensures that the control flow of the contract always continues in the line that passed the control flow to the client by invoking an \lstinline!awaitCl! expression.

\clearpage
\section{Proofs}
\label{appendix:proofs}

\setcounter{globallemma}{0}
\setcounter{globaltheorem}{0}

We provide the definition of $\mathsf{comp}'$ and $\mathsf{comp}$ in Figure~\ref{fig:comp},
the definition for the free variables for a given term $\fv$ in Figure~\ref{fig:free-variables}
and the detailed proofs for the theorem and the lemmas on the following pages.

\vspace{3em}

\begin{figure}
$\begin{array}{llll}
comp'(\decl; \syndecl; \trampoline(\mainexp)) &~=~&
  \decl\seq \coclfn(\syndecl, id, \assert(\false), \assert(\false))\seq \trampoline(comp(\mainexp))\\
&& \begin{array}{ll} \text{where} & id \freshvar \end{array}\\[1em]

comp\left(\begin{array}{l}
  \decl; \coclfn(\syndecl, id,\\~~\subexp_{1,alt},\\~~\subexp_{2,alt});\\
  tmp \gets_s (() \to \subexp_1)\seq \subexp_2
\end{array}\right) &~=~& \left(\begin{array}{l}
  \decl; \coclfn(\syndecl, id,\\
  ~~\ifletexp (c :: \fv(() \to \subexp_1)) = id \thenexp \subexp_1 \elseexp \subexp_{1,alt}',\\
  ~~\ifletexp (c :: x :: \fv(x \to \subexp_2')) = id \thenexp \\
  ~~~~~~\assert(\this.\statevar == c \,\operatorname{\&\&}\, \this.\whovar(\this.\sendervar))\seq \\
  ~~~~~~\this.\statevar := 0\seq \subexp_2' \\
  ~~\elseexp\\
  ~~~~~~ \subexp_{2,alt}');\\
  \this.\whovar := \subexp_0\seq \this.\statevar := c\seq \\
  (\mathsf{More},~ c :: \fv(() \to \subexp_1),~ c :: \fv(x \to \subexp_2'))
\end{array}\right)
\\[2.5em]
&& \begin{array}{ll}
     \text{where} & c \freshvar\\
     \text{and}   & \decl; \coclfn(\syndecl, id, \subexp_{1,alt}', \subexp_{2,alt}'); \subexp_2' =\\
                  & ~~ \defun[\decl; \coclfn(\syndecl, id, \subexp_{1,alt}, \subexp_{2,alt}); \subexp_2]\\
   \end{array}
\\

comp\left(\begin{array}{l}
  \decl; \coclfn(\syndecl, id, \subexp_{1,alt}, \subexp_{2,alt});\\
  \letexp x = \subexp_0; ~ \subexp_1
\end{array}\right) &~=~& \left(\begin{array}{l}
  \decl; \coclfn(\syndecl, id, \subexp_{1,alt}, \subexp_{2,alt});\\
  \letexp x = \subexp_0; ~ \defun(\subexp_1)
\end{array}\right)
\\

comp\left(\begin{array}{l}
  \decl; \coclfn(\syndecl, id, \subexp_{1,alt}, \subexp_{2,alt});\\
  \subexp
\end{array}\right) &~=~& \left(\begin{array}{l}
  \decl; \coclfn(\syndecl, id, \subexp_{1,alt}, \subexp_{2,alt});\\
  \subexp
\end{array}\right)
\\[2em]

\coclfn(\syndecl, id, \subexp_{1,alt}, \subexp_{2,alt}) &~=~&
  (\clVal \clfnvar = id \to \subexp_{1,alt}); (\coVal \cofnvar = id \to \subexp_{2,alt}); b
\\
\end{array}$
\caption{$\mathsf{comp}'$ and $\mathsf{comp}$.}
\label{fig:comp}
\end{figure}

\vspace{3em}

\begin{figure}
  \input{notes/def_fv.tex}
  \caption{Free variables.}
  \label{fig:free-variables}
\end{figure}

\vspace{5em}

\clearpage{}

\begin{globaltheorem}[Secure Compilation]\emph{
For all programs $P$ over closed terms, the trace set of evaluating the program under attack
equals the trace set of evaluating the compiled program under attack,
i.e.,
\[ \forall P. ~~~ \{~ init_A(comp'(mnf'((P)))) ~\} \approx ... \approx \{~ init_A(P) ~\}  \]
}\end{globaltheorem}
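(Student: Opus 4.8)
The compilation applied in the theorem is the composite $comp' \circ mnf'$, i.e., first monadic-normal-form conversion (Fig.~\ref{fig:mnf}), which also installs the top-level trampoline wrapper, and then $comp' = \defunOuter \circ \dtlp \circ \guardp$ (Section~\ref{subsec:compilation}). The plan is to obtain the theorem as a short chain of trace-set equivalences, gluing together the per-step correctness lemmas that have already been stated — Lemma~\ref{lemma:mnf'} for the $mnf'$ link and Lemma~\ref{lemma:comp'} for the $comp'$ link — using that $\approx$ is an equivalence relation. The substance of the argument is not in the theorem itself but in those lemmas (and ultimately in Lemma~\ref{lemma:comp}), whose forward/backward convergence proofs are sketched in the paper; here I only need the bookkeeping that lets them compose.

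\textbf{Steps, in order.} First, I would record that $\approx$ is transitive (indeed an equivalence), which is immediate since by definition $T \approx S$ iff $T\,\textcolor{red}{\Downarrow} = S\,\textcolor{red}{\Downarrow}$, and equality of trace sets is transitive. Second, since $P = \decl;\syndecl;\mainexp$ ranges over \emph{closed} terms, the substitution $[x{\Mapsto}v]$ appearing in Lemmas~\ref{lemma:mnf'}–\ref{lemma:comp'} is the identity on $P$, so those lemmas specialize to plain trace-set equivalences. Third, I would check the syntactic shape: by Fig.~\ref{fig:mnf}, $mnf'(\decl;\syndecl;\mainexp) = \decl;\syndecl;\trampoline(\mnfe[(\mathsf{Done},\mainexp)])$ has exactly the form $\decl;\syndecl;\trampoline(\mainexp')$ that is the domain of $comp'$ (Fig.~\ref{fig:comp}) and of Lemma~\ref{lemma:comp'}, and $init_A$ leaves this top-level expression untouched, only turning the definition lists into write events (Fig.~\ref{fig:initialisation}). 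Fourth, I would assemble the chain
\[
\{~ init_A(comp'(mnf'(P))) ~\} ~\approx~ \{~ init_A(mnf'(P)) ~\} ~\approx~ \{~ init_A(P) ~\},
\]
where the left equivalence is Lemma~\ref{lemma:comp'} instantiated at the closed program $mnf'(P)$ (so the substitution vanishes), and the right equivalence is Lemma~\ref{lemma:mnf'} instantiated with $C := A$ (it is stated for an arbitrary client set $C$, and $A$ is a particular such set). Transitivity of $\approx$ then closes the proof.

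\textbf{Main obstacle.} For the theorem statement as such, the only delicate points are the routine ones just listed — matching $init_C$ against $init_A$, noting closedness trivializes the substitutions, and confirming that $mnf'$ produces precisely the trampoline-wrapped shape $comp'$ consumes. The genuine difficulty lives inside Lemma~\ref{lemma:comp}/\ref{lemma:comp'}: each compiler step must be shown trace-equivalent by stepping the compiled configuration forward under $\to_b^*$ and the source configuration backward under $\gets_b^*$ until they converge, then deleting the redex-position step via the induction hypothesis. The hard case, already flagged in the proof sketch, is $\guarde$ around $\downarrow_s(\subexp_0, ()\to\subexp_1)$: under the bad semantics $\downarrow_t$ lets an attacker choose both an arbitrary sender $c'$ and an arbitrary value $v'$, emitting $\msgevent(c',v')\,\wrevent(0,\this.\sendervar,c')$, which the source's $\downarrow_s$ — which pins the sender via the predicate — cannot emit for $c' \ne c$. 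The resolution is that the inserted assertion that the contract is still in the recorded state $c$ and that $\this.\whovar(\this.\sendervar)$ holds gets \emph{stuck} on exactly the configurations with $c' \ne c$, and stuck traces are excluded from $\textcolor{red}{\Downarrow}$, so the two trace sets coincide after all. I expect this pruning-by-stuck-assertion argument — together with maintaining the invariant that $\this.\statevar$ and $\this.\whovar$ hold the intended values when the continuation runs, across the later $\dtle$ and $\defun$ rewrites — to be the real work; the remaining cases ($\mnfe$, $\assoc$, $\dtle$, and the non-guard clauses of $\defun$) are straightforward rewrite-and-recurse inductions.
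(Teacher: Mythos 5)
Your proof takes essentially the same route as the paper: the theorem is discharged by chaining the two per-step lemmas via transitivity of $\approx$ — Lemma~\ref{lemma:comp'} applied to the trampoline-shaped closed program $mnf'(P)$ (so the substitution is vacuous), followed by Lemma~\ref{lemma:mnf'} with the client set instantiated to $A$ — which is exactly the paper's two-line proof, with the real work correctly attributed to the guard/stuck-assertion argument inside Lemma~\ref{lemma:comp}. Your extra bookkeeping (transitivity of $\approx$, closedness, and the shape check that $mnf'$ produces the form $comp'$ consumes) is consistent with, and slightly more explicit than, what the paper writes down.
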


\begin{proof}
\[ \begin{array}{cl}
                                                   &~ init_A(comp'(mnf'(P)))\\
~\overset{\text{Lemma }\ref{lemma:mnf'}}{\approx}  &~ init_A(mnf'(P))\\
~\overset{\text{Lemma }\ref{lemma:comp'}}{\approx} &~ init_A(P)\\
\end{array} \]
\end{proof}

\subparagraph{Extensions}

For simplicity, our definition of initialization uses a fixed set of clients.
Yet, the malicious semantics does not actually depend on the fixed set of clients,
but instead models an attacker that is in control of all clients
with the capability of sending messages from any client, not bound to the fixed set.
Hence, it is straightforward to extend the proofs to the setting of a dynamic set of clients, e.g.,
clients joining and leaving at run time.

Further, our trace equality relation defines that all programs
in the relation eventually reduce to values, filtering out programs that loop or get stuck.
Below, we outline an approach to prove trace equality for looping or stuck programs
by showing that such programs loop with the same infinite trace
or get stuck at the same trace, respectively.
To this end, we track the number of steps done via a step-indexed trace equality relation:
\[ p;q;e \Downarrow^n ~~=~ \{~ (p',v) ~|~ (p;q;e) \to^n (p';q';v) ~\}
~~~~~~~~~~~~~~~~~~~~~~~~~
   T \Downarrow^n ~=~ \bigcup_{p;q;e \in T} ~ p;q;e \Downarrow^n      \]

With this definition, we can no longer use just equality of traces as the left and right program
may take a different number of steps to produce the same events.
Instead, we move from an equality relation to a relation stating non-disagreement, which says that
-- independently of how long we run either statement --
the traces will never be in disagreement:
\[ (T \approx^n S)  ~~~\Leftrightarrow~~~  (T\Downarrow^n ~\#_{\text{set}}~ S\Downarrow^n) \]

where $\#_\text{set}$ is defined on trace sets as
\[ T \#_{\text{set}} S  ~~~\Leftrightarrow~~~
(\forall t {\in} T.~ \exists s {\in} S.~~ t ~\#_{\text{trace}}~ s) \land
(\forall s {\in} S.~ \exists t {\in} T.~~ t~ \#_{\text{trace}}~ s) \]

and $\#_\text{trace}$ on event traces as
\[ \begin{array}{llllll}
(ev,   &())     &\#_{\text{trace}}& (ev,   &tail_2) &= true \\
(ev,   &tail_1) &\#_{\text{trace}}& (ev,   &())     &= true \\
(ev_1, &tail_1) &\#_{\text{trace}}& (ev_2, &tail_2) &= false \\
(ev,   &tail_1) &\#_{\text{trace}}& (ev,   &tail_2) &= tail_1 ~\#_{\text{trace}}~ tail_2 \\
\end{array}. \]

\clearpage{}\bigskip{}
\begin{globallemma}[assoc preserves traces]
 \emph{$assoc$ is defined as a recursive term-to-term transformation on open terms,}
\emph{whereas traceset equality is defined by reducing terms to values, i.e., on closed terms.}
\emph{Since all valid programs are closed terms,}
\emph{we show that $assoc$ preserves the traceset of an open term $e$ that is closed by substitution $[x\Mapsto v]$.}

\emph{For all terms ${\subexp}$, traces ${\trace}$, traces ${\syntrace}$, values $v$, patterns $x$,}
\[ \{~{\trace};{\syntrace};~[x{\Mapsto}v]~assoc({\subexp})~\} ~\approx~ ... ~\approx~ \{~{\trace};{\syntrace};~[x{\Mapsto}v]~{\subexp}~\} \]
\end{globallemma}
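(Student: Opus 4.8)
The function $\assoc$ is the standard commuting conversion that floats a $\mathsf{let}$ out of the bound position of an enclosing $\mathsf{let}$; iterating it reassociates any $\mathsf{let}$-nesting into a single right-nested chain, and this rewriting terminates --- e.g.\ under the measure $\Phi(t)$ that sums, over all $\mathsf{let}$-nodes of $t$, the number of $\mathsf{let}$-nodes occurring in the bound sub-expression of that node, which strictly decreases at each float step (and is smaller for both recursive occurrences of $\assoc$ on the right-hand side). The plan is therefore to prove the lemma by well-founded induction on $\Phi(\mainexp)$, following the recursion of $\assoc$, and to discharge each step by the \emph{evaluate-forward / evaluate-backward until convergence} technique announced for the other lemmas: to establish $\{\trace;\syntrace;[x{\Mapsto}v]\,t_1\}\approx\{\trace;\syntrace;[x{\Mapsto}v]\,t_2\}$ it suffices to put the reduction chains of the two configurations into a trace-preserving correspondence that reaches a common configuration, whence the downstream trace sets coincide verbatim. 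In the base case $\assoc(\mainexp)=\mainexp$ and the claim is reflexivity of $\approx$.

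For the inductive case, let $\mainexp=\letexp x_0 = (\letexp x_1 = \mainexp_1\seq \mainexp_0)\seq \mainexp_2$, so that $\assoc(\mainexp)=\assoc(\letexp x_1 = \mainexp_1\seq \assoc(\letexp x_0 = \mainexp_0\seq \mainexp_2))$. I would first use the induction hypothesis to eliminate the two $\assoc$ occurrences. The inner $\assoc(\letexp x_0 = \mainexp_0\seq \mainexp_2)$ has strictly smaller $\Phi$, so the IH replaces it (under any closing substitution) by $\letexp x_0 = \mainexp_0\seq \mainexp_2$; since this subterm becomes active only after the shared prefix $\mainexp_1$ is evaluated, the replacement is permitted by congruence of $\approx$ under a $\mathsf{let}$ head, which itself follows by the forward-reduction technique --- reduce $\sigma\,\mainexp_1$ to $v_1$ per nondeterministic branch, fire \textsc{Rlet}, then invoke the IH instance at pattern $x_1{::}x$ and value $v_1{::}v$ (legitimate, as the lemma is quantified over all patterns and values, and the two agree with $[x_1{\mapsto}v_1]\,\sigma$ by $\alpha$-freshness of $x_1$). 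A second IH application to the now-outermost $\assoc$ (again of smaller $\Phi$) unfolds it, leaving the single obligation $\{\trace;\syntrace;\sigma(\letexp x_0 = (\letexp x_1 = \mainexp_1\seq \mainexp_0)\seq \mainexp_2)\}\approx\{\trace;\syntrace;\sigma(\letexp x_1 = \mainexp_1\seq \letexp x_0 = \mainexp_0\seq \mainexp_2)\}$, with $\sigma=[x{\Mapsto}v]$.

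This last equivalence --- soundness of the $\mathsf{let}$-float itself --- I would prove by lock-step evaluation. Both configurations first reduce the shared subterm $\sigma\,\mainexp_1$ under a $\mathsf{let}$-evaluation context (by \textsc{Re}, its write/message events depend only on the redex and the location $0$), so per branch they emit the same trace extension and reach the same value $v_1$; one application of \textsc{Rlet} on each side then yields a configuration whose term is $\letexp x_0 = [x_1{\mapsto}v_1]\,\sigma\,\mainexp_0\seq \sigma\,\mainexp_2$, using that substitution distributes over $\mathsf{let}$ and that $x_1\notin\fv(\mainexp_2)$ by $\alpha$-freshness. Reducing $[x_1{\mapsto}v_1]\,\sigma\,\mainexp_0$ to some $v_0$ and firing \textsc{Rlet} once more brings both sides to the identical configuration with term $[x_0{\mapsto}v_0]\,\sigma\,\mainexp_2$; the suffixes of the two computations are then literally the same, so the trace sets are equal. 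Chaining all $\approx$-steps by transitivity closes the case.

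The part I expect to be the real obstacle is not the $\mathsf{let}$-algebra --- that is the routine commuting conversion --- but making the lock-step argument rigorous \emph{in the presence of the bad semantics}: since $\mainexp_1,\mainexp_0,\mainexp_2$ may contain $\downarrow_s$/$\downarrow_t$, their evaluation is nondeterministic, so ``lock-step'' must be spelled out as a trace-preserving bijection between the reduction chains of the two configurations, and one must verify that chains which diverge or get stuck --- and hence contribute nothing to a trace set --- correspond on both sides (in particular, that neither the differently associated context nor the commuted substitution can make one side stuck while the other proceeds). The remaining friction is bookkeeping: fixing the termination measure for $\assoc$ precisely, and composing closing substitutions (with $\alpha$-renaming keeping $x_1$ fresh) so that the induction hypothesis stays applicable across the congruence step.
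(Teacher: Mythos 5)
Your proposal follows essentially the same route as the paper's proof: induction along the recursion of $\assoc$, with each case discharged by the evaluate-forward/evaluate-backward (lock-step, per nondeterministic branch) technique until the configurations coincide, using the scoping fact $x_1\notin\fv(\mainexp_2)$ and the instantiation of the lemma at an extended pattern/value pair (your $x_1{::}x$, $v_1{::}v$; the paper's composite substitution $[x_0{\Mapsto}v_0,\,x{\Mapsto}v]$) to push the induction hypothesis under a let binder. Your explicit well-founded measure is, if anything, more careful than the paper's appeal to ``induction over term structure'' (the recursion of $\assoc$ is not structural), and your lock-step for the raw let-float converges at essentially the same intermediate configuration as the paper's chain.

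One step does not work as literally justified: you first eliminate the inner $\assoc(\mathsf{let}~x_0{=}\mainexp_0;\,\mainexp_2)$ while it still sits inside the \emph{argument of the outer, meta-level} $\assoc$ application. ``Congruence of $\approx$ under a let head'' does not license that rewrite, because the enclosing context is not an object-level let --- it is $\assoc$ applied to a let --- and compatibility of $\assoc$ with $\approx$ is not available (it is essentially what is being proved). The repair is simply to swap the order of your two IH applications, which is what the paper does: first apply the IH to the whole argument $M=\mathsf{let}~x_1{=}\mainexp_1;\,\assoc(\mathsf{let}~x_0{=}\mainexp_0;\,\mainexp_2)$ to discard the outer $\assoc$ (your measure comparison already covers this instance); the remaining inner $\assoc$ then sits under a genuine let context, and your argument --- reduce $\sigma\,\mainexp_1$ per branch, fire \textsc{Rlet}, invoke the IH at the extended pattern/value --- goes through verbatim, after which your lock-step proof of the raw let-float coincides with the paper's forward/backward convergence.
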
\begin{proof} By induction over term structure.
\subparagraph{Case} $ ~~~{\subexp}~=~({\ensuremath{\mathsf{let~}}}x_1=({\ensuremath{\mathsf{let~}}}x_0={\subexp}_0;~{\subexp}_1);~{\subexp}_2) $.

We know $x_0~{\notin}~\fv({\subexp}_2)$ since ${\subexp}_2$ is not in the scope of the $x_0$ binding,
and that all identifiers are distinct, which can always be achieved by $\alpha$-renaming.

{\small
\[ ~x_0~{\notin}~\fv({\subexp}_2) \]
}

According to $\approx$, we only consider terms that reduce to a value.
Therefore, let $\phi$ be the judgement that
the term ${\subexp}_0$ closed by $[x{\Mapsto}v]$ with trace ${\trace}$
evaluates to a value $v_0$ producing trace $p_0$.

{\small
\[ ~\phi~\equiv~~~~({\trace};{\syntrace};~[x{\Mapsto}v]{\subexp}_0~~~{\to}^*~~{\trace}~p_0;{\syntrace};~v_0) \]
}

The lemma holds by the following chain of transitive relations.
We evaluate the compiled program from top to bottom ($~{\to}^*$)
and the original program from bottom to top ($~{\gets}^*$) until configurations converge.
The induction hypothesis (IH) allows the removal of $assoc$ in redex position under traceset equality ($\approx$).

{\small\begin{longtable}{CLL}
   & \vspace{5pt}{\left\{\def\arraystretch{1}\begin{array}{l}\arrayrulecolor{white!80!black}{\trace};{\syntrace};~[x{\Mapsto}v]~assoc({\subexp})
\end{array}\right.} &\hspace{-1.5em}\left.\vphantom{\left\{\def\arraystretch{1}\begin{array}{l}  p;q; [x|=>v] assoc(e) \end{array}\right\}}\right\}\\
\overset{\text{def.~}{\subexp}}{=}
   & \vspace{5pt}{\left\{\def\arraystretch{1}\begin{array}{l}\arrayrulecolor{white!80!black}{\trace};{\syntrace};~[x{\Mapsto}v]~assoc({\ensuremath{\mathsf{let~}}}x_1=({\ensuremath{\mathsf{let~}}}x_0={\subexp}_0;~{\subexp}_1);~{\subexp}_2)
\end{array}\right.} &\hspace{-1.5em}\left.\vphantom{\left\{\def\arraystretch{1}\begin{array}{l}  p;q; [x|=>v] assoc(let x1=(let x0=e0; e1); e2) \end{array}\right\}}\right\}\\
\overset{\text{def.~}assoc}{=}
   & \vspace{5pt}{\left\{\def\arraystretch{1}\begin{array}{l}\arrayrulecolor{white!80!black}{\trace};{\syntrace};~[x{\Mapsto}v]~assoc({\ensuremath{\mathsf{let~}}}x_0={\subexp}_0;~assoc({\ensuremath{\mathsf{let~}}}x_1={\subexp}_1;~{\subexp}_2))
\end{array}\right.} &\hspace{-1.5em}\left.\vphantom{\left\{\def\arraystretch{1}\begin{array}{l}  p;q; [x|=>v] assoc(let x0=e0; assoc(let x1=e1; e2)) \end{array}\right\}}\right\}\\
\overset{IH}{\approx}
   & \vspace{5pt}{\left\{\def\arraystretch{1}\begin{array}{l}\arrayrulecolor{white!80!black}{\trace};{\syntrace};~[x{\Mapsto}v]~{\ensuremath{\mathsf{let~}}}x_0={\subexp}_0;~assoc({\ensuremath{\mathsf{let~}}}x_1={\subexp}_1;~{\subexp}_2)
\end{array}\right.} &\hspace{-1.5em}\left.\vphantom{\left\{\def\arraystretch{1}\begin{array}{l}  p;q; [x|=>v] let x0=e0; assoc(let x1=e1; e2) \end{array}\right\}}\right\}\\
\overset{\text{def.~}{\Mapsto}}{=}
   & \vspace{5pt}{\left\{\def\arraystretch{1}\begin{array}{l}\arrayrulecolor{white!80!black}{\trace};{\syntrace};~{\ensuremath{\mathsf{let~}}}x_0=[x{\Mapsto}v]~{\subexp}_0;~[x{\Mapsto}v]~assoc({\ensuremath{\mathsf{let~}}}x_1={\subexp}_1;~{\subexp}_2)
\end{array}\right.} &\hspace{-1.5em}\left.\vphantom{\left\{\def\arraystretch{1}\begin{array}{l}  p;q; let x0=[x|=>v] e0; [x|=>v] assoc(let x1=e1; e2) \end{array}\right\}}\right\}\\
\overset{\phi}{~{\to}^*}
   & \vspace{5pt}{\left\{\def\arraystretch{1}\begin{array}{l}\arrayrulecolor{white!80!black}{\trace}~p_0;{\syntrace};~{\ensuremath{\mathsf{let~}}}x_0=v_0;~[x{\Mapsto}v]~assoc({\ensuremath{\mathsf{let~}}}x_1={\subexp}_1;~{\subexp}_2)~|~\forall~v_0~p_0,~\phi
\end{array}\right.} &\hspace{-1.5em}\left.\vphantom{\left\{\def\arraystretch{1}\begin{array}{l}  p p0;q; let x0=v0; [x|=>v] assoc(let x1=e1; e2) | \forall v0 p0, \phi \end{array}\right\}}\right\}\\
\overset{\textsc{Rlet}}{~{\to}}
   & \vspace{5pt}{\left\{\def\arraystretch{1}\begin{array}{l}\arrayrulecolor{white!80!black}{\trace}~p_0;{\syntrace};~[x_0{\Mapsto}v_0,~x{\Mapsto}v]~assoc({\ensuremath{\mathsf{let~}}}x_1={\subexp}_1;~{\subexp}_2)~|~\forall~v_0~p_0,~\phi
\end{array}\right.} &\hspace{-1.5em}\left.\vphantom{\left\{\def\arraystretch{1}\begin{array}{l}  p p0;q; [x0|=>v0, x|=>v] assoc(let x1=e1; e2) | \forall v0 p0, \phi \end{array}\right\}}\right\}\\
\overset{IH}{\approx}
   & \vspace{5pt}{\left\{\def\arraystretch{1}\begin{array}{l}\arrayrulecolor{white!80!black}{\trace}~p_0;{\syntrace};~[x_0{\Mapsto}v_0,~x{\Mapsto}v]~{\ensuremath{\mathsf{let~}}}x_1={\subexp}_1;~{\subexp}_2~|~\forall~v_0~p_0,~\phi
\end{array}\right.} &\hspace{-1.5em}\left.\vphantom{\left\{\def\arraystretch{1}\begin{array}{l}  p p0;q; [x0|=>v0, x|=>v] let x1=e1; e2 | \forall v0 p0, \phi \end{array}\right\}}\right\}\\
\overset{\text{def.~}{\Mapsto};~x_0~{\notin}~\fv({\subexp}_2)}{=}
   & \vspace{5pt}{\left\{\def\arraystretch{1}\begin{array}{l}\arrayrulecolor{white!80!black}{\trace}~p_0;{\syntrace};~{\ensuremath{\mathsf{let~}}}x_1=[x_0{\Mapsto}v_0,~x{\Mapsto}v]{\subexp}_1;~[x{\Mapsto}v]{\subexp}_2~|~\forall~v_0~p_0,~\phi
\end{array}\right.} &\hspace{-1.5em}\left.\vphantom{\left\{\def\arraystretch{1}\begin{array}{l}  p p0;q; let x1=[x0|=>v0, x|=>v]e1; [x|=>v]e2 | \forall v0 p0, \phi \end{array}\right\}}\right\}\\
\overset{\textsc{Rlet}}{~{\gets}}
   & \vspace{5pt}{\left\{\def\arraystretch{1}\begin{array}{l}\arrayrulecolor{white!80!black}{\trace}~p_0;{\syntrace};~{\ensuremath{\mathsf{let~}}}x_1=({\ensuremath{\mathsf{let~}}}x_0=v_0;~[x{\Mapsto}v]{\subexp}_1);~[x{\Mapsto}v]{\subexp}_2~|~\forall~v_0~p_0,~\phi
\end{array}\right.} &\hspace{-1.5em}\left.\vphantom{\left\{\def\arraystretch{1}\begin{array}{l}  p p0;q; let x1=(let x0=v0; [x|=>v]e1); [x|=>v]e2 | \forall v0 p0, \phi \end{array}\right\}}\right\}\\
\overset{\phi}{~{\gets}^*}
   & \vspace{5pt}{\left\{\def\arraystretch{1}\begin{array}{l}\arrayrulecolor{white!80!black}{\trace};{\syntrace};~{\ensuremath{\mathsf{let~}}}x_1=({\ensuremath{\mathsf{let~}}}x_0=[x{\Mapsto}v]{\subexp}_0;~[x{\Mapsto}v]{\subexp}_1);~[x{\Mapsto}v]{\subexp}_2
\end{array}\right.} &\hspace{-1.5em}\left.\vphantom{\left\{\def\arraystretch{1}\begin{array}{l}  p;q; let x1=(let x0=[x|=>v]e0; [x|=>v]e1); [x|=>v]e2 \end{array}\right\}}\right\}\\
\overset{\text{def.~}{\Mapsto}}{=}
   & \vspace{5pt}{\left\{\def\arraystretch{1}\begin{array}{l}\arrayrulecolor{white!80!black}{\trace};{\syntrace};~[x{\Mapsto}v]~{\ensuremath{\mathsf{let~}}}x_1=({\ensuremath{\mathsf{let~}}}x_0={\subexp}_0;~{\subexp}_1);~{\subexp}_2
\end{array}\right.} &\hspace{-1.5em}\left.\vphantom{\left\{\def\arraystretch{1}\begin{array}{l}  p;q; [x|=>v] let x1=(let x0=e0; e1); e2 \end{array}\right\}}\right\}\\
\overset{\text{def.~}{\subexp}}{=}
   & \vspace{5pt}{\left\{\def\arraystretch{1}\begin{array}{l}\arrayrulecolor{white!80!black}{\trace};{\syntrace};~[x{\Mapsto}v]~{\subexp}
\end{array}\right.} &\hspace{-1.5em}\left.\vphantom{\left\{\def\arraystretch{1}\begin{array}{l}  p;q; [x|=>v] e \end{array}\right\}}\right\}\\
\end{longtable}}

\subparagraph{Case} $ ~~~{\subexp}~\ne~({\ensuremath{\mathsf{let~}}}x_1=({\ensuremath{\mathsf{let~}}}x_0={\subexp}_0;~{\subexp}_1);~{\subexp}_2) $.

If ${\subexp}$ is not of nested let form, we simply apply the definition of $assoc$.

{\small\begin{longtable}{CLL}
   & \vspace{5pt}{\left\{\def\arraystretch{1}\begin{array}{l}\arrayrulecolor{white!80!black}{\trace};{\syntrace};~[x{\Mapsto}v]~assoc({\subexp})
\end{array}\right.} &\hspace{-1.5em}\left.\vphantom{\left\{\def\arraystretch{1}\begin{array}{l}  p;q; [x|=>v] assoc(e) \end{array}\right\}}\right\}\\
\overset{\text{def.~}assoc}{=}
   & \vspace{5pt}{\left\{\def\arraystretch{1}\begin{array}{l}\arrayrulecolor{white!80!black}{\trace};{\syntrace};~[x{\Mapsto}v]~{\subexp}
\end{array}\right.} &\hspace{-1.5em}\left.\vphantom{\left\{\def\arraystretch{1}\begin{array}{l}  p;q; [x|=>v] e \end{array}\right\}}\right\}\\
\end{longtable}}

\end{proof}

\clearpage{}\bigskip{}
\begin{globallemma}[mnf preserves traces]
 \emph{$mnf$ is defined as a recursive term-to-term transformation on open terms,}
\emph{whereas traceset equality is defined by reducing terms to values, i.e., on closed terms.}
\emph{Since all valid programs are closed terms,}
\emph{we show that $mnf$ preserves the traceset of an open term $e$ that is closed by substitution $[x\Mapsto v]$.}

\emph{For all terms ${\subexp}$, traces ${\trace}$, traces ${\syntrace}$, values $v$, patterns $x$,}
\[ \{~{\trace};{\syntrace};~[x{\Mapsto}v]~mnf({\subexp})~\} ~\approx~ ... ~\approx~ \{~{\trace};{\syntrace};~[x{\Mapsto}v]~{\subexp}~\} \]
\end{globallemma}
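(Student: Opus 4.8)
The plan is to prove this lemma exactly as Lemma~\ref{lemma:assoc} was proved: by induction following the recursive structure of $\mnfe$, treating the $\assoc$ wrapper as a black box via Lemma~\ref{lemma:assoc}, and closing the open term $\subexp$ under a matching $[x{\Mapsto}v]$ so that both sides reduce to values. I would run a well-founded induction on a size measure of $\subexp$ that strictly decreases along the recursive call $\mnfe$ makes on $F\,id_0$ (e.g., the number of non-atomic subterm positions, or the size of $\subexp$ once the extracted redex has been replaced by a variable), assuming throughout that the fresh binders $id_0$ are distinct from everything in scope. Each case is discharged, as in the proof of Lemma~\ref{lemma:assoc}, by a chain of transitive $\approx$-steps that evaluates the compiled term forward ($\to^*$) and the original term backward ($\gets^*$) until the configurations coincide, the induction hypothesis serving to remove the residual $\mnfe$ in redex position. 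The base case is immediate: if $\subexp$ is an atom $\Finv$ then $\mnfe(\Finv) = \Finv$, so the two configuration sets are literally equal.

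For the inductive case $\subexp = F\,e$, where $\mnfe(F\,e) = \assoc(\letexp id_0 {=} e\seq \mnfe(F\,id_0))$ with $id_0$ fresh, I would proceed in six moves. (1)~Use Lemma~\ref{lemma:assoc} to drop the $\assoc$, getting $[x{\Mapsto}v]\,\assoc(\letexp id_0 {=} e\seq \mnfe(F\,id_0)) \approx [x{\Mapsto}v]\,(\letexp id_0 {=} e\seq \mnfe(F\,id_0))$, and push the matching under the let using freshness of $id_0$. (2)~As $\approx$ permits, restrict to runs in which the closed redex $[x{\Mapsto}v]\,e$ reaches a value $v_0$ emitting trace $p_0$; this reduction takes place in redex position under the let via \textsc{Re} and the rules reducing $e$. (3)~Apply \textsc{Rlet} to substitute $id_0\mapsto v_0$, yielding $[id_0{\mapsto}v_0, x{\Mapsto}v]\,\mnfe(F\,id_0)$. (4)~Apply the induction hypothesis to $F\,id_0$, closed by $[id_0{\mapsto}v_0, x{\Mapsto}v]$, to replace $\mnfe(F\,id_0)$ by $F\,id_0$ under $\approx$. (5)~Use that $id_0$ does not occur in $F$ to rewrite $[id_0{\mapsto}v_0]\,(F\,id_0)$ as $[x{\Mapsto}v]\,F$ with its hole filled by $v_0$. (6)~Run the original closed term $[x{\Mapsto}v]\,(F\,e)$ backward: because $F\,e$ places $e$ in exactly the subterm position the semantics reduces next, $[x{\Mapsto}v]\,(F\,e)$ steps by first reducing $e$ to $v_0$ with the same trace $p_0$ and so converges with the forward chain.

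The main obstacle is step~(6): I need an auxiliary lemma saying that the frame--redex decomposition $F\,e$ used by $\mnfe$ is faithful to the evaluation order --- that a non-atomic term decomposes uniquely as $F\,e$ with $e$ the subterm reduced next --- so that hoisting $e$ into a leading let neither reorders nor drops the $\wrevent$/$\msgevent$ events emitted by $e$ relative to the rest of $F$. This amounts to relating the frame grammar $F$ (Fig.~\ref{fig:syntax2}) to the evaluation contexts $E$ and the reduction rules of Fig.~\ref{fig:dynamic-semantics2}, and observing that effects in $\minilang_*$ arise solely from reduction steps, so preserving the reduction of $e$ preserves the trace; I must also account for the fact that $\mnfe$ copies $e$ verbatim (without recursively normalising it), which is sound only because the decomposition keeps $e$ in the next-evaluated slot. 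A secondary, routine but fiddly, point is pinning down the induction measure so that the recursive call on $F\,id_0$ --- not a structural subterm of $F\,e$ --- is well-founded, and keeping the fresh-identifier bookkeeping consistent along the whole chain.
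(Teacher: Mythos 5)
Your overall skeleton is the one the paper uses for all of these lemmas: chains of transitive $\approx$ steps, restricting attention to converging runs via a judgement like $\phi$, evaluating the transformed term forward ($\to^*$) and the original backward ($\gets^*$) until the configurations coincide, invoking Lemma~\ref{lemma:assoc} to discharge the $assoc$ wrapper, and using the induction hypothesis to strip the residual $mnf$ in redex position. Where you genuinely diverge is the induction scheme. The paper's proof of Lemma~\ref{lemma:mnf} is a structural induction on the shape of $\subexp$: for an application it unfolds $mnf(\subexp_0~\subexp_1)$ as $assoc(\mathsf{let}~id_0{=}mnf(\subexp_0);~assoc(\mathsf{let}~id_1{=}mnf(\subexp_1);~id_0~id_1))$, i.e.\ recursive calls on the \emph{structural subterms}, hoisted left-to-right (the let case is analogous, and ``the other cases are proved analogously''), so well-foundedness is free and evaluation order is preserved by construction. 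You instead follow the literal clause $mnf(F~\subexp) = assoc(\mathsf{let}~id_0{=}\subexp;~mnf(F~id_0))$ of Fig.~\ref{fig:mnf}, copy the extracted redex verbatim, and recurse on $F~id_0$ under a size measure.

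The gap is your step~(6). Everything hinges on the auxiliary claim that a non-atomic term decomposes \emph{uniquely} as $F~\subexp$ with the hole of $F$ in the next-evaluated position, and the paper neither states nor proves this; as Fig.~\ref{fig:syntax2} stands it is false. The frame grammar contains both $\square~\subexp$ and $\subexp~\square$ (and likewise for pairs and for both let frames), so decomposition is not unique, and for frames such as $\subexp~\square$ with a non-value left part, or $\mathsf{let}~x{=}\subexp;~\square$, the hole is not the position reduced next (the latter choice would even hoist the body out of the scope of $x$). So you must fix a canonical, leftmost decomposition, and there your two requirements collide: for the measure to decrease you want to extract only non-atomic redexes, but then a term like $\this.\iden~\subexp_1$ forces the frame $\this.\iden~\square$ and hoists $\subexp_1$ above the read of $\this.\iden$; reads emit no events, yet their result depends on the trace, so if $\subexp_1$ writes $\iden$ the two programs read different values and their subsequent traces and final values diverge --- trace equality fails for that decomposition. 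A repair exists (treat effect-sensitive atoms such as $\this.\iden$ as extractable too, and refine the measure to count positions other than constants, identifiers and lambdas, with the usual freshness bookkeeping), but that is exactly the content you have deferred; until the decomposition lemma is stated for a fixed canonical decomposition and proved, the pivotal step of your induction is open. The paper's structural unfolding never needs any such faithfulness property, which is why its proof goes through with plain structural induction.
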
\begin{proof} By induction over term structure.
\subparagraph{Case} $ {\subexp}~=~{\subexp}_0~{\subexp}_1 $.

According to $\approx$, we only consider terms that reduce to a value.
Therefore, let $\phi_0$ be the judgement that
the term ${\subexp}_0$ closed by $[x{\Mapsto}v]$ with trace ${\trace}$
evaluates to a value $v_0$ producing trace $p_0$.
Let $\phi_1$ be the judgement that
the term ${\subexp}_1$ closed by $[x{\Mapsto}v]$ with trace ${\trace}~p_0$
evaluates to a value $v_1$ producing trace ${\trace}~p_0~p_1$.

{\small
\[ ~\phi_0~\equiv~~~~({\trace};{\syntrace};~[x{\Mapsto}v]~{\subexp}_0~~~{\to}^*~~{\trace}~p_0;{\syntrace};~v_0) \]
\[ ~\phi_1~\equiv~~~~({\trace}~p_0;{\syntrace};~[x{\Mapsto}v]~{\subexp}_1~~~{\to}^*~~{\trace}~p_0~p_1;{\syntrace};~v_1) \]
}

Let $id_0$ be the fresh identifier $mnf$ produces.

{\small
\[ ~id_0~\freshvar \]
}

The lemma holds by the following chain of transitive relations.
We evaluate the compiled program from top to bottom ($~{\to}^*$)
and the original program from bottom to top ($~{\gets}^*$) until configurations converge.
The induction hypothesis (IH) allows the removal of $mnf$ in redex position under traceset equality ($\approx$).

{\small\begin{longtable}{CLL}
   & \vspace{5pt}{\left\{\def\arraystretch{1}\begin{array}{l}\arrayrulecolor{white!80!black}{\trace};{\syntrace};~[x{\Mapsto}v]~mnf({\subexp})
\end{array}\right.} &\hspace{-1.5em}\left.\vphantom{\left\{\def\arraystretch{1}\begin{array}{l}  p;q; [x|=>v] mnf(e) \end{array}\right\}}\right\}\\
\overset{\text{def.~}{\subexp}}{=}
   & \vspace{5pt}{\left\{\def\arraystretch{1}\begin{array}{l}\arrayrulecolor{white!80!black}{\trace};{\syntrace};~[x{\Mapsto}v]~mnf({\subexp}_0~{\subexp}_1)
\end{array}\right.} &\hspace{-1.5em}\left.\vphantom{\left\{\def\arraystretch{1}\begin{array}{l}  p;q; [x|=>v] mnf(e0 e1) \end{array}\right\}}\right\}\\
\overset{\text{def.~}mnf}{=}
   & \vspace{5pt}{\left\{\def\arraystretch{1}\begin{array}{l}\arrayrulecolor{white!80!black}{\trace};{\syntrace};~[x{\Mapsto}v]~assoc({\ensuremath{\mathsf{let~}}}id_0=mnf({\subexp}_0);~assoc({\ensuremath{\mathsf{let~}}}id_1=mnf({\subexp}_1);~id_0~id_1))
\end{array}\right.} &\hspace{-1.5em}\left.\vphantom{\left\{\def\arraystretch{1}\begin{array}{l}  p;q; [x|=>v] assoc(let id0=mnf(e0); assoc(let id1=mnf(e1); id0 id1)) \end{array}\right\}}\right\}\\
\overset{\text{Lemma~}\ref{lemma:assoc}}{\approx}
   & \vspace{5pt}{\left\{\def\arraystretch{1}\begin{array}{l}\arrayrulecolor{white!80!black}{\trace};{\syntrace};~[x{\Mapsto}v]~{\ensuremath{\mathsf{let~}}}id_0=mnf({\subexp}_0);~assoc({\ensuremath{\mathsf{let~}}}id_1=mnf({\subexp}_1);~id_0~id_1)
\end{array}\right.} &\hspace{-1.5em}\left.\vphantom{\left\{\def\arraystretch{1}\begin{array}{l}  p;q; [x|=>v] let id0=mnf(e0); assoc(let id1=mnf(e1); id0 id1) \end{array}\right\}}\right\}\\
\overset{\text{def.~}{\Mapsto}}{=}
   & \vspace{5pt}{\left\{\def\arraystretch{1}\begin{array}{l}\arrayrulecolor{white!80!black}{\trace};{\syntrace};~{\ensuremath{\mathsf{let~}}}id_0=[x{\Mapsto}v]~mnf({\subexp}_0);~[x{\Mapsto}v]~assoc({\ensuremath{\mathsf{let~}}}id_1=mnf({\subexp}_1);~id_0~id_1)
\end{array}\right.} &\hspace{-1.5em}\left.\vphantom{\left\{\def\arraystretch{1}\begin{array}{l}  p;q; let id0=[x|=>v] mnf(e0); [x|=>v] assoc(let id1=mnf(e1); id0 id1) \end{array}\right\}}\right\}\\
\overset{IH}{\approx}
   & \vspace{5pt}{\left\{\def\arraystretch{1}\begin{array}{l}\arrayrulecolor{white!80!black}{\trace};{\syntrace};~{\ensuremath{\mathsf{let~}}}id_0=[x{\Mapsto}v]~{\subexp}_0;~[x{\Mapsto}v]~assoc({\ensuremath{\mathsf{let~}}}id_1=mnf({\subexp}_1);~id_0~id_1))
\end{array}\right.} &\hspace{-1.5em}\left.\vphantom{\left\{\def\arraystretch{1}\begin{array}{l}  p;q; let id0=[x|=>v] e0; [x|=>v] assoc(let id1=mnf(e1); id0 id1)) \end{array}\right\}}\right\}\\
\overset{\phi_0}{~{\to}^*}
   & \vspace{5pt}{\left\{\def\arraystretch{1}\begin{array}{l}\arrayrulecolor{white!80!black}{\trace}~p_0;{\syntrace};~{\ensuremath{\mathsf{let~}}}id_0=v_0;~[x{\Mapsto}v]~assoc({\ensuremath{\mathsf{let~}}}id_1=mnf({\subexp}_1);~id_0~id_1)~|~\forall~v_0~p_0,~{\ensuremath{\mathsf{if}}}~\phi_0
\end{array}\right.} &\hspace{-1.5em}\left.\vphantom{\left\{\def\arraystretch{1}\begin{array}{l}  p p0;q; let id0=v0; [x|=>v] assoc(let id1=mnf(e1); id0 id1) | \forall v0 p0, if \phi_0 \end{array}\right\}}\right\}\\
\overset{\textsc{Rlet}}{~{\to}}
   & \vspace{5pt}{\left\{\def\arraystretch{1}\begin{array}{l}\arrayrulecolor{white!80!black}{\trace}~p_0;{\syntrace};~[id_0{\mapsto}v_0,~x{\Mapsto}v]~assoc({\ensuremath{\mathsf{let~}}}id_1=mnf({\subexp}_1);~id_0~id_1)~|~\forall~v_0~p_0,~{\ensuremath{\mathsf{if}}}~\phi_0
\end{array}\right.} &\hspace{-1.5em}\left.\vphantom{\left\{\def\arraystretch{1}\begin{array}{l}  p p0;q; [id0|->v0, x|=>v] assoc(let id1=mnf(e1); id0 id1) | \forall v0 p0, if \phi_0 \end{array}\right\}}\right\}\\
\overset{\text{Lemma~}\ref{lemma:assoc}}{\approx}
   & \vspace{5pt}{\left\{\def\arraystretch{1}\begin{array}{l}\arrayrulecolor{white!80!black}{\trace}~p_0;{\syntrace};~[id_0{\mapsto}v_0,~x{\Mapsto}v]~{\ensuremath{\mathsf{let~}}}id_1=mnf({\subexp}_1);~id_0~id_1~|~\forall~v_0~p_0,~{\ensuremath{\mathsf{if}}}~\phi_0
\end{array}\right.} &\hspace{-1.5em}\left.\vphantom{\left\{\def\arraystretch{1}\begin{array}{l}  p p0;q; [id0|->v0, x|=>v] let id1=mnf(e1); id0 id1 | \forall v0 p0, if \phi_0 \end{array}\right\}}\right\}\\
\overset{\text{def.~}{\Mapsto}}{=}
   & \vspace{5pt}{\left\{\def\arraystretch{1}\begin{array}{l}\arrayrulecolor{white!80!black}{\trace}~p_0;{\syntrace};~{\ensuremath{\mathsf{let~}}}id_1=[id_0{\mapsto}v_0,~x{\Mapsto}v]~mnf({\subexp}_1);~v_0~id_1~|~\forall~v_0~p_0,~{\ensuremath{\mathsf{if}}}~\phi_0
\end{array}\right.} &\hspace{-1.5em}\left.\vphantom{\left\{\def\arraystretch{1}\begin{array}{l}  p p0;q; let id1=[id0|->v0, x|=>v] mnf(e1); v0 id1 | \forall v0 p0, if \phi_0 \end{array}\right\}}\right\}\\
\overset{IH}{=}
   & \vspace{5pt}{\left\{\def\arraystretch{1}\begin{array}{l}\arrayrulecolor{white!80!black}{\trace}~p_0;{\syntrace};~{\ensuremath{\mathsf{let~}}}id_1=[id_0{\mapsto}v_0,~x{\Mapsto}v]~{\subexp}_1;~v_0~id_1~|~\forall~v_0~p_0,~{\ensuremath{\mathsf{if}}}~\phi_0
\end{array}\right.} &\hspace{-1.5em}\left.\vphantom{\left\{\def\arraystretch{1}\begin{array}{l}  p p0;q; let id1=[id0|->v0, x|=>v] e1; v0 id1 | \forall v0 p0, if \phi_0 \end{array}\right\}}\right\}\\
\overset{id_0~\freshvar}{=}
   & \vspace{5pt}{\left\{\def\arraystretch{1}\begin{array}{l}\arrayrulecolor{white!80!black}{\trace}~p_0;{\syntrace};~{\ensuremath{\mathsf{let~}}}id_1=[x{\Mapsto}v]~{\subexp}_1;~v_0~id_1~|~\forall~v_0~p_0,~{\ensuremath{\mathsf{if}}}~\phi_0
\end{array}\right.} &\hspace{-1.5em}\left.\vphantom{\left\{\def\arraystretch{1}\begin{array}{l}  p p0;q; let id1=[x|=>v] e1; v0 id1 | \forall v0 p0, if \phi_0 \end{array}\right\}}\right\}\\
\overset{\phi_1}{~{\to}^*}
   & \vspace{5pt}{\left\{\def\arraystretch{1}\begin{array}{l}\arrayrulecolor{white!80!black}{\trace}~p_0~p_1;{\syntrace};~{\ensuremath{\mathsf{let~}}}id_1=v_1;~v_0~id_1~|~\forall~v_0~v_1~p_0~p_1,~{\ensuremath{\mathsf{if}}}~\phi_0,~\phi_1
\end{array}\right.} &\hspace{-1.5em}\left.\vphantom{\left\{\def\arraystretch{1}\begin{array}{l}  p p0 p1;q; let id1=v1; v0 id1 | \forall v0 v1 p0 p1, if \phi_0, \phi_1 \end{array}\right\}}\right\}\\
\overset{\textsc{Rlet}}{~{\to}}
   & \vspace{5pt}{\left\{\def\arraystretch{1}\begin{array}{l}\arrayrulecolor{white!80!black}{\trace}~p_0~p_1;{\syntrace};~v_0~v_1~|~\forall~v_0~v_1~p_0~p_1,~{\ensuremath{\mathsf{if}}}~\phi_0,~\phi_1
\end{array}\right.} &\hspace{-1.5em}\left.\vphantom{\left\{\def\arraystretch{1}\begin{array}{l}  p p0 p1;q; v0 v1 | \forall v0 v1 p0 p1, if \phi_0, \phi_1 \end{array}\right\}}\right\}\\
\overset{\phi_1}{~{\gets}^*}
   & \vspace{5pt}{\left\{\def\arraystretch{1}\begin{array}{l}\arrayrulecolor{white!80!black}{\trace}~p_0;{\syntrace};~v_0~[x{\Mapsto}v]~{\subexp}_1~|~\forall~v_0~p_0,~{\ensuremath{\mathsf{if}}}~\phi_0
\end{array}\right.} &\hspace{-1.5em}\left.\vphantom{\left\{\def\arraystretch{1}\begin{array}{l}  p p0;q; v0 [x|=>v] e1 | \forall v0 p0, if \phi_0 \end{array}\right\}}\right\}\\
\overset{\phi_0}{~{\gets}^*}
   & \vspace{5pt}{\left\{\def\arraystretch{1}\begin{array}{l}\arrayrulecolor{white!80!black}{\trace};{\syntrace};~([x{\Mapsto}v]~{\subexp}_0)~[x{\Mapsto}v]~{\subexp}_1
\end{array}\right.} &\hspace{-1.5em}\left.\vphantom{\left\{\def\arraystretch{1}\begin{array}{l}  p;q; ([x|=>v] e0) [x|=>v] e1 \end{array}\right\}}\right\}\\
\overset{\text{def.~}{\Mapsto}}{=}
   & \vspace{5pt}{\left\{\def\arraystretch{1}\begin{array}{l}\arrayrulecolor{white!80!black}{\trace};{\syntrace};~[x{\Mapsto}v]~{\subexp}_0~{\subexp}_1
\end{array}\right.} &\hspace{-1.5em}\left.\vphantom{\left\{\def\arraystretch{1}\begin{array}{l}  p;q; [x|=>v] e0 e1 \end{array}\right\}}\right\}\\
\overset{\text{def.~}{\subexp}}{=}
   & \vspace{5pt}{\left\{\def\arraystretch{1}\begin{array}{l}\arrayrulecolor{white!80!black}{\trace};{\syntrace};~[x{\Mapsto}v]~{\subexp}
\end{array}\right.} &\hspace{-1.5em}\left.\vphantom{\left\{\def\arraystretch{1}\begin{array}{l}  p;q; [x|=>v] e \end{array}\right\}}\right\}\\
\end{longtable}}

\subparagraph{Case} $ ~~{\subexp}~=~{\ensuremath{\mathsf{let~}}}id_0={\subexp}_0;~{\subexp}_1 $.

According to $\approx$, we only consider terms that reduce to a value.
Therefore, let $\phi$ be the judgement that
the term ${\subexp}_0$ closed by $[x{\Mapsto}v]$ with trace ${\trace}$
evaluates to a value $v_0$ producing trace $p_0$.

{\small
\[ ~\phi_0~\equiv~~~~({\trace};{\syntrace};~[x{\Mapsto}v]~{\subexp}_0~~~{\to}^*~~{\trace}~p_0;{\syntrace};~v_0) \]
}

The lemma holds by the following chain of transitive relations.
We evaluate the compiled program from top to bottom ($~{\to}^*$)
and the original program from bottom to top ($~{\gets}^*$) until configurations converge.
The induction hypothesis (IH) allows the removal of $mnf$ in redex position under traceset equality ($\approx$).

{\small\begin{longtable}{CLL}
   & \vspace{5pt}{\left\{\def\arraystretch{1}\begin{array}{l}\arrayrulecolor{white!80!black}{\trace};{\syntrace};~[x{\Mapsto}v]~mnf({\subexp})
\end{array}\right.} &\hspace{-1.5em}\left.\vphantom{\left\{\def\arraystretch{1}\begin{array}{l}  p;q; [x|=>v] mnf(e) \end{array}\right\}}\right\}\\
\overset{\text{def.~}{\subexp}}{=}
   & \vspace{5pt}{\left\{\def\arraystretch{1}\begin{array}{l}\arrayrulecolor{white!80!black}{\trace};{\syntrace};~[x{\Mapsto}v]~mnf({\ensuremath{\mathsf{let~}}}id_0=v_0;~{\subexp}_1)
\end{array}\right.} &\hspace{-1.5em}\left.\vphantom{\left\{\def\arraystretch{1}\begin{array}{l}  p;q; [x|=>v] mnf(let id0=v0; e1) \end{array}\right\}}\right\}\\
\overset{\text{def.~}mnf}{=}
   & \vspace{5pt}{\left\{\def\arraystretch{1}\begin{array}{l}\arrayrulecolor{white!80!black}{\trace};{\syntrace};~[x{\Mapsto}v]~assoc({\ensuremath{\mathsf{let~}}}id_0=mnf({\subexp}_0);~mnf({\subexp}_1))
\end{array}\right.} &\hspace{-1.5em}\left.\vphantom{\left\{\def\arraystretch{1}\begin{array}{l}  p;q; [x|=>v] assoc(let id0=mnf(e0); mnf(e1)) \end{array}\right\}}\right\}\\
\overset{\text{Lemma~}\ref{lemma:assoc}}{\approx}
   & \vspace{5pt}{\left\{\def\arraystretch{1}\begin{array}{l}\arrayrulecolor{white!80!black}{\trace};{\syntrace};~[x{\Mapsto}v]~{\ensuremath{\mathsf{let~}}}id_0=mnf({\subexp}_0);~mnf({\subexp}_1)
\end{array}\right.} &\hspace{-1.5em}\left.\vphantom{\left\{\def\arraystretch{1}\begin{array}{l}  p;q; [x|=>v] let id0=mnf(e0); mnf(e1) \end{array}\right\}}\right\}\\
\overset{\text{def.~}{\Mapsto}}{=}
   & \vspace{5pt}{\left\{\def\arraystretch{1}\begin{array}{l}\arrayrulecolor{white!80!black}{\trace};{\syntrace};~{\ensuremath{\mathsf{let~}}}id_0=[x{\Mapsto}v]~mnf({\subexp}_0);~[x{\Mapsto}v]~mnf({\subexp}_1)
\end{array}\right.} &\hspace{-1.5em}\left.\vphantom{\left\{\def\arraystretch{1}\begin{array}{l}  p;q; let id0=[x|=>v] mnf(e0); [x|=>v] mnf(e1) \end{array}\right\}}\right\}\\
\overset{IH}{\approx}
   & \vspace{5pt}{\left\{\def\arraystretch{1}\begin{array}{l}\arrayrulecolor{white!80!black}{\trace};{\syntrace};~{\ensuremath{\mathsf{let~}}}id_0=[x{\Mapsto}v]~{\subexp}_0;~[x{\Mapsto}v]~mnf({\subexp}_1)
\end{array}\right.} &\hspace{-1.5em}\left.\vphantom{\left\{\def\arraystretch{1}\begin{array}{l}  p;q; let id0=[x|=>v] e0; [x|=>v] mnf(e1) \end{array}\right\}}\right\}\\
\overset{\phi_0}{~{\to}^*}
   & \vspace{5pt}{\left\{\def\arraystretch{1}\begin{array}{l}\arrayrulecolor{white!80!black}{\trace}~p_0;{\syntrace};~{\ensuremath{\mathsf{let~}}}id_0=v_0;~[x{\Mapsto}v]~mnf({\subexp}_1)~|~\forall~v_0~p_0,~{\ensuremath{\mathsf{if}}}~\phi_0
\end{array}\right.} &\hspace{-1.5em}\left.\vphantom{\left\{\def\arraystretch{1}\begin{array}{l}  p p0;q; let id0=v0; [x|=>v] mnf(e1) | \forall v0 p0, if \phi_0 \end{array}\right\}}\right\}\\
\overset{\textsc{Rlet}}{~{\to}}
   & \vspace{5pt}{\left\{\def\arraystretch{1}\begin{array}{l}\arrayrulecolor{white!80!black}{\trace}~p_0;{\syntrace};~[id_0{\mapsto}v_0,~x{\Mapsto}v]~mnf({\subexp}_1)~|~\forall~v_0~p_0,~{\ensuremath{\mathsf{if}}}~\phi_0
\end{array}\right.} &\hspace{-1.5em}\left.\vphantom{\left\{\def\arraystretch{1}\begin{array}{l}  p p0;q; [id0|->v0, x|=>v] mnf(e1) | \forall v0 p0, if \phi_0 \end{array}\right\}}\right\}\\
\overset{IH}{\approx}
   & \vspace{5pt}{\left\{\def\arraystretch{1}\begin{array}{l}\arrayrulecolor{white!80!black}{\trace}~p_0;{\syntrace};~[id_0{\mapsto}v_0,~x{\Mapsto}v]~{\subexp}_1~|~\forall~v_0~p_0,~{\ensuremath{\mathsf{if}}}~\phi_0
\end{array}\right.} &\hspace{-1.5em}\left.\vphantom{\left\{\def\arraystretch{1}\begin{array}{l}  p p0;q; [id0|->v0, x|=>v] e1 | \forall v0 p0, if \phi_0 \end{array}\right\}}\right\}\\
\overset{\textsc{Rlet}}{~{\gets}}
   & \vspace{5pt}{\left\{\def\arraystretch{1}\begin{array}{l}\arrayrulecolor{white!80!black}{\trace}~p_0;{\syntrace};~[x{\Mapsto}v]~{\ensuremath{\mathsf{let~}}}id_0=v_0;~{\subexp}_1~|~\forall~v_0~p_0,~{\ensuremath{\mathsf{if}}}~\phi_0
\end{array}\right.} &\hspace{-1.5em}\left.\vphantom{\left\{\def\arraystretch{1}\begin{array}{l}  p p0;q; [x|=>v] let id0=v0; e1 | \forall v0 p0, if \phi_0 \end{array}\right\}}\right\}\\
\overset{\phi_0}{~{\gets}^*}
   & \vspace{5pt}{\left\{\def\arraystretch{1}\begin{array}{l}\arrayrulecolor{white!80!black}{\trace};{\syntrace};~[x{\Mapsto}v]~{\ensuremath{\mathsf{let~}}}id_0={\subexp}_0;~{\subexp}_1
\end{array}\right.} &\hspace{-1.5em}\left.\vphantom{\left\{\def\arraystretch{1}\begin{array}{l}  p;q; [x|=>v] let id0=e0; e1 \end{array}\right\}}\right\}\\
\overset{\text{def.~}{\subexp}}{=}
   & \vspace{5pt}{\left\{\def\arraystretch{1}\begin{array}{l}\arrayrulecolor{white!80!black}{\trace};{\syntrace};~[x{\Mapsto}v]~{\subexp}
\end{array}\right.} &\hspace{-1.5em}\left.\vphantom{\left\{\def\arraystretch{1}\begin{array}{l}  p;q; [x|=>v] e \end{array}\right\}}\right\}\\
\end{longtable}}

\subparagraph{Case} $ \text{The~other~cases~of~${\subexp}$~are~proved~analogously} $.

\end{proof}

\clearpage{}\bigskip{}
\begin{globallemma}[mnf' preserves trace]
 \emph{$mnf'$ is defined on programs.}
\emph{To evaluate a program, it is initialized with a set of clients $A$.}
\emph{$mnf'$ preserves the traceset of (closed) programs $P$ for any set of clients $A$.}

\emph{For all  $P$,}
\[ \{~init_A(mnf'(P))~\} ~\approx~ ... ~\approx~ \{~init_A(P)~\} \]
\end{globallemma}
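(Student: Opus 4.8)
The plan is to reduce Lemma~\ref{lemma:mnf'} to Lemma~\ref{lemma:mnf} by unfolding the definition of $\mnfp$ and the definition of $init_A$. Recall that $\mnfp[\decl;\syndecl;\mainexp] = \decl;\syndecl;\trampoline(\mnfe[(\mathsf{Done},\mainexp)])$, so $\mnfp$ leaves the definitions $\decl$ and $\syndecl$ untouched and only rewrites the main expression. Since $init_A$ turns each definition into a write event in the initial trace and carries the main expression over unchanged, the only difference between $init_A(\mnfp(P))$ and $init_A(P)$ is that the main expression $\mainexp$ has been replaced by $\trampoline(\mnfe[(\mathsf{Done},\mainexp)])$, while the initial traces $\trace$ and $\syntrace$ are identical. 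So the goal becomes showing
\[ \{\, \trace;\syntrace;0;\trampoline(\mnfe[(\mathsf{Done},\mainexp)]) \,\} \approx \{\, \trace;\syntrace;0;\mainexp \,\}. \]

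First I would handle the trampoline wrapper. Observe that since $P$ is a closed program, $\mainexp$ is a closed term; by the reduction rules, if $\trace;\syntrace;0;\mainexp \to^* \trace';\syntrace';0;v$ for some value $v$, then $\trace;\syntrace;0;\mnfe[(\mathsf{Done},\mainexp)]$ should reduce (using Lemma~\ref{lemma:mnf} with the empty substitution $[x\Mapsto v]$ instantiated trivially, i.e.\ on the already-closed term $(\mathsf{Done},\mainexp)$) to $\trace';\syntrace';0;(\mathsf{Done},v)$, after which rule \textsc{Rtd} fires: $\trace';\syntrace';0;\trampoline(\mathsf{Done},v) \to \trace';\syntrace';0;v$. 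This matches the trace of the unwrapped program exactly. Conversely, any terminating reduction of the wrapped program must pass through a configuration of the form $\trace';\syntrace';0;\trampoline(\mathsf{Done},v)$ — here I would argue that $\mnfe$ of $(\mathsf{Done},\mainexp)$ cannot itself introduce any $\mathsf{More}$ constructors (it only introduces let-bindings and an outer pair, never a $\bind$), so the trampoline never takes a \textsc{Rtm} step and the only way to exit is \textsc{Rtd}. The technical content here is that Lemma~\ref{lemma:mnf} must be applied to the term $(\mathsf{Done},\mainexp)$: I would need a small auxiliary observation that $\mnfe$ distributes over the pair, i.e.\ $\mnfe[(\mathsf{Done},\mainexp)]$ reduces in a way that threads $\mainexp$'s evaluation and then pairs the result with $\mathsf{Done}$, which follows by applying $\mnfe[F\,e]$ with the frame $F = (\mathsf{Done}::\square::())$-style decomposition and then the MNF lemma on the subterm.

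The detailed bookkeeping is: write $init_A(\decl;\syndecl)$ as the event prefix $(\trace_0;\syntrace_0)$ it generates; then $init_A(\mnfp(P)) = \trace_0;\syntrace_0;0;\trampoline(\mnfe[(\mathsf{Done},\mainexp)])$ and $init_A(P) = \trace_0;\syntrace_0;0;\mainexp$. I would then run the compiled side forward ($\to^*$) through the $\mnfe$ reductions (justified by Lemma~\ref{lemma:mnf}) and the final \textsc{Rtd}, run the original side forward, and observe the configurations converge with identical traces, under both the good and the bad semantics — for the bad semantics the argument is the same because $\mnf$ introduces no $\downarrow_s$ or $\downarrow_t$ and hence no new points of nondeterminism, so the bad reductions of either side are in lockstep with their good counterparts wherever they differ.

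The main obstacle I anticipate is the careful treatment of the interaction between $\trampoline$ and $\mnfe$: specifically, confirming that $\mnfe[(\mathsf{Done},\mainexp)]$ never produces a $\mathsf{More}$ node and that the trampoline therefore makes exactly one step (\textsc{Rtd}) after $\mainexp$ has been evaluated in place. This requires inspecting the shape of terms $\mnfe$ produces — it yields a flat chain of let-bindings followed by an atom, and since $\mainexp \in \minilang_s$ contains no $\bind$, no $\bind$ can appear, so the chain contains only ordinary let-bindings; once those are consumed by \textsc{Rlet} steps, the residual is $(\mathsf{Done},v)$ and \textsc{Rtd} applies. A secondary subtlety is that Lemma~\ref{lemma:mnf} is phrased for an open term closed by a substitution $[x\Mapsto v]$, whereas here I want to apply it to an already-closed term; I would instantiate it with the trivial pattern/value pair (or equivalently note that the empty-substitution instance gives exactly trace preservation for closed terms), and since $\approx$ is transitive I can chain $\{init_A(\mnfp(P))\} \approx \{\trace_0;\syntrace_0;0;\trampoline(\mathsf{Done},\mainexp)\} \approx \{\trace_0;\syntrace_0;0;\mainexp\} = \{init_A(P)\}$.
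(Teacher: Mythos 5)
Your proposal is correct and follows essentially the same route as the paper's proof: unfold $\mnfp$ and $init_A$, use Lemma~\ref{lemma:mnf} to discharge the inner $\mnfe$ under the trampoline, evaluate the main expression forward and apply \textsc{Rtd}, then run the original program backward until the configurations converge. Your extra worry about $\mathsf{More}$ nodes is moot once Lemma~\ref{lemma:mnf} has replaced $\mnfe[(\mathsf{Done},\mainexp)]$ by $(\mathsf{Done},\mainexp)$, which is exactly how the paper (and your final chain) proceeds.
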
\begin{proof} By induction over term structure.
\subparagraph{Case} $ ~~P~=~({\decl};{\syndecl};~{\subexp}_0) $.

Initializing the definitions ${\decl};{\syndecl}$ with $A$ produces the trace ${\trace}$ and the state ${\syntrace}$.

{\small
\[ ~init_A({\decl};{\syndecl})~=~{\trace};{\syntrace} \]
}

According to $\approx$, we only consider terms that reduce to a value.
Therefore, let $\phi$ be the judgement that
the term ${\subexp}_0$ closed by $[x{\Mapsto}v]$ in trace ${\trace}$
produces a value $v_0$ and trace $p_0$.

{\small
\[ ~\phi~\equiv~~~~({\trace};{\syntrace};~{\subexp}_0~~~{\to}^*~~{\trace}~p_0;{\syntrace};~v_0) \]
}

The lemma holds by the following chain of transitive relations.
We evaluate the compiled program from top to bottom ($~{\to}^*$)
and the original program from bottom to top ($~{\gets}^*$) until configurations converge,
using Lemma \ref{lemma:mnf}.

{\small\begin{longtable}{CLL}
   & \vspace{5pt}{\left\{\def\arraystretch{1}\begin{array}{l}\arrayrulecolor{white!80!black}init_A(mnf'(P))
\end{array}\right.} &\hspace{-1.5em}\left.\vphantom{\left\{\def\arraystretch{1}\begin{array}{l}  init_A(mnf'(P)) \end{array}\right\}}\right\}\\
\overset{\text{def.~}P}{=}
   & \vspace{5pt}{\left\{\def\arraystretch{1}\begin{array}{l}\arrayrulecolor{white!80!black}init_A(mnf'({\decl};{\syndecl};~{\subexp}_0))
\end{array}\right.} &\hspace{-1.5em}\left.\vphantom{\left\{\def\arraystretch{1}\begin{array}{l}  init_A(mnf'(d;b; e0)) \end{array}\right\}}\right\}\\
\overset{\text{def.~}mnf'}{=}
   & \vspace{5pt}{\left\{\def\arraystretch{1}\begin{array}{l}\arrayrulecolor{white!80!black}init_A({\decl};{\syndecl};~{\trampoline}(mnfe({\ensuremath{\mathsf{Done}}}({\subexp}_0))))
\end{array}\right.} &\hspace{-1.5em}\left.\vphantom{\left\{\def\arraystretch{1}\begin{array}{l}  init_A(d;b; tramp(mnfe(Done(e0)))) \end{array}\right\}}\right\}\\
\overset{\text{def.~}init_A}{=}
   & \vspace{5pt}{\left\{\def\arraystretch{1}\begin{array}{l}\arrayrulecolor{white!80!black}{\trace};{\syntrace};~{\trampoline}(mnfe({\ensuremath{\mathsf{Done}}}({\subexp}_0)))
\end{array}\right.} &\hspace{-1.5em}\left.\vphantom{\left\{\def\arraystretch{1}\begin{array}{l}  p;q; tramp(mnfe(Done(e0))) \end{array}\right\}}\right\}\\
\overset{\text{Lemma~}\ref{lemma:mnf}}{\approx}
   & \vspace{5pt}{\left\{\def\arraystretch{1}\begin{array}{l}\arrayrulecolor{white!80!black}{\trace};{\syntrace};~{\trampoline}({\ensuremath{\mathsf{Done}}}({\subexp}_0))
\end{array}\right.} &\hspace{-1.5em}\left.\vphantom{\left\{\def\arraystretch{1}\begin{array}{l}  p;q; tramp(Done(e0)) \end{array}\right\}}\right\}\\
\overset{\phi}{~{\to}^*}
   & \vspace{5pt}{\left\{\def\arraystretch{1}\begin{array}{l}\arrayrulecolor{white!80!black}{\trace}~p_0;{\syntrace};~{\trampoline}({\ensuremath{\mathsf{Done}}}(v_0))~|~\forall~v_0~p_0,~{\ensuremath{\mathsf{if}}}~\phi
\end{array}\right.} &\hspace{-1.5em}\left.\vphantom{\left\{\def\arraystretch{1}\begin{array}{l}  p p0;q; tramp(Done(v0)) | \forall v0 p0, if \phi \end{array}\right\}}\right\}\\
\overset{\textsc{Rdone}}{~{\to}}
   & \vspace{5pt}{\left\{\def\arraystretch{1}\begin{array}{l}\arrayrulecolor{white!80!black}{\trace}~p_0;{\syntrace};~v_0~|~\forall~v_0~p_0,~{\ensuremath{\mathsf{if}}}~\phi
\end{array}\right.} &\hspace{-1.5em}\left.\vphantom{\left\{\def\arraystretch{1}\begin{array}{l}  p p0;q; v0 | \forall v0 p0, if \phi \end{array}\right\}}\right\}\\
\overset{\phi}{~{\gets}^*}
   & \vspace{5pt}{\left\{\def\arraystretch{1}\begin{array}{l}\arrayrulecolor{white!80!black}{\trace};{\syntrace};~{\subexp}_0
\end{array}\right.} &\hspace{-1.5em}\left.\vphantom{\left\{\def\arraystretch{1}\begin{array}{l}  p;q; e0 \end{array}\right\}}\right\}\\
\overset{\text{def.~}init_A}{=}
   & \vspace{5pt}{\left\{\def\arraystretch{1}\begin{array}{l}\arrayrulecolor{white!80!black}init_A({\decl};{\syndecl};~{\subexp}_0)
\end{array}\right.} &\hspace{-1.5em}\left.\vphantom{\left\{\def\arraystretch{1}\begin{array}{l}  init_A(d;b; e0) \end{array}\right\}}\right\}\\
\overset{\text{def.~}P}{=}
   & \vspace{5pt}{\left\{\def\arraystretch{1}\begin{array}{l}\arrayrulecolor{white!80!black}init_A(P)
\end{array}\right.} &\hspace{-1.5em}\left.\vphantom{\left\{\def\arraystretch{1}\begin{array}{l}  init_A(P) \end{array}\right\}}\right\}\\
\end{longtable}}

\end{proof}

\clearpage{}\bigskip{}
\begin{globallemma}[comp preserves traces]
 \emph{$comp$ is defined on programs.}
\emph{To evaluate a program, it is initialized with a set of clients $A$.}
\emph{$comp$ preserves the traceset of (closed) programs $P$ for any set of clients $A$.}

\emph{For all definitions ${\syndecl}$, definitions ${\decl}$, terms ${\subexp}$, values $v$, patterns $x$,}
\[ \{~[x{\Mapsto}v]~init_A(comp({\decl};{\syndecl};~{\trampoline}({\subexp})))~\} ~\approx~ ... ~\approx~ \{~init_A({\decl};{\syndecl};~{\trampoline}([x{\Mapsto}v]~{\subexp}))~\} \]
\end{globallemma}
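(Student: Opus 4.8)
The plan is to prove Lemma~\ref{lemma:comp} by structural induction on $\mainexp$, following exactly the same pattern as the proofs of Lemmas~\ref{lemma:assoc}--\ref{lemma:mnf'}: for each syntactic form of $\mainexp$, exhibit a chain of transitive relations --- unfolding the definition of $\mathit{comp}$ (Fig.~\ref{fig:comp}), stepping the compiled configuration forward with $\textcolor{red}{\to_b^*}$, stepping the source configuration backward with $\textcolor{red}{\gets_b^*}$, and invoking the induction hypothesis to strip $\mathit{comp}$ out of redex position under $\textcolor{red}{\approx}$ --- until the two configurations converge. Since $\mathit{comp} = \defun \circ \dtle \circ \guarde$ but Fig.~\ref{fig:comp} gives the fused definition acting directly on the chain of let-bindings / monadic binds that $\mnfe$ produces, I would work with that fused form. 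The three cases to treat are: (1)~$\mainexp = \letexp x = \subexp_0\seq \subexp_1$, where $\mathit{comp}$ merely recurses into $\subexp_1$; (2)~the final-expression case $\mainexp = \subexp$, where $\mathit{comp}$ is the identity; and (3)~the monadic-bind case $\mainexp = tmp \gets_s (\subexp_0, () \to \subexp_1)\seq \subexp_2$, which is where all the real content lies.

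Cases (1) and (2) are routine. For case (2), $\mathit{comp}$ is the identity on the expression, so both sides are literally the same configuration and the chain is trivial. For case (1), I would unfold $\mathit{comp}[\letexp x = \subexp_0\seq \subexp_1] = \letexp x = \subexp_0\seq \defun(\subexp_1)$, push the closing substitution inside (using $\fv$ bookkeeping and $\alpha$-renaming so $x$ is fresh), evaluate $\subexp_0$ to a value $v_0$ on both sides producing identical trace fragments (this requires the judgement $\phi_0 \equiv (\trace;\syntrace;0;\,[x\Mapsto v]\subexp_0 \to^* \trace\,p_0;\syntrace;0;\,v_0)$, as in the earlier proofs), fire \textsc{Rlet} on both sides, and then apply the induction hypothesis to $\subexp_1$ under the extended substitution $[x_0 \mapsto v_0,\, x \Mapsto v]$. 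Here one subtlety is that $\mathit{comp}$ on the inner expression is written as $\defun$, not $\mathit{comp}$ itself --- I would need a small auxiliary observation (or a strengthened IH) that $\defun$ preserves traces on the let-tail, which is immediate because on let-bindings and final expressions $\defun$ and $\mathit{comp}$ coincide and the only place they differ, the bind case, is handled by the main argument.

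The main obstacle is case (3), the monadic-bind case --- which the excerpt itself flags as ``an interesting case'' in the proof sketch. Here $\mathit{comp}$ replaces $tmp \gets_s (\subexp_0, () \to \subexp_1)\seq \subexp_2$ with a block that (a)~assigns the predicate $\subexp_0$ to $\this.\whovar$ and a fresh tag $c$ to $\this.\statevar$, (b)~emits a $(\mathsf{More}, c :: \fv(\ldots), c :: \fv(\ldots))$ data structure, (c)~extends the synthetic interpreter functions $\clfnvar$ and $\cofnvar$ with new conditional clauses keyed on $c$, one of which re-asserts $\this.\statevar == c \,\&\&\, \this.\whovar(\this.\sendervar)$ and resets $\this.\statevar := 0$ before running $\subexp_2$. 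On the source side, evaluation reaches $\downarrow_s(v_0, () \to \subexp_1)$, where \textsc{Rbs}/\textsc{Rb} lets an attacker pick an arbitrary client and return an arbitrary value $\textcolor{red}{v'}$, emitting $\msgevent(c, \textcolor{red}{v'})\,\wrevent(0,\mathrm{sender},c)$. On the compiled side, after unfolding the trampoline via \textsc{Rtm}, the attacker gets $\downarrow_t(\textcolor{red}{c'}, \ldots)$ with \emph{both} $\textcolor{red}{c'}$ and $\textcolor{red}{v'}$ free, so it seems to generate strictly more traces --- namely those with $\textcolor{red}{c'} \ne c$. The key step is to observe that the $\mathsf{assert}$ inserted by $\guarde$ (which survives into $\mathit{comp}$'s output) is evaluated after control returns from the client: it checks $\this.\statevar == c$ and $\this.\whovar(\this.\sendervar)$, and since $\this.\statevar$ was set to the fresh $c$ and $\this.\sendervar$ was written by \textsc{Rb} to the actual calling client $\textcolor{red}{c'}$, the assertion is equivalent to $c' = c$ together with $v_0(c)$ holding. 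Exactly those compiled runs with $\textcolor{red}{c'} \ne c$ get \emph{stuck} on this assertion, and stuck traces are excluded from the trace set $\textcolor{red}{\Downarrow}$ by definition; so the surviving compiled traces are precisely those with $\textcolor{red}{c'} = c$, matching the source traces. I also need the predicate-purity assumption to know evaluating $v_0(c)$ does not itself perturb $\trace,\syntrace$, and I need to verify that the synthetic write events $\wrevent(0,\whovar,\ldots)$, $\wrevent(0,\statevar,\ldots)$ introduced by the guard land in $\syntrace$ (not $\trace$), so that they are invisible in the trace set by the ``excluding synthetic events'' clause of the $\Downarrow$ definition --- this is what keeps the source and compiled trace sets comparable. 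After that alignment, one applies the IH to $\subexp_2$ (with the appropriate substitution extended by the value the client returned) and closes the chain, using Lemma~\ref{lemma:assoc} as needed to flatten any let-nesting that $\defun$'s recursion reintroduces. The delicate bookkeeping --- tracking $\this.\statevar$, $\this.\whovar$, $\this.\sendervar$ across the $\downarrow_t$ step and matching up the defunctionalized $\clfnvar$/$\cofnvar$ dispatch with the original lambda applications --- is the real work; everything else is the by-now-familiar forward/backward convergence argument.
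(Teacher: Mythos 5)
Your proposal follows the paper's proof essentially step for step: structural induction with forward/backward evaluation chains to convergence, the remote-communication case resolved by noting that the compiled guard's assertion strands exactly the attacker runs with the wrong sender (and stuck runs, like the synthetic $\wrevent(0,\whovar,\cdot)$/$\wrevent(0,\statevar,\cdot)$ events, are invisible in the trace set), purity of the predicate evaluation ($\phi_0$), and the induction hypothesis applied to the continuation $\subexp_2$. The only divergences are presentational: the paper's written proof exhibits just the $\downarrow_s$-binding case and the atom case (leaving the plain let case implicit, where you add the small $\defun$-vs-$\mathit{comp}$ observation) and states the guard check as an equality between \sendervar{} and \whovar{} rather than the predicate application $\this.\whovar(\this.\sendervar)$ of Fig.~\ref{fig:comp} that you use; neither difference affects the argument.
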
\begin{proof} By induction over term structure.
\subparagraph{Case} $ ~~~{\subexp}~=~{\ensuremath{\mathsf{let~}}}x={\downarrow}_s(({\subexp}_0,~()~{\to}{\subexp}_1));~{\subexp}_2 $.

$comp$ expects the definitions ${\syndecl}$ to be of form:

{\small
\[ {\syndecl}~=~\left(\begin{array}{l}~{\ensuremath{\mathsf{@cl}}}~\this.{\ensuremath{\mathsf{clfn}}}~=~id~{\to}~{\subexp}_{1,alt};\\~{\ensuremath{\mathsf{@co}}}~\this.{\ensuremath{\mathsf{cofn}}}~=~id~{\to}~{\subexp}_{2,alt};\\~b_{rest}\end{array}\right) \]
}

$comp$ is defined recursively and applied to the term ${\subexp}_2$.
Intuitively, $comp$ transforms ${\subexp}_2$ to ${\subexp}_2'$ and ${\syndecl}$ to ${\syndecl}'$ by moving the part of ${\subexp}_2$ that comes
after the ${\downarrow}_s$ call into the ${\ensuremath{\mathsf{cofn}}}$ definition inside ${\syndecl}$.
The recursive call is given as follows:

{\small
\[ ({\decl};{\syndecl}';~{\trampoline}({\subexp}_2'))~=~comp({\decl};{\syndecl};~{\trampoline}({\subexp}_2)) \]
\[ {\syndecl}'~=~\left(\begin{array}{l}~{\ensuremath{\mathsf{@cl}}}~\this.{\ensuremath{\mathsf{clfn}}}~=~id~{\to}~{\subexp}_{1,alt}';\\~{\ensuremath{\mathsf{@co}}}~\this.{\ensuremath{\mathsf{cofn}}}~=~id~{\to}~{\subexp}_{2,alt}';\\~b_{rest}\end{array}\right) \]
}

After the recursive call, $comp$ moves the
transformed ${\subexp}_2'$ into the ${\ensuremath{\mathsf{cofn}}}$ definition,
resulting in ${\subexp}'$ and ${\syndecl}''$
with ${\subexp}_{1,alt}''$ and ${\subexp}_{2,alt}''$.

{\small
\[ ~\phi~\equiv~~~~\big(~\{~{\decl};{\syndecl}'';~{\trampoline}({\subexp}')~\big\}~=~\big\{~comp({\decl};{\syndecl};~{\trampoline}({\subexp}))~\}~\big) \]
\[ {\syndecl}''~=~\left(\begin{array}{l}~{\ensuremath{\mathsf{@cl}}}~\this.{\ensuremath{\mathsf{clfn}}}~=~id~{\to}~{\subexp}_{1,alt}'';\\~{\ensuremath{\mathsf{@co}}}~\this.{\ensuremath{\mathsf{cofn}}}~=~id~{\to}~{\subexp}_{2,alt}'';\\~b_{rest}\end{array}\right) \]
\[ {\subexp}_{1,alt}''~=~\left(\begin{array}{l}~{\ensuremath{\mathsf{if}}}~{\ensuremath{\mathsf{let~}}}(c~{\ensuremath{\mathsf{::}}}~\fv(()~{\to}{\subexp}_1))~=~id\\~{\ensuremath{\mathsf{then}}}~{\subexp}_1\\~{\ensuremath{\mathsf{else}}}~{\subexp}_{2,alt}'\end{array}\right) \]
\[ {\subexp}_{2,alt}''~=~\left(\begin{array}{l}~{\ensuremath{\mathsf{if}}}~{\ensuremath{\mathsf{let~}}}(c~{\ensuremath{\mathsf{::}}}~x~{\ensuremath{\mathsf{::}}}~\fv(x~{\to}{\subexp}_2'))~=~id\\~{\ensuremath{\mathsf{then}}}~{\ensuremath{\mathsf{assert}}}(\this.{\ensuremath{\mathsf{state}}}~{\text{==}}~c~{\text{\&\&}}~\this.{\ensuremath{\mathsf{sender}}}~{\text{==}}~\this.{\ensuremath{\mathsf{who}}});\\~~~~~~~~~\this.{\ensuremath{\mathsf{state}}}~{\ensuremath{\mathsf{:=}}}~0;~{\subexp}_2'\\~{\ensuremath{\mathsf{else}}}~{\subexp}_{2,alt}'\end{array}\right) \]
}

Let ${\trace};{\syntrace}$ be the trace and state produced by initializing ${\decl};{\syndecl}$ with $A$,
and ${\trace};{\syntrace}'$ for initializing ${\decl};{\syndecl}'$, and ${\trace};{\syntrace}''$ for initializing ${\decl};{\syndecl}''$.

{\small
\[ init_A({\decl};{\syndecl})~~~=~{\trace};{\syntrace} \]
\[ init_A({\decl};{\syndecl}')~~=~{\trace};{\syntrace}' \]
\[ init_A({\decl};{\syndecl}'')~=~{\trace};{\syntrace}'' \]
}

According to $\approx$, we only consider terms that reduce to a value.
Therefore, let $\phi_0$ be the judgement that
the term ${\subexp}_0$ closed by $[x{\Mapsto}v]$ in trace ${\trace}$
produces a value $v_0$ and trace $p_1$.

{\small
\[ \phi_0(q_\phi)~=~({\trace};q_\phi;~[x{\Mapsto}v]~{\subexp}_0~~{\to}~{\trace}~p_1;q_\phi;~v_0) \]
}

We define $\phi_1$ based on $\phi$:

{\small
\[ ~~\phi \]
\[ = \]
\[ ~~\big\{~{\decl};{\syndecl}'';~{\trampoline}({\subexp}')~\big\}~=~\big\{~comp({\decl};{\syndecl};~{\trampoline}({\subexp}))~\big\} \]
\[ ~{\to}~generalize~[x{\Mapsto}v]~init_A(...) \]
\[ ~~\big\{~[x{\Mapsto}v]~init_A({\decl};{\syndecl}'';~{\trampoline}({\subexp}'))~\big\}~=~\big\{~[x{\Mapsto}v]~init_A(comp({\decl};{\syndecl};~{\trampoline}({\subexp})))~\big\} \]
\[ ~{\to}~(=~~{\to}~\approx) \]
\[ ~~\big\{~[x{\Mapsto}v]~init_A({\decl};{\syndecl}'';~{\trampoline}({\subexp}'))~\big\}~\approx~\big\{~[x{\Mapsto}v]~init_A(comp({\decl};{\syndecl};~{\trampoline}({\subexp})))~\big\} \]
\[ ~{\to}~IH \]
\[ ~~\big\{~[x{\Mapsto}v]~init_A({\decl};{\syndecl}'';~{\trampoline}({\subexp}'))~\big\}~\approx~\big\{~init_A({\decl};{\syndecl};~{\trampoline}([x{\Mapsto}v]~{\subexp}))~\big\} \]
\[ ~{\to}~\text{def.~}init_A \]
\[ ~~\big\{~{\trace};{\syntrace}'';~{\trampoline}([x{\Mapsto}v]~{\subexp}')~\big\}~\approx~\big\{~{\trace};{\syntrace};~{\trampoline}([x{\Mapsto}v]~{\subexp})~\big\} \]
\[ \equiv \]
\[ ~~\phi_1 \]
}

The lemma holds by the following chain of transitive relations.
We evaluate the compiled program from top to bottom ($~{\to}^*$)
and the original program from bottom to top ($~{\gets}^*$) until configurations converge.

{\small\begin{longtable}{CLL}
   & \vspace{5pt}{\left\{\def\arraystretch{1}\begin{array}{l}\arrayrulecolor{white!80!black}[x{\Mapsto}v]~init_A(comp({\decl};{\syndecl};~{\trampoline}({\subexp})))
\end{array}\right.} &\hspace{-1.5em}\left.\vphantom{\left\{\def\arraystretch{1}\begin{array}{l}  [x|=>v] init_A(comp(d;b; tramp(e))) \end{array}\right\}}\right\}\\
\overset{\text{def.~}{\subexp}}{=}
   & \vspace{5pt}{\left\{\def\arraystretch{1}\begin{array}{l}\arrayrulecolor{white!80!black}[x{\Mapsto}v]~init_A(comp({\decl};{\syndecl};~{\trampoline}({\ensuremath{\mathsf{let~}}}x_3={\downarrow}_s(({\subexp}_0,~()~{\to}{\subexp}_1)));~{\subexp}_2))
\end{array}\right.} &\hspace{-1.5em}\left.\vphantom{\left\{\def\arraystretch{1}\begin{array}{l}  [x|=>v] init_A(comp(d;b; tramp(let x3=↓_s((e0, ()->e1))); e2)) \end{array}\right\}}\right\}\\
\overset{\text{def.~}comp}{=}
   & \vspace{5pt}{\left\{\def\arraystretch{1}\begin{array}{l}\arrayrulecolor{white!80!black}[x{\Mapsto}v]~init_A({\decl};{\syndecl}'';~{\trampoline}(
\\\this.{\ensuremath{\mathsf{who}}}~{\ensuremath{\mathsf{:=}}}~{\subexp}_0;~\this.{\ensuremath{\mathsf{state}}}~{\ensuremath{\mathsf{:=}}}~c;
\\{\ensuremath{\mathsf{More}}}(c~{\ensuremath{\mathsf{::}}}~\fv(()~{\to}{\subexp}_1),~c~{\ensuremath{\mathsf{::}}}~\fv(x~{\to}{\subexp}_2'))))
\end{array}\right.} &\hspace{-1.5em}\left.\vphantom{\left\{\def\arraystretch{1}\begin{array}{l}  [x|=>v] init_A(d;b''; tramp( \\  this.who := e0; this.state := c; \\  More(c :: fv(()->e1), c :: fv(x->e2')))) \end{array}\right\}}\right\}\\
\overset{\text{def.~}init_A}{=}
   & \vspace{5pt}{\left\{\def\arraystretch{1}\begin{array}{l}\arrayrulecolor{white!80!black}{\trace};{\syntrace}'';~[x{\Mapsto}v]~{\trampoline}(
\\\this.{\ensuremath{\mathsf{who}}}~{\ensuremath{\mathsf{:=}}}~{\subexp}_0;~\this.{\ensuremath{\mathsf{state}}}~{\ensuremath{\mathsf{:=}}}~c;
\\{\ensuremath{\mathsf{More}}}(c~{\ensuremath{\mathsf{::}}}~\fv(()~{\to}{\subexp}_1),~c~{\ensuremath{\mathsf{::}}}~\fv(x~{\to}{\subexp}_2'))
\end{array}\right.} &\hspace{-1.5em}\left.\vphantom{\left\{\def\arraystretch{1}\begin{array}{l}  p;q''; [x|=>v] tramp( \\  this.who := e0; this.state := c; \\  More(c :: fv(()->e1), c :: fv(x->e2')) \end{array}\right\}}\right\}\\
\overset{\text{def.~}{\Mapsto}}{=}
   & \vspace{5pt}{\left\{\def\arraystretch{1}\begin{array}{l}\arrayrulecolor{white!80!black}{\trace};{\syntrace}'';~{\trampoline}(
\\\this.{\ensuremath{\mathsf{who}}}~{\ensuremath{\mathsf{:=}}}~[x{\Mapsto}v]~{\subexp}_0;~\this.{\ensuremath{\mathsf{state}}}~{\ensuremath{\mathsf{:=}}}~c;
\\{\ensuremath{\mathsf{More}}}(c~{\ensuremath{\mathsf{::}}}~[x{\Mapsto}v]~\fv(()~{\to}{\subexp}_1),~c~{\ensuremath{\mathsf{::}}}~[x{\Mapsto}v]~\fv(x~{\to}{\subexp}_2')))
\end{array}\right.} &\hspace{-1.5em}\left.\vphantom{\left\{\def\arraystretch{1}\begin{array}{l}  p;q''; tramp( \\  this.who := [x|=>v] e0; this.state := c; \\  More(c :: [x|=>v] fv(()->e1), c :: [x|=>v] fv(x->e2'))) \end{array}\right\}}\right\}\\
\overset{\phi_0({\syntrace}'')}{~{\to}^*}
   & \vspace{5pt}{\left\{\def\arraystretch{1}\begin{array}{l}\arrayrulecolor{white!80!black}{\trace}~p_1;~{\syntrace}'';~{\trampoline}(
\\\this.{\ensuremath{\mathsf{who}}}~{\ensuremath{\mathsf{:=}}}~v_0;~\this.{\ensuremath{\mathsf{state}}}~{\ensuremath{\mathsf{:=}}}~c;
\\{\ensuremath{\mathsf{More}}}(c~{\ensuremath{\mathsf{::}}}~[x{\Mapsto}v]~\fv(()~{\to}{\subexp}_1),~c~{\ensuremath{\mathsf{::}}}~[x{\Mapsto}v]~\fv(x~{\to}{\subexp}_2')))
\\|~\forall~v_0~p_1,~{\ensuremath{\mathsf{if}}}~\phi_0
\end{array}\right.} &\hspace{-1.5em}\left.\vphantom{\left\{\def\arraystretch{1}\begin{array}{l}  p p1; q''; tramp( \\  this.who := v0; this.state := c; \\  More(c :: [x|=>v] fv(()->e1), c :: [x|=>v] fv(x->e2'))) \\  | \forall v0 p1, if \phi_0 \end{array}\right\}}\right\}\\
\overset{\textsc{Rset}\dagger,~\textsc{Rset}\dagger}{~{\to}}
   & \vspace{5pt}{\left\{\def\arraystretch{1}\begin{array}{l}\arrayrulecolor{white!80!black}{\trace}~p_1;~{\syntrace}''~[{\ensuremath{\mathsf{who}}}{\mapsto}v_0,~{\ensuremath{\mathsf{state}}}{\mapsto}c];
\\{\trampoline}({\ensuremath{\mathsf{More}}}(c~{\ensuremath{\mathsf{::}}}~[x{\Mapsto}v]~\fv(()~{\to}{\subexp}_1)),~c~{\ensuremath{\mathsf{::}}}~[x{\Mapsto}v]~\fv(x~{\to}{\subexp}_2'))
\\|~\forall~v_0~p_1,~{\ensuremath{\mathsf{if}}}~\phi_0
\end{array}\right.} &\hspace{-1.5em}\left.\vphantom{\left\{\def\arraystretch{1}\begin{array}{l}  p p1; q'' [who|->v0, state|->c]; \\  tramp(More(c :: [x|=>v] fv(()->e1)), c :: [x|=>v] fv(x->e2')) \\  | \forall v0 p1, if \phi_0 \end{array}\right\}}\right\}\\
\overset{\textsc{Rmore}}{~{\to}}
   & \vspace{5pt}{\left\{\def\arraystretch{1}\begin{array}{l}\arrayrulecolor{white!80!black}{\trace}~p_1;~{\syntrace}''~[{\ensuremath{\mathsf{who}}}{\mapsto}v_0,~{\ensuremath{\mathsf{state}}}{\mapsto}c];
\\tmp~~{\gets}_t~\this.{\ensuremath{\mathsf{clfn}}}(c~{\ensuremath{\mathsf{::}}}~[x{\Mapsto}v]~\fv(()~{\to}{\subexp}_1));
\\{\trampoline}(\this.{\ensuremath{\mathsf{cofn}}}(c~{\ensuremath{\mathsf{::}}}~tmp~{\ensuremath{\mathsf{::}}}~[x{\Mapsto}v]~\fv(x~{\to}{\subexp}_2')))
\\|~\forall~v_0~p_1,~{\ensuremath{\mathsf{if}}}~\phi_0
\end{array}\right.} &\hspace{-1.5em}\left.\vphantom{\left\{\def\arraystretch{1}\begin{array}{l}  p p1; q'' [who|->v0, state|->c]; \\  tmp <-_t this.clfn(c :: [x|=>v] fv(()->e1)); \\  tramp(this.cofn(c :: tmp :: [x|=>v] fv(x->e2'))) \\  | \forall v0 p1, if \phi_0 \end{array}\right\}}\right\}\\
\overset{\textcolor{red}{\textsc{Rbt}}}{~{\to}}
   & \vspace{5pt}{\left\{\def\arraystretch{1}\begin{array}{l}\arrayrulecolor{white!80!black}{\trace}~p_1~{\ensuremath{\mathsf{msg}}}(\textcolor{red}{v_0'},\textcolor{red}{v_2'})~{\ensuremath{\mathsf{wr}}}(0,{\ensuremath{\mathsf{sender}}},\textcolor{red}{v_0'});~{\syntrace}''~[{\ensuremath{\mathsf{who}}}{\mapsto}v_0,~{\ensuremath{\mathsf{state}}}{\mapsto}c];
\\{\ensuremath{\mathsf{let~}}}tmp=\textcolor{red}{v_2'};~{\trampoline}(\this.{\ensuremath{\mathsf{cofn}}}(c~{\ensuremath{\mathsf{::}}}~tmp~{\ensuremath{\mathsf{::}}}~[x{\Mapsto}v]~\fv(x~{\to}{\subexp}_2')))
\\|~\forall~v_0~p_1~\textcolor{red}{v_0'}~\textcolor{red}{v_2'},~{\ensuremath{\mathsf{if}}}~\phi_0
\end{array}\right.} &\hspace{-1.5em}\left.\vphantom{\left\{\def\arraystretch{1}\begin{array}{l}  p p1 msg(\textcolor{red}{v0'},\textcolor{red}{v2'}) wr(0,sender,\textcolor{red}{v0'}); q'' [who|->v0, state|->c]; \\  let tmp=\textcolor{red}{v2'}; tramp(this.cofn(c :: tmp :: [x|=>v] fv(x->e2'))) \\  | \forall v0 p1 \textcolor{red}{v0'} \textcolor{red}{v2'}, if \phi_0 \end{array}\right\}}\right\}\\
\overset{case~\textcolor{red}{v_0'}~=~v_0}{=}
   & \vspace{5pt}{\left\{\def\arraystretch{1}\begin{array}{l}\arrayrulecolor{white!80!black}{\trace}~p_1~{\ensuremath{\mathsf{msg}}}(\textcolor{red}{v_0'},\textcolor{red}{v_2'})~{\ensuremath{\mathsf{wr}}}(0,{\ensuremath{\mathsf{sender}}},\textcolor{red}{v_0'});~{\syntrace}''~[{\ensuremath{\mathsf{who}}}{\mapsto}v_0,~{\ensuremath{\mathsf{state}}}{\mapsto}c];
\\{\ensuremath{\mathsf{let~}}}tmp=\textcolor{red}{v_2'};~{\trampoline}(\this.{\ensuremath{\mathsf{cofn}}}(c~{\ensuremath{\mathsf{::}}}~tmp~{\ensuremath{\mathsf{::}}}~[x{\Mapsto}v]~\fv(x~{\to}{\subexp}_2')))
\\|~\forall~v_0~p_1~\textcolor{red}{v_0'}~\textcolor{red}{v_2'},~{\ensuremath{\mathsf{if}}}~\textcolor{red}{v_0'}~\ne~v_0,~\phi_0~\\\midrule{}~{\trace}~p_1~{\ensuremath{\mathsf{msg}}}(\textcolor{red}{v_0'},\textcolor{red}{v_2'})~{\ensuremath{\mathsf{wr}}}(0,{\ensuremath{\mathsf{sender}}},\textcolor{red}{v_0'});~{\syntrace}''~[{\ensuremath{\mathsf{who}}}{\mapsto}v_0,~{\ensuremath{\mathsf{state}}}{\mapsto}c];
\\{\ensuremath{\mathsf{let~}}}tmp=\textcolor{red}{v_2'};~{\trampoline}(\this.{\ensuremath{\mathsf{cofn}}}(c~{\ensuremath{\mathsf{::}}}~tmp~{\ensuremath{\mathsf{::}}}~[x{\Mapsto}v]~\fv(x~{\to}{\subexp}_2')))
\\|~\forall~v_0~p_1~\textcolor{red}{v_0'}~\textcolor{red}{v_2'},~{\ensuremath{\mathsf{if}}}~\textcolor{red}{v_0'}~=~v_0,~\phi_0
\end{array}\right.} &\hspace{-1.5em}\left.\vphantom{\left\{\def\arraystretch{1}\begin{array}{l}  p p1 msg(\textcolor{red}{v0'},\textcolor{red}{v2'}) wr(0,sender,\textcolor{red}{v0'}); q'' [who|->v0, state|->c]; \\  let tmp=\textcolor{red}{v2'}; tramp(this.cofn(c :: tmp :: [x|=>v] fv(x->e2'))) \\  | \forall v0 p1 \textcolor{red}{v0'} \textcolor{red}{v2'}, if \textcolor{red}{v0'} \ne v0, \phi_0 \} \cup \{ p p1 msg(\textcolor{red}{v0'},\textcolor{red}{v2'}) wr(0,sender,\textcolor{red}{v0'}); q'' [who|->v0, state|->c]; \\  let tmp=\textcolor{red}{v2'}; tramp(this.cofn(c :: tmp :: [x|=>v] fv(x->e2'))) \\  | \forall v0 p1 \textcolor{red}{v0'} \textcolor{red}{v2'}, if \textcolor{red}{v0'} = v0, \phi_0 \end{array}\right\}}\right\}\\
\overset{\substack{\textsc{Rlet},~\textsc{Rget},~\textsc{Rapp},~\textsc{Rt},~\\~\textsc{Rget},~\textsc{Rop},~\textsc{Rget},~\\~\textsc{Rget},~\textsc{Rop},~\textsc{Rop}}}{~{\to}^*}
   & \vspace{5pt}{\left\{\def\arraystretch{1}\begin{array}{l}\arrayrulecolor{white!80!black}{\trace}~p_1~{\ensuremath{\mathsf{msg}}}(\textcolor{red}{v_0'},\textcolor{red}{v_2'})~{\ensuremath{\mathsf{wr}}}(0,{\ensuremath{\mathsf{sender}}},\textcolor{red}{v_0'});~{\syntrace}''~[{\ensuremath{\mathsf{who}}}{\mapsto}v_0,~{\ensuremath{\mathsf{state}}}{\mapsto}c];
\\{\trampoline}({\ensuremath{\mathsf{assert}}}({\ensuremath{\mathsf{false}}});~\this.{\ensuremath{\mathsf{state}}}~{\ensuremath{\mathsf{:=}}}~0;~[x{\Mapsto}\textcolor{red}{v_2'},~x{\Mapsto}v]~{\subexp}_2'))
\\|~\forall~v_0~p_1~\textcolor{red}{v_0'}~\textcolor{red}{v_2'},~{\ensuremath{\mathsf{if}}}~\textcolor{red}{v_0'}~\ne~v_0,~\phi_0~\\\midrule{}{\trace}~p_1~{\ensuremath{\mathsf{msg}}}(\textcolor{red}{v_0'},\textcolor{red}{v_2'})~{\ensuremath{\mathsf{wr}}}(0,{\ensuremath{\mathsf{sender}}},\textcolor{red}{v_0'});~{\syntrace}''~[{\ensuremath{\mathsf{who}}}{\mapsto}v_0,~{\ensuremath{\mathsf{state}}}{\mapsto}c];
\\{\trampoline}({\ensuremath{\mathsf{assert}}}({\ensuremath{\mathsf{true}}});~\this.{\ensuremath{\mathsf{state}}}~{\ensuremath{\mathsf{:=}}}~0;~[x{\Mapsto}\textcolor{red}{v_2'},~x{\Mapsto}v]~{\subexp}_2')
\\|~\forall~v_0~p_1~\textcolor{red}{v_0'}~\textcolor{red}{v_2'},~{\ensuremath{\mathsf{if}}}~\textcolor{red}{v_0'}~=~v_0,~\phi_0
\end{array}\right.} &\hspace{-1.5em}\left.\vphantom{\left\{\def\arraystretch{1}\begin{array}{l}  p p1 msg(\textcolor{red}{v0'},\textcolor{red}{v2'}) wr(0,sender,\textcolor{red}{v0'}); q'' [who|->v0, state|->c]; \\  tramp(assert(false); this.state := 0; [x|=>\textcolor{red}{v2'}, x|=>v] e2')) \\  | \forall v0 p1 \textcolor{red}{v0'} \textcolor{red}{v2'}, if \textcolor{red}{v0'} \ne v0, \phi_0 \} \cup \{p p1 msg(\textcolor{red}{v0'},\textcolor{red}{v2'}) wr(0,sender,\textcolor{red}{v0'}); q'' [who|->v0, state|->c]; \\  tramp(assert(true); this.state := 0; [x|=>\textcolor{red}{v2'}, x|=>v] e2') \\  | \forall v0 p1 \textcolor{red}{v0'} \textcolor{red}{v2'}, if \textcolor{red}{v0'} = v0, \phi_0 \end{array}\right\}}\right\}\\
\overset{\text{def.~}\approx}{\approx}
   & \vspace{5pt}{\left\{\def\arraystretch{1}\begin{array}{l}\arrayrulecolor{white!80!black}{\trace}~p_1~{\ensuremath{\mathsf{msg}}}(\textcolor{red}{v_0'},\textcolor{red}{v_2'})~{\ensuremath{\mathsf{wr}}}(0,{\ensuremath{\mathsf{sender}}},\textcolor{red}{v_0'});~{\syntrace}''~[{\ensuremath{\mathsf{who}}}{\mapsto}v_0,~{\ensuremath{\mathsf{state}}}{\mapsto}c];
\\{\trampoline}({\ensuremath{\mathsf{assert}}}({\ensuremath{\mathsf{true}}});~\this.{\ensuremath{\mathsf{state}}}~{\ensuremath{\mathsf{:=}}}~0;~[x{\Mapsto}\textcolor{red}{v_2'},~x{\Mapsto}v]~{\subexp}_2')
\\|~\forall~v_0~p_1~\textcolor{red}{v_0'}~\textcolor{red}{v_2'},~{\ensuremath{\mathsf{if}}}~\textcolor{red}{v_0'}~=~v_0,~\phi_0
\end{array}\right.} &\hspace{-1.5em}\left.\vphantom{\left\{\def\arraystretch{1}\begin{array}{l}  p p1 msg(\textcolor{red}{v0'},\textcolor{red}{v2'}) wr(0,sender,\textcolor{red}{v0'}); q'' [who|->v0, state|->c]; \\  tramp(assert(true); this.state := 0; [x|=>\textcolor{red}{v2'}, x|=>v] e2') \\  | \forall v0 p1 \textcolor{red}{v0'} \textcolor{red}{v2'}, if \textcolor{red}{v0'} = v0, \phi_0 \end{array}\right\}}\right\}\\
\overset{\textsc{Rlet},~\textsc{Rset}}{~{\to}}
   & \vspace{5pt}{\left\{\def\arraystretch{1}\begin{array}{l}\arrayrulecolor{white!80!black}{\trace}~p_1~{\ensuremath{\mathsf{msg}}}(\textcolor{red}{v_0'},\textcolor{red}{v_2'})~{\ensuremath{\mathsf{wr}}}(0,{\ensuremath{\mathsf{sender}}},\textcolor{red}{v_0'});~{\syntrace}''~[{\ensuremath{\mathsf{who}}}{\mapsto}v_0,~{\ensuremath{\mathsf{state}}}{\mapsto}0];
\\{\trampoline}([x{\Mapsto}\textcolor{red}{v_2'},~x{\Mapsto}v]~{\subexp}_2')
\\|~\forall~v_0~p_1~\textcolor{red}{v_0'}~\textcolor{red}{v_2'},~{\ensuremath{\mathsf{if}}}~\textcolor{red}{v_0'}~=~v_0,~\phi_0
\end{array}\right.} &\hspace{-1.5em}\left.\vphantom{\left\{\def\arraystretch{1}\begin{array}{l}  p p1 msg(\textcolor{red}{v0'},\textcolor{red}{v2'}) wr(0,sender,\textcolor{red}{v0'}); q'' [who|->v0, state|->0]; \\  tramp([x|=>\textcolor{red}{v2'}, x|=>v] e2') \\  | \forall v0 p1 \textcolor{red}{v0'} \textcolor{red}{v2'}, if \textcolor{red}{v0'} = v0, \phi_0 \end{array}\right\}}\right\}\\
\overset{\textcolor{red}{v_0'}~=~v_0}{=}
   & \vspace{5pt}{\left\{\def\arraystretch{1}\begin{array}{l}\arrayrulecolor{white!80!black}{\trace}~p_1~{\ensuremath{\mathsf{msg}}}(v_0,\textcolor{red}{v_2'})~{\ensuremath{\mathsf{wr}}}(0,{\ensuremath{\mathsf{sender}}},v_0);~{\syntrace}''~[{\ensuremath{\mathsf{who}}}{\mapsto}v_0,~{\ensuremath{\mathsf{state}}}{\mapsto}0];
\\{\trampoline}([x{\Mapsto}\textcolor{red}{v_2'},~x{\Mapsto}v]~{\subexp}_2')
\\|~\forall~v_0~p_1~\textcolor{red}{v_2'},~{\ensuremath{\mathsf{if}}}~\phi_0
\end{array}\right.} &\hspace{-1.5em}\left.\vphantom{\left\{\def\arraystretch{1}\begin{array}{l}  p p1 msg(v0,\textcolor{red}{v2'}) wr(0,sender,v0); q'' [who|->v0, state|->0]; \\  tramp([x|=>\textcolor{red}{v2'}, x|=>v] e2') \\  | \forall v0 p1 \textcolor{red}{v2'}, if \phi_0 \end{array}\right\}}\right\}\\
\overset{\phi_1}{\approx}
   & \vspace{5pt}{\left\{\def\arraystretch{1}\begin{array}{l}\arrayrulecolor{white!80!black}{\trace}~p_1~{\ensuremath{\mathsf{msg}}}(v_0,\textcolor{red}{v_2'})~{\ensuremath{\mathsf{wr}}}(0,{\ensuremath{\mathsf{sender}}},v_0);~{\syntrace};
\\{\trampoline}([x{\Mapsto}\textcolor{red}{v_2'},~x{\Mapsto}v]~{\subexp}_2)
\\|~\forall~v_0~p_1~\textcolor{red}{v_2'},~{\ensuremath{\mathsf{if}}}~\phi_0
\end{array}\right.} &\hspace{-1.5em}\left.\vphantom{\left\{\def\arraystretch{1}\begin{array}{l}  p p1 msg(v0,\textcolor{red}{v2'}) wr(0,sender,v0); q; \\  tramp([x|=>\textcolor{red}{v2'}, x|=>v] e2) \\  | \forall v0 p1 \textcolor{red}{v2'}, if \phi_0 \end{array}\right\}}\right\}\\
\overset{\textsc{Rlet},~\textcolor{red}{\textsc{Rbs}}}{~{\gets}}
   & \vspace{5pt}{\left\{\def\arraystretch{1}\begin{array}{l}\arrayrulecolor{white!80!black}{\trace}~p_1;~{\syntrace};~{\trampoline}({\ensuremath{\mathsf{let~}}}x={\downarrow}_s(v_0,~()~{\to}[x{\Mapsto}v]~{\subexp}_1);~[x{\Mapsto}v]~{\subexp}_2)
\\|~\forall~v_0~p_1,~{\ensuremath{\mathsf{if}}}~\phi_0
\end{array}\right.} &\hspace{-1.5em}\left.\vphantom{\left\{\def\arraystretch{1}\begin{array}{l}  p p1; q; tramp(let x=↓_s(v0, ()->[x|=>v] e1); [x|=>v] e2) \\  | \forall v0 p1, if \phi_0 \end{array}\right\}}\right\}\\
\overset{\phi_0({\syntrace})}{~{\gets}^*}
   & \vspace{5pt}{\left\{\def\arraystretch{1}\begin{array}{l}\arrayrulecolor{white!80!black}{\trace};{\syntrace};~{\trampoline}({\ensuremath{\mathsf{let~}}}x={\downarrow}_s([x{\Mapsto}v]~{\subexp}_0,~()~{\to}[x{\Mapsto}v]~{\subexp}_1);~[x{\Mapsto}v]~{\subexp}_2)
\end{array}\right.} &\hspace{-1.5em}\left.\vphantom{\left\{\def\arraystretch{1}\begin{array}{l}  p;q; tramp(let x=↓_s([x|=>v] e0, ()->[x|=>v] e1); [x|=>v] e2) \end{array}\right\}}\right\}\\
\overset{\text{def.~}{\Mapsto}}{=}
   & \vspace{5pt}{\left\{\def\arraystretch{1}\begin{array}{l}\arrayrulecolor{white!80!black}{\trace};{\syntrace};~{\trampoline}([x{\Mapsto}v]~{\ensuremath{\mathsf{let~}}}x={\downarrow}_s({\subexp}_0,~()~{\to}{\subexp}_1);~{\subexp}_2)
\end{array}\right.} &\hspace{-1.5em}\left.\vphantom{\left\{\def\arraystretch{1}\begin{array}{l}  p;q; tramp([x|=>v] let x=↓_s(e0, ()->e1); e2) \end{array}\right\}}\right\}\\
\overset{\text{def.~}{\subexp}}{=}
   & \vspace{5pt}{\left\{\def\arraystretch{1}\begin{array}{l}\arrayrulecolor{white!80!black}{\trace};{\syntrace};~{\trampoline}([x{\Mapsto}v]~{\subexp})
\end{array}\right.} &\hspace{-1.5em}\left.\vphantom{\left\{\def\arraystretch{1}\begin{array}{l}  p;q; tramp([x|=>v] e) \end{array}\right\}}\right\}\\
\overset{\text{def.~}init_A}{=}
   & \vspace{5pt}{\left\{\def\arraystretch{1}\begin{array}{l}\arrayrulecolor{white!80!black}init_A({\syndecl};{\decl};~{\trampoline}([x{\Mapsto}v]~{\subexp}))
\end{array}\right.} &\hspace{-1.5em}\left.\vphantom{\left\{\def\arraystretch{1}\begin{array}{l}  init_A(b;d; tramp([x|=>v] e)) \end{array}\right\}}\right\}\\
\end{longtable}}

\subparagraph{Case} $ ~~~{\subexp}~=~x_0 $.

Let ${\trace};{\syntrace}$ be the trace and state produced by initializing ${\decl};{\syndecl}$ with $A$.

{\small
\[ ~init_A({\decl};{\syndecl})~=~{\trace};{\syntrace} \]
}

The traceset equality holds by definition of $comp$ and $init_A$.

{\small\begin{longtable}{CLL}
   & \vspace{5pt}{\left\{\def\arraystretch{1}\begin{array}{l}\arrayrulecolor{white!80!black}[x{\Mapsto}v]~init_A(comp({\decl};{\syndecl};~{\trampoline}({\subexp})))
\end{array}\right.} &\hspace{-1.5em}\left.\vphantom{\left\{\def\arraystretch{1}\begin{array}{l}  [x|=>v] init_A(comp(d;b; tramp(e))) \end{array}\right\}}\right\}\\
\overset{\text{def.~}{\subexp}}{=}
   & \vspace{5pt}{\left\{\def\arraystretch{1}\begin{array}{l}\arrayrulecolor{white!80!black}[x{\Mapsto}v]~init_A(comp({\decl};{\syndecl};~{\trampoline}(x_0)))
\end{array}\right.} &\hspace{-1.5em}\left.\vphantom{\left\{\def\arraystretch{1}\begin{array}{l}  [x|=>v] init_A(comp(d;b; tramp(x0))) \end{array}\right\}}\right\}\\
\overset{\text{def.~}comp}{=}
   & \vspace{5pt}{\left\{\def\arraystretch{1}\begin{array}{l}\arrayrulecolor{white!80!black}[x{\Mapsto}v]~init_A({\decl};{\syndecl};~comp({\trampoline}(x_0)))
\end{array}\right.} &\hspace{-1.5em}\left.\vphantom{\left\{\def\arraystretch{1}\begin{array}{l}  [x|=>v] init_A(d;b; comp(tramp(x0))) \end{array}\right\}}\right\}\\
\overset{\text{def.~}init_A}{=}
   & \vspace{5pt}{\left\{\def\arraystretch{1}\begin{array}{l}\arrayrulecolor{white!80!black}{\trace};{\syntrace};~[x{\Mapsto}v]~{\trampoline}(x_0)
\end{array}\right.} &\hspace{-1.5em}\left.\vphantom{\left\{\def\arraystretch{1}\begin{array}{l}  p;q; [x|=>v] tramp(x0) \end{array}\right\}}\right\}\\
\overset{\text{def.~}{\Mapsto}}{=}
   & \vspace{5pt}{\left\{\def\arraystretch{1}\begin{array}{l}\arrayrulecolor{white!80!black}{\trace};{\syntrace};~{\trampoline}([x{\Mapsto}v]~x_0)
\end{array}\right.} &\hspace{-1.5em}\left.\vphantom{\left\{\def\arraystretch{1}\begin{array}{l}  p;q; tramp([x|=>v] x0) \end{array}\right\}}\right\}\\
\overset{\text{def.~}{\subexp}}{=}
   & \vspace{5pt}{\left\{\def\arraystretch{1}\begin{array}{l}\arrayrulecolor{white!80!black}{\trace};{\syntrace};~{\trampoline}([x{\Mapsto}v]~{\subexp})
\end{array}\right.} &\hspace{-1.5em}\left.\vphantom{\left\{\def\arraystretch{1}\begin{array}{l}  p;q; tramp([x|=>v] e) \end{array}\right\}}\right\}\\
\overset{\text{def.~}init_A}{=}
   & \vspace{5pt}{\left\{\def\arraystretch{1}\begin{array}{l}\arrayrulecolor{white!80!black}init_A({\decl};{\syndecl};~{\trampoline}([x{\Mapsto}v]~{\subexp}))
\end{array}\right.} &\hspace{-1.5em}\left.\vphantom{\left\{\def\arraystretch{1}\begin{array}{l}  init_A(d;b; tramp([x|=>v] e)) \end{array}\right\}}\right\}\\
\end{longtable}}

\end{proof}

\clearpage{}\bigskip{}
\begin{globallemma}[comp' preserves traces]
 \emph{$comp'$ is defined on programs.}
\emph{To evaluate a program, it is initialized with a set of clients $A$.}
\emph{$comp'$ preserves the traceset of (closed) programs $P$ for any set of clients $A$.}

\emph{For all definitions ${\syndecl}$, definitions ${\decl}$, terms ${\subexp}_0$,}
\[ \{~init_A(comp'({\decl};{\syndecl};~{\trampoline}({\subexp}_0)))~\} ~\approx~ ... ~\approx~ \{~init_A({\decl};{\syndecl}';~{\trampoline}({\subexp}_0))~\} \]
\end{globallemma}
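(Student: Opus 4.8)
The plan is to read this lemma off as a corollary of Lemma~\ref{lemma:comp}: $comp'$ does nothing more than install a fixed synthetic environment and then run $comp$. Concretely, I would first unfold the definition of $comp'$ (Figure~\ref{fig:comp}). Writing $\syndecl' \;=\; \coclfn(\syndecl, id, \assert(\false), \assert(\false))$ for the synthetic definitions extended with the two dummy interpreter bindings $\clVal \clfnvar = id \to \assert(\false)$ and $\coVal \cofnvar = id \to \assert(\false)$ (with $id$ fresh) — so that $\syndecl'$ has exactly the shape $\coclfn(\syndecl, id, \subexp_{1,alt}, \subexp_{2,alt})$ that $comp$ expects of its environment, with $\subexp_{1,alt} = \subexp_{2,alt} = \assert(\false)$ — the defining equation for $comp'$ says precisely that $comp'(\decl; \syndecl; \trampoline(\subexp_0))$ is the program obtained by running $comp$ on $\decl; \syndecl'; \trampoline(\subexp_0)$ ($\defunOuter$ contributes the environment augmentation, while $\dtlp$ and $\guardp$ merely strip the wrapping $\trampoline$ and delegate to $\dtle$ and $\guarde$). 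This is a syntactic identity, so the associated trace sets coincide.

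Second, I would instantiate Lemma~\ref{lemma:comp} with its synthetic definitions set to $\syndecl'$, its main expression set to $\subexp_0$, and its closing substitution set to the empty substitution — taking the pattern/value pair to be $()$/$()$, since $[() \Mapsto ()] = []$ acts as the identity on terms, which is admissible because $P$ ranges over closed programs and hence $\subexp_0$ is closed. Lemma~\ref{lemma:comp} then yields $\{\, init_A(comp(\decl; \syndecl'; \trampoline(\subexp_0))) \,\} \approx \{\, init_A(\decl; \syndecl'; \trampoline(\subexp_0)) \,\}$, and chaining with the first step gives $\{\, init_A(comp'(\decl; \syndecl; \trampoline(\subexp_0))) \,\} = \{\, init_A(comp(\decl; \syndecl'; \trampoline(\subexp_0))) \,\} \approx \{\, init_A(\decl; \syndecl'; \trampoline(\subexp_0)) \,\}$, which is the claim. (This mirrors the single-case structure of the proof of Lemma~\ref{lemma:mnf'}, which likewise dispatches the sole program shape $\decl; \syndecl; \trampoline(\subexp_0)$.)

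A couple of routine checks remain. When unfolding $init_A$ on both sides one notes that the two extra bindings of $\syndecl'$ contribute only the \emph{synthetic} write events $\wrevent(0, \clfnvar, \cdot)$ and $\wrevent(0, \cofnvar, \cdot)$ to the initial configuration; these land in the synthetic trace $\syntrace$, which the trace-set operator $\Downarrow$ deliberately discards, so they do not perturb the observable trace set. (This same observation is what licenses, back in the proof of Theorem~\ref{theorem:trace-equality}, collapsing $\syndecl'$ to the original synthetic definitions of $mnf'(P)$.) One also checks that the fresh $id$ chosen by $\defunOuter$ cannot be captured and is the same identifier bound in both interpreter lambdas — immediate from the freshness side-condition — and, for the variant with a non-trivial closing substitution, that the substitution commutes with $init_A$ (closed top-level definitions are untouched) and with the fresh-$id$ renaming.

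Honestly, there is no real obstacle in this lemma: all the genuine work lives in Lemma~\ref{lemma:comp}, and in particular in its ${\downarrow}_s$ case, where the $\assert$ inserted by $\guarde$ becomes stuck on exactly the attacker-chosen clients $c' \ne c$ and thereby prunes precisely the spurious traces that $\to_b$ would otherwise introduce, restoring trace equality. The only point in Lemma~\ref{lemma:comp'} itself that needs care is the overloaded notation ``$comp(\mainexp)$'' appearing inside the definition of $comp'$: it denotes running the full fused pipeline with the $\coclfn$-environment already in place, so the conditional clauses that $comp$ threads into the $\clfnvar$ and $\cofnvar$ definitions during its recursion attach to the bindings introduced by $\defunOuter$ rather than to freshly minted ones — a bookkeeping point, not a difficulty.
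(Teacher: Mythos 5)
Your proposal is correct and follows essentially the same route as the paper: unfold $comp'$ so that $comp'(\decl;\syndecl;\trampoline(\subexp_0))$ is literally $comp$ applied to the environment-augmented program $\decl;\syndecl';\trampoline(\subexp_0)$ with $\syndecl' = \coclfn(\syndecl, id, \assert(\false), \assert(\false))$, then conclude by Lemma~\ref{lemma:comp}. Your extra remarks (instantiating the closing substitution trivially via $[(){\Mapsto}()]=[]$, and that the dummy $\clfnvar/\cofnvar$ bindings only add synthetic write events that $\Downarrow$ discards) are harmless elaborations of what the paper leaves implicit.
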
\begin{proof} By induction over term structure.
\subparagraph{Case} $ ~~P~=~({\decl};{\syndecl};~{\subexp}_0) $.

Intuitively, $comp'$ prepends the definitions ${\syndecl}$ with initial definitions for ${\ensuremath{\mathsf{clfn}}}$ and ${\ensuremath{\mathsf{cofn}}}$
that only contain ${\ensuremath{\mathsf{assert}}}({\ensuremath{\mathsf{false}}})$, such that $comp$ can be applied.

{\small
\[ {\syndecl}'~=~\left(\begin{array}{l}~{\ensuremath{\mathsf{@cl}}}~\this.{\ensuremath{\mathsf{clfn}}}~=~id~{\to}~{\ensuremath{\mathsf{assert}}}({\ensuremath{\mathsf{false}}});\\~{\ensuremath{\mathsf{@co}}}~\this.{\ensuremath{\mathsf{cofn}}}~=~id~{\to}~{\ensuremath{\mathsf{assert}}}({\ensuremath{\mathsf{false}}});\\~{\syndecl}\end{array}\right) \]
}

The lemma holds by definition of $comp'$, and Lemma \ref{lemma:comp}.

{\small\begin{longtable}{CLL}
   & \vspace{5pt}{\left\{\def\arraystretch{1}\begin{array}{l}\arrayrulecolor{white!80!black}init_A(comp'({\decl};{\syndecl};~{\trampoline}({\subexp}_0)))
\end{array}\right.} &\hspace{-1.5em}\left.\vphantom{\left\{\def\arraystretch{1}\begin{array}{l}  init_A(comp'(d;b; tramp(e0))) \end{array}\right\}}\right\}\\
\overset{\text{def.~}comp'}{=}
   & \vspace{5pt}{\left\{\def\arraystretch{1}\begin{array}{l}\arrayrulecolor{white!80!black}init_A(comp({\decl};{\syndecl}';~{\trampoline}({\subexp}_0))
\end{array}\right.} &\hspace{-1.5em}\left.\vphantom{\left\{\def\arraystretch{1}\begin{array}{l}  init_A(comp(d;b'; tramp(e0)) \end{array}\right\}}\right\}\\
\overset{\text{Lemma~}\ref{lemma:comp}}{\approx}
   & \vspace{5pt}{\left\{\def\arraystretch{1}\begin{array}{l}\arrayrulecolor{white!80!black}init_A({\decl};{\syndecl}';~{\trampoline}({\subexp}_0))
\end{array}\right.} &\hspace{-1.5em}\left.\vphantom{\left\{\def\arraystretch{1}\begin{array}{l}  init_A(d;b'; tramp(e0)) \end{array}\right\}}\right\}\\
\end{longtable}}

\end{proof}

\clearpage{}

\end{document}